\newtheorem{defin}{Definition}[section]
\newtheorem{thm}[defin]{Theorem}
\newtheorem{example}[defin]{Example}
\newtheorem{cor}[defin]{Corollary}
\newtheorem{lemma}[defin]{Lemma}
\newtheorem{remark}[defin]{Remark}
\newtheorem{prop}[defin]{Proposition}
\def\bC{\mathbb{C}}
\def\bR{\mathbb{R}}
\def\cB{\mathcal{B}}
\def\B{\mathcal{B}}
\def\cH{\mathcal{H}}
\def\hil{\mathcal{H}}
\def\bN{\mathbb{N}}
\def\N{\mathbb{N}}
\def\R{\mathbb{R}}
\def\cA{\mathcal{A}}
\def\A{\mathcal{A}}
\def\ffi{\varphi}
\def\cE{\mathcal{E}}
\def\eps{\varepsilon}
\def\bT{\mathbb{T}}
\def\bZ{\mathbb{Z}}
\def\bz{\left(}
\def\jz{\right)}
\def\supp{\mathrm{supp}\,}
\def\<{\langle}
\def\>{\rangle}
\def\ds{\mbox{ }\mbox{ }}
\def\ki{\textit}
\def\vfi{\varphi}
\def\kii{\textit}
\def\kiii{}
\def\ep{\varepsilon}
\def\psin{\psi^\circ}
\newcommand{\norm}[1]{\|#1\|}
\newcommand{\D}[1]{\hat{#1}}
\newcommand{\derleft}[1]{\partial^{-} #1}
\newcommand{\derright}[1]{\partial^{+} #1}
\newcommand{\hli}[3]{\underline h_{\scriptscriptstyle{G}}(#3\,|\,#1\,\|\,#2)}
\newcommand{\hls}[3]{\overline h_{\scriptscriptstyle{G}}(#3\,|\,#1\,\|\,#2)}
\newcommand{\hl}[3]{h_{\scriptscriptstyle{G}}(#3\,|\,#1\,\|\,#2)}
\newcommand{\cli}[2]{\underline c_{\scriptscriptstyle{G}}(#1 , #2)}
\newcommand{\cls}[2]{\overline c_{\scriptscriptstyle{G}}(#1, #2)}
\newcommand{\cl}[2]{c_{\scriptscriptstyle{G}}(#1,#2)}
\newcommand{\stli}[2]{\underline s_{\scriptscriptstyle{G}}(#1\,\|\,#2)}
\newcommand{\stls}[2]{\overline s_{\scriptscriptstyle{G}}(#1\,\|\,#2)}
\newcommand{\stl}[2]{s_{\scriptscriptstyle{G}}(#1\,\|\,#2)}
\newcommand{\chdist}[2]{C(#1,#2)}
\newcommand{\chmdist}[2]{C_M(#1,#2)}
\newcommand{\hdist}[3]{H(#3\,|\,#1\,\|\,#2)}
\newcommand{\hmdist}[3]{H_M(#3\,|\,#1\,\|\,#2)}
\newcommand{\sr}[2]{S\bz #1\,\|\, #2\jz}
\newcommand{\msr}[2]{S_M\bz #1\,\|\, #2\jz}
\newcommand{\pr}[1]{|#1\rangle\langle #1|}
\DeclareMathOperator{\Tr}{Tr}
\DeclareMathOperator{\Ad}{Ad}
\begin{document}
\allowdisplaybreaks

\ \vskip 1cm
\centerline{\LARGE {\bf Quantum hypothesis testing}}
\centerline{\LARGE {\bf with group symmetry}}
\bigskip
\bigskip
\centerline{\Large
Fumio Hiai,\footnote{E-mail: hiai@math.is.tohoku.ac.jp}
Mil\'an Mosonyi\footnote{E-mail: milan.mosonyi@gmail.com}
and Masahito Hayashi\footnote{E-mail: hayashi@math.is.tohoku.ac.jp}}
	
\medskip
\begin{center}
$^{1,\,3}$\,Graduate School of Information Sciences, Tohoku University \\
Aoba-ku, Sendai 980-8579, Japan
\end{center}
\begin{center}
$^2$\,Mathematical Institute, Budapest University of Technology and Economics \\
Egry J\'ozsef u~1., Budapest, 1111 Hungary
\end{center}

\medskip
\begin{abstract}
The asymptotic discrimination problem of two quantum states is studied in the setting
where measurements are required to be invariant under some symmetry group of the system.
We consider various asymptotic error exponents in connection with the problems of the
Chernoff bound, the Hoeffding bound and Stein's lemma, and derive bounds on these
quantities in terms of their corresponding statistical distance measures. A special
emphasis is put on the comparison of the performances of group-invariant and unrestricted
measurements.
 
\bigskip\noindent
{\it AMS classification:}
62F03, 62F05, 94A15, 94A17

\medskip\noindent
{\it Keywords:}
quantum hypothesis testing, group symmetry, asymptotic error exponent, Chernoff bound,
Hoeffding bound, Stein's lemma, relative entropy, R\'enyi relative entropy, fidelity
\end{abstract}

\medskip
\section{Introduction}

In the asymptotic framework of (quantum) state discrimination, one is provided with
several copies of a quantum system and with the knowledge that the state of the system
is either $\rho_0$ (null hypothesis $H_0$) or $\rho_1$ (alternative hypothesis $H_1$).
One's aim is to decide, based on measurements on the copies, which one the true state is.
For simplicity, we will assume here that the Hilbert space $\hil$ of the system is finite
dimensional, and hence the states can be represented by density operators $\D{\rho}_k$
that satisfy $\rho_k(A)=\Tr\D{\rho}_kA$ for any observable $A\in\B(\hil)$ and $k=1,2$.
A measurement on $n$ copies is given by a binary positive operator valued measure (POVM)
$(T,I-T)$ with $T\in\B(\hil^{\otimes n}),\,0\le T\le I$, where $T$ corresponds to
accepting $\rho_0$ and $I-T$ to accepting $\rho_1$. An erroneous decision is made if
$H_0$ is accepted when it is false (error of the first kind) or the other way around
(error of the second kind). The probabilities of these events are given by 
$$
\beta_{0,n}(T):=\rho_0^{\otimes n}(I-T),\qquad
\beta_{1,n}(T):=\rho_1^{\otimes n}(T).
$$

The optimal asymptotic performance in a state discrimination problem can be defined in
various ways, depending on whether or not the two hypotheses are treated as of equal
importance. Usually, one is interested in the exponential decay rates of the above error
probabilities or combinations of them, in the $n\to\infty$ limit. The most studied
quantities are the following:
\begin{itemize}
\item[(i)] the optimal exponential decay rate of the sum of the two kinds of error
probabilities (Chernoff bound),
\item[(ii)] the optimal exponential decay rate of the error probabilities of the second
kind under the assumption that the error probabilities of the first kind decay with a
given exponential speed (Hoeffding bound),
\item[(iii)] the optimal exponential decay rate of the error probabilities of the second
kind under the assumption that the error probabilities of the first kind vanish
asymptotically (Stein's lemma).
\end{itemize}

The quantum problem of Stein's lemma was solved in \cite{HP,ON} (see also \cite{Ha0}),
where it was shown that the optimal error bound is equal to the relative entropy of the
two states, hence providing an operational interpretation of the relative entropy.
Recently, the solution of the quantum problem of the Chernoff bound \cite{Aud,NSz}
created a renewed interest in hypothesis testing problems. The techniques developed in
\cite{Aud,NSz} were also used in \cite{Hayashi,Nagaoka} to solve the quantum problem of
the Hoeffding bound, improving a weaker bound previously given in \cite{OH}. The optimal
error bounds in these cases are the Chernoff distance and the Hoeffding distance,
respectively. All these results deal with the case where one is allowed to perform any
collective measurement to discriminate i.i.d.\ (independent and identically distributed)
extensions of the states $\rho_0$ and $\rho_1$. Various extensions
to non-i.i.d.\ scenarios were also treated in the works
\cite{BSC,BDKSSCSz,HMO1,HMO,HP2,MHOF,M}. Note that the present formulation describes
only the simple hypothesis testing problem, i.e., when both the null and the alternative
hypotheses are a single state of the system. Some results in the case where one of the
hypotheses is composite (i.e., a subset of the state space) were obtained e.g., in
\cite{BDKSSCSz1,BDKSSCSz,BP}.

The purpose of the present paper is to treat the optimal error exponents (i), (ii), and
(iii) in the case where the states to discriminate are still i.i.d.\ extensions of the
two simple hypotheses but measurements are restricted to those invariant under the action
of some symmetry group of the system. As symmetries and dynamics are described in the
same way in the algebraic formalism, this setting also contains the case where one is
only able to measure functions of the energy. Indeed, the group in this case is the
dynamical group generated by the Hamiltonian of the system, and invariant measurements
are exactly those that commute with the Hamilton operator. Hypothesis testing with
group-invariant measurements has applications to the entanglement testing problem, as it
was shown in \cite{H-ent}.

The structure of the paper is as follows. In Section \ref{sec:formulations} we give a
detailed formulation of the problem. As it was shown in \cite{HMO}, the key to solve the
state discrimination problems is to determine the asymptotic R\'enyi relative entropies.
This is carried out for the present scenario in Section \ref{sec:distances}, and the
results are used in Section \ref{sec:error bounds} to give bounds on the various error
exponents. In particular, we provide a complete solution to the problem of Stein's lemma.
In Section \ref{sec:invariant} we analyze the case where the alternative hypothesis is
invariant under the symmetry group and in Section \ref{sec:examples} we show some
examples to compare the performances of restricted and unrestricted measurements. 

\section{Formulation of the problem}\label{sec:formulations}
\setcounter{equation}{0}

Let $\hil$ be a finite-dimensional Hilbert space with $d:=\dim\hil$  and let $\Tr$ be the
usual trace on $\B(\hil)$. Let $u:\,G\to\B(\hil)$ be a unitary representation $u$ of a
group $G$ on $\cH$. Since $G$ can be replaced without loss of generality by the closure of
$\{u_g:g\in G\}$ in the unitary group of $\B(\hil)$, we may and do assume that $G$ is a
compact group.  For each $n\in\bN$ consider the $n$-fold tensor product representation
$u^{\otimes n}:g\mapsto u_g^{\otimes n}\in \B(\hil)^{\otimes n},\,g\in G$, and define a
subalgebra $\cA_n$ of $\B(\hil)^{\otimes n}=\cB(\cH^{\otimes n})$ as the commutant of
$u_g^{\otimes n}$, $g\in G$, i.e.,
$$
\cA_n:=\{A\in \B(\hil)^{\otimes n}:Au_g^{\otimes n}=u_g^{\otimes n}A,\,g\in G\}.
$$
That is, $\cA_n$ is the fixed point subalgebra $(\B(\hil)^{\otimes n})^G$ of
$\B(\hil)^{\otimes n}$ under the action
$\mathrm{Ad}\,u_g^{\otimes n}:=u_g^{\otimes n}\cdot u_g^{*\otimes n}$,
$g\in G$. Then $\cA_1\subset\cA_2\subset\cdots$ by natural inclusions. Note that
$\cA_n\otimes\cA_m\subset\cA_{n+m}$ for any $n,m\in\N$. In particular,
$\cA_1^{\otimes n}\subset \cA_n$, $n\in\N$.

Let $E_{\cA_n}$ be the conditional expectation from $\B(\hil)^{\otimes n}$ onto $\cA_n$
with respect to the trace $\Tr$. Note that $E_{\cA_n}$ can be written in the integral form
\begin{equation}\label{F-2.1}
E_{\cA_n}(X)=\int_Gu_g^{\otimes n}Xu_g^{*\otimes n}\,dg,\qquad X\in\B(\hil)^{\otimes n},
\end{equation}
where $dg$ is the Haar probability measure on $G$. Let $\widehat G$ denote the
representation ring consisting of all unitary equivalence classes of irreducible
representations of $G$. For each $n\in\bN$ the $n$-fold tensor product representation
$u^{\otimes n}$ is decomposed into irreducible components as
$$
u^{\otimes n}=m_1^{(n)}u_1^{(n)}\oplus m_2^{(n)}u_2^{(n)}
\oplus\dots\oplus m_{k_n}^{(n)}u_{k_n}^{(n)},
$$
where $u_i^{(n)}\in\widehat G$, $1\le i\le k_n$, are contained in $u^{\otimes n}$ with
multiplicities $m_i^{(n)}$. For $1\le i\le k_n$ let $d_i^{(n)}$ be the dimension of
$u_i^{(n)}$ so that we have $\sum_{i=1}^{k_n}m_i^{(n)}d_i^{(n)}=d^n$ and we can identify
$\A_n$ with
\begin{equation}\label{F-2.2}
\cA_n=\bigoplus_{i=1}^{k_n}\Bigl(M_{m_i^{(n)}}\otimes I_{d_i^{(n)}}\Bigr),
\end{equation}
where $M_k:=\B(\bC^k),\,k\in\N$.
Then the conditional expectation $E_{\cA_n}:\,\B(\hil)^{\otimes n}\to\cA_n$ given in
\eqref{F-2.1} is rewritten as
\begin{equation}\label{F-2.3}
E_{\cA_n}(X)=\sum_{i=1}^{k_n}E_i^{(n)}(P_i^{(n)}XP_i^{(n)}),
\qquad X\in\B(\hil)^{\otimes n},
\end{equation}
where $P_i^{(n)}$ is the orthogonal projection onto the subspace corresponding to
$m_i^{(n)}u_i^{(n)}$ in the decomposition \eqref{F-2.2}, i.e., $P_i^{(n)}$ is the identity
$I_{m_i^{(n)}}\otimes I_{d_i^{(n)}}$ of $M_{m_i^{(n)}}\otimes I_{d_i^{(n)}}$, and
$E_i^{(n)}$ is the partial trace or the conditional expectation from
$M_{m_i^{(n)}}\otimes M_{d_i^{(n)}}$ onto $M_{m_i^{(n)}}\otimes I_{d_i^{(n)}}$ with respect
to the trace. As is well known (see \cite{Oh} for a detailed proof), the
representation ring of any compact group has polynomial growth so that we have
\begin{equation}\label{F-2.4}
\lim_{n\to\infty}{1\over n}\log\Biggl(\sum_{i=1}^{k_n}d_i^{(n)}\Biggr)=0.
\end{equation}

Consider now the hypothesis testing problem with null hypothesis $\rho_0$ and alternative
hypothesis $\rho_1$, as described in the Introduction. We will be interested in the
quantities
\begin{align}
\cli{\rho_0}{\rho_1}&:=\inf_{\{T_n\}}\biggl\{\liminf_{n\to\infty}
{1\over n}\log\bz\beta_{0,n}(T_n)+\beta_{1,n}(T_n)\jz\biggr\},\label{chernoff1}\\
\cls{\rho_0}{\rho_1}&:=\inf_{\{T_n\}}\biggl\{\limsup_{n\to\infty}
{1\over n}\log\bz\beta_{0,n}(T_n)+\beta_{1,n}(T_n)\jz\biggr\},\label{chernoff2}\\
\cl{\rho_0}{\rho_1}&:=\inf_{\{T_n\}}\biggl\{\lim_{n\to\infty}
{1\over n}\log\bz\beta_{0,n}(T_n)+\beta_{1,n}(T_n)\jz\biggr\},\label{chernoff3}
\end{align}
corresponding to the problem of the \ki{Chernoff bound}, 
\begin{align}
\hli{\rho_0}{\rho_1}{r}&:=\inf_{\{T_n\}}\biggl\{\liminf_{n\to\infty}
{1\over n}\log\beta_{1,n}(T_n)\,\bigg|\,
\limsup_{n\to\infty}{1\over n}\log\beta_{0,n}(T_n)<-r\biggr\},\quad r\ge0,
\label{hoeffding1}\\
\hls{\rho_0}{\rho_1}{r}
&:=\inf_{\{T_n\}}\biggl\{\limsup_{n\to\infty}
{1\over n}\log\beta_{1,n}(T_n)\,\bigg|\,
\limsup_{n\to\infty}{1\over n}\log\beta_{0,n}(T_n)<-r\biggr\},\quad r\ge0,
\label{hoeffding2}\\
\hl{\rho_0}{\rho_1}{r}
&:=\inf_{\{T_n\}}\biggl\{\lim_{n\to\infty}
{1\over n}\log\beta_{1,n}(T_n)\,\bigg|\,
\limsup_{n\to\infty}{1\over n}\log\beta_{0,n}(T_n)<-r\biggr\},\quad r\ge0,
\label{hoeffding3}
\end{align}
corresponding to the problem of the \ki{Hoeffding bound}, and
\begin{align}
\stli{\rho_0}{\rho_1}&:=\inf_{\{T_n\}}\biggl\{\liminf_{n\to\infty}
{1\over n}\log\beta_{1,n}(T_n)\,\bigg|\,
\lim_{n\to\infty}{1\over n}\log\beta_{0,n}(T_n)=0\biggr\}, \label{stein1}\\
\stls{\rho_0}{\rho_1}
&:=\inf_{\{T_n\}}\biggl\{\limsup_{n\to\infty}
{1\over n}\log\beta_{1,n}(T_n)\,\bigg|\,
\lim_{n\to\infty}{1\over n}\log\beta_{0,n}(T_n)=0\biggr\}, \label{stein2}\\
\stl{\rho_0}{\rho_1}
&:=\inf_{\{T_n\}}\biggl\{\lim_{n\to\infty}
{1\over n}\log\beta_{1,n}(T_n)\,\bigg|\,
\lim_{n\to\infty}{1\over n}\log\beta_{0,n}(T_n)=0\biggr\},\label{stein3}
\end{align}
corresponding to the problem of \ki{Stein's lemma}. Here, the infima are taken over
sequences of $G$-invariant measurements
$\{T_n\}_{n\in\N}$ with $T_n\in\A_n,\,0\le T_n\le I$. Note that posing $G$-invariance
on the measurements to distinguish $\rho_0^{\otimes n}$ from $\rho_1^{\otimes n}$ is
equivalent to considering the discrimination of the $G$-invariant states
\begin{equation}\label{F-2.5}
\rho_{0,n}:=\rho_0^{\otimes n}\circ E_{\A_n},\ds\ds\ds\rho_{1,n}
:=\rho_1^{\otimes n}\circ E_{\A_n}
\end{equation}
with unrestricted measurements, as we have
\begin{equation*}
\beta_{0,n}(E_{\A_n}(T_n))=\rho_0^{\otimes n}(I-E_{\A_n}(T_n))
=\rho_0^{\otimes n}(E_{\A_n}(I-T_n))=\rho_{0,n}(I-T_n)
\end{equation*}
and similarly for $\beta_{1,n}$. Hence, the asymptotic problem with $G$-invariant
measurements is equivalent to the asymptotic state discrimination problem of the two
sequences of $G$-invariant states $\{\rho_{0,n}\}_{n\in\bN}$ and
$\{\rho_{1,n}\}_{n\in\bN}$. Note also that the families $\{\rho_{k,n}\}_{n\in\N},\,k=0,1$
are compatible in the sense that $\rho_{k,m}|_{\B(\hil)^{\otimes n}}=\rho_{k,n},\,m\ge n$.
Therefore, they extend uniquely to states $\rho_{k,\infty}$ on the infinite spin chain
algebra $\B(\hil)^{\otimes\infty}$ such that $\rho_{k,n}$ is the $n$-site restriction of
$\rho_{k,\infty}$. Hence, the above hypothesis testing problem can also be considered as
discriminating the global states $\rho_{0,\infty}$ and  $\rho_{1,\infty}$ with local
measurements on an increasing number of sites. Obviously, the unrestricted
i.i.d.~discrimination problem corresponds to $G=\{e\}$ being the trivial group. In this
case we will omit the subscript $G$ from the notations for the error exponents
\eqref{chernoff1}--\eqref{stein3}.

\section{Asymptotic distance measures}\label{sec:distances}
\setcounter{equation}{0}

Let $u$ be a unitary representation of a compact group $G$ on $\hil$ as given in
Section 2, and let $\rho_0$ and $\rho_1$ be two states on $\B(\hil)$ with the density
matrices $\hat\rho_0$ and $\hat\rho_1$. We consider the sequences of states
$\{\rho_{0,n}\}_{n\in\N}$ and $\{\rho_{1,n}\}_{n\in\N}$ as defined in \eqref{F-2.5}.
Note that the densities $\hat\rho_{k,n}$ of $\rho_{k,n}$ with respect to $\Tr$ are given as
$\hat\rho_{k,n}=E_{\cA_n}(\hat\rho_k^{\otimes n})$, $k=0,1$. Define
$$
\psi_n(s):=\log\Tr\hat\rho_{0,n}^s\hat\rho_{1,n}^{1-s},\qquad s\in\bR,\ n\in\bN,
$$
where $\hat\rho_{0,n}^s$ and $\hat\rho_{1,n}^{1-s}$ are defined for all $s\in\bR$ with
convention $0^s=0$ for all $s\in\bR$. In particular, we write
$\supp\rho_{1,n}:=\D{\rho}_{1,n}^0$ for the support projection of $\D{\rho}_{1,n}$.
Also, we define the $\psi$-function in the unrestricted setting as
$$
\psin(s):=\log\Tr\hat\rho_0^s\hat\rho_1^{1-s},\qquad s\in\bR,\ n\in\bN.
$$
Furthermore, let
\begin{equation}\label{F-3.1}
\psi(s):=\lim_{n\to\infty}\frac{1}{n}\psi_n(s)
\end{equation}
whenever the limit exists. Note that $\psi_n$ is finite and convex on $\bR$ as long as
$\rho_{0,n}$ and $\rho_{1,n}$ (more precisely, their supports) are not orthogonal
(otherwise, $\psi_n$ is identically $-\infty$). Hence $\psi$ is convex on any interval
where it exists with values in $[-\infty,+\infty)$. Also, note that if $\rho_{0,n}$ and
$\rho_{1,n}$ are orthogonal, then the same holds for any $m\ge n$.

\begin{lemma}\label{L-3.1}
(1) The sequence $\psi_n(s)$, $n\in\bN$, is subadditive for any $s\in[0,1]$. Hence the
limit \eqref{F-3.1} exists and
\begin{equation*}
\psi(s)=\lim_{n\to\infty}\frac{1}{n}\psi_n(s)
=\inf_{n\ge 1}\frac{1}{n}\psi_n(s),\ds\ds s\in[0,1].
\end{equation*}
Furthermore, if $\rho_0$ and $\rho_1$ are not orthogonal, then $\psi(s)$ is finite
with $\psin(s)\le\psi(s)\le\psi_1(s)$ for all $s\in[0,1]$.

(2) Assume that $\supp\rho_1$ is $G$-invariant (i.e., $\D{\rho}_1^0\in\cA_1$), or that
$\rho_{1,n}$ is faithful for all $n$. Then the sequence $\psi_n(s)$, $n\in\bN$, is
superadditive for any $s\in[1,2]$. Hence the limit \eqref{F-3.1} exists and
\begin{equation*}
\psi(s)=\lim_{n\to\infty}\frac{1}{n}\psi_n(s)
=\sup_{n\ge 1}\frac{1}{n}\psi_n(s),\ds\ds s\in [1,2].
\end{equation*}
Furthermore, if $\supp\rho_1$ is $G$-invariant and $\rho_0$ and $\rho_1$ are not
orthogonal, then $\psi(s)$ is finite with $\psin(s)\ge\psi(s)\ge\psi_1(s)$ for all
$s\in[1,2]$.
\end{lemma}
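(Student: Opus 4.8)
The plan is to reduce both parts to the data-processing behaviour of the functional $Q_s(\rho\|\sigma):=\Tr\rho^s\sigma^{1-s}$ under the trace-preserving conditional expectations $E_{\cA_n}$, and then to invoke Fekete's lemma. First I would record the factorisation identity
\[
E_{\cA_n\otimes\cA_m}\bigl(\hat\rho_{k,n+m}\bigr)=\hat\rho_{k,n}\otimes\hat\rho_{k,m},\qquad k=0,1,
\]
which follows from the inclusion $\cA_n\otimes\cA_m\subseteq\cA_{n+m}$: the tower property gives $E_{\cA_n\otimes\cA_m}\circ E_{\cA_{n+m}}=E_{\cA_n\otimes\cA_m}$, hence $E_{\cA_n\otimes\cA_m}(\hat\rho_{k,n+m})=E_{\cA_n\otimes\cA_m}(\hat\rho_k^{\otimes(n+m)})$, and the latter equals $E_{\cA_n}(\hat\rho_k^{\otimes n})\otimes E_{\cA_m}(\hat\rho_k^{\otimes m})$ because $E_{\cA_n\otimes\cA_m}=E_{\cA_n}\otimes E_{\cA_m}$ with respect to the trace. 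I would also use the integral form
\[
E_{\cA_n\otimes\cA_m}(X)=\int_{G\times G}\bigl(u_g^{\otimes n}\otimes u_h^{\otimes m}\bigr)X\bigl(u_g^{\otimes n}\otimes u_h^{\otimes m}\bigr)^*\,dg\,dh,
\]
exhibiting $E_{\cA_n\otimes\cA_m}$ as an average of unitary conjugations (the analogue of \eqref{F-2.1} for $E_{\cA_n}$).

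The core of the argument is that $Q_s$ is invariant under simultaneous unitary conjugation, so applying Jensen's inequality to the integral above turns the problem into the joint concavity/convexity properties of $F_s(\rho,\sigma):=\Tr\rho^s\sigma^{1-s}$. For $s\in[0,1]$, Lieb's concavity theorem makes $F_s$ jointly concave; taking $\rho=\hat\rho_{0,n+m}$, $\sigma=\hat\rho_{1,n+m}$, Jensen yields $Q_s(E_{\cA_n\otimes\cA_m}\rho\,\|\,E_{\cA_n\otimes\cA_m}\sigma)\ge Q_s(\rho\|\sigma)$, and after taking logarithms the factorisation identity converts this into $\psi_{n+m}(s)\le\psi_n(s)+\psi_m(s)$, i.e.\ subadditivity. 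For $s\in[1,2]$, Ando's convexity theorem makes $F_s$ jointly convex, the inequality reverses, and one obtains $\psi_{n+m}(s)\ge\psi_n(s)+\psi_m(s)$, i.e.\ superadditivity. In either case Fekete's lemma gives existence of the limit \eqref{F-3.1} together with the stated representation as $\inf_n$ (resp.\ $\sup_n$) of $\tfrac1n\psi_n(s)$.

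The two-sided bounds follow from the same inequalities applied to $E_{\cA_n}$ alone, using $E_{\cA_n}(\hat\rho_k^{\otimes n})=\hat\rho_{k,n}$ and $Q_s(\hat\rho_0^{\otimes n}\|\hat\rho_1^{\otimes n})=e^{n\psin(s)}$. For $s\in[0,1]$, concavity gives $\tfrac1n\psi_n(s)\ge\psin(s)$ for every $n$, hence $\psi(s)=\inf_n\tfrac1n\psi_n(s)\ge\psin(s)$, while the $n=1$ term of the infimum gives $\psi(s)\le\psi_1(s)$; for $s\in[1,2]$ both inequalities reverse. Finiteness is then obtained by sandwiching: non-orthogonality of $\rho_0$ and $\rho_1$ makes $\psin(s)$ finite, and $\psi_1(s)$ is finite under the running hypotheses, so $\psi(s)$ is trapped between two finite numbers.

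The hard part is the support issue in part (2): Ando's convexity of $F_s$ for $s\in[1,2]$ involves the negative power $\sigma^{1-s}$ and genuinely requires $\sigma$ to be invertible on the space where the argument is run. When $\rho_{1,n}$ is faithful for all $n$ this holds outright. When instead $\hat\rho_1^0=:q\in\cA_1$, I would pass to the corner $P\,\B(\hil^{\otimes N})\,P$ with $N=n+m$ and $P:=q^{\otimes N}$: since $q\in\cA_1$ one has $P\in\cA_n\otimes\cA_m\subseteq\cA_N$ and $(u_g^{\otimes n}\otimes u_h^{\otimes m})P(u_g^{\otimes n}\otimes u_h^{\otimes m})^*=P$, so the entire averaging takes place inside the corner; the module property then gives $\hat\rho_{1,N}=P\hat\rho_{1,N}P$, and faithfulness of the conditional expectation upgrades the bound $\hat\rho_1^{\otimes N}\ge\eps P$ to $\hat\rho_{1,N}\ge\eps P$, so $\sigma$ is invertible within the corner and $\sigma^{1-s}$ is legitimate there. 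On this corner Ando's theorem applies and the computation of the previous paragraphs goes through verbatim; verifying these invertibility and invariance statements, rather than the convexity inequalities themselves, is where the real work lies.
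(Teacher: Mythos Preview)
Your treatment of part~(1) is exactly the paper's proof: express $E_{\cA_n\otimes\cA_m}$ as a Haar average of unitary conjugations, apply Lieb's joint concavity of $(A,B)\mapsto\Tr A^sB^{1-s}$ via Jensen, and use Fekete's lemma; the sandwich $\psin(s)\le\psi(s)\le\psi_1(s)$ follows in the same way. The faithful case in part~(2) is also fine.

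The gap is in the $G$-invariant support case of part~(2). You correctly observe that when $q:=\hat\rho_1^0\in\cA_1$ the projection $P=q^{\otimes N}$ commutes with every $u_g^{\otimes n}\otimes u_h^{\otimes m}$ and that $\hat\rho_{1,N}$ (hence each of its conjugates) has support exactly $P$. But the step ``on this corner Ando's theorem applies and the computation goes through verbatim'' does not work as stated: the \emph{first} argument $\hat\rho_{0,N}$ is not supported on $P$, and compressing it to the corner changes the functional, since in general
\[
\Tr\hat\rho_{0,N}^{\,s}\,\hat\rho_{1,N}^{1-s}
=\Tr\bigl(P\hat\rho_{0,N}^{\,s}P\bigr)\hat\rho_{1,N}^{1-s}
\ne\Tr\bigl(P\hat\rho_{0,N}P\bigr)^{s}\hat\rho_{1,N}^{1-s}.
\]
Ando's theorem on $\B(P\hil^{\otimes N})$ controls only the right-hand expression. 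What is actually needed is joint convexity of $(A,B)\mapsto\Tr A^sB^{1-s}$ on all of $\B(\hil)_+\times\{B\ge0:\supp B=P\}$, with $A$ unrestricted. The paper isolates this as Lemma~\ref{L-A.1} and proves it by the approximation $A\leadsto A+\eps I$, $B\leadsto B+\eps^{-1}(I-P)$: both become strictly positive, Ando's original result applies, and $(B+\eps^{-1}(I-P))^{1-s}=B^{1-s}+\eps^{s-1}(I-P)\to B^{1-s}$ as $\eps\searrow0$ because $s>1$. Once this fixed-support convexity is available, your observation that all conjugates of $\hat\rho_{1,N}$ share the common support $P$ is exactly the hypothesis needed, and the Jensen/averaging argument then goes through. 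Your corner idea can be salvaged, but only via the two-space form $\Tr X^*A^sXB^{1-s}$ of Ando's result with $X$ the isometric inclusion $P\hil^{\otimes N}\hookrightarrow\hil^{\otimes N}$ (plus a continuity extension in $A$ from $\B(\hil)_{++}$ to $\B(\hil)_+$); it is not the naive restriction of both variables to the corner that you described.
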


\begin{proof}
(1) Let $0\le s\le1$. By Lieb's concavity theorem \cite{Li} (see also the Appendix A.1), the
function $(A,B)\mapsto\Tr A^sB^{1-s}$ is jointly concave on the set of
$(A,B)\in\cB(\cH)^{\otimes n+m}\times\cB(\cH)^{\otimes n+m}$, $A,B\ge0$. Note that the
conditional expectation $E_{\cA_n\otimes\cA_m}=E_{\cA_n}\otimes E_{\cA_m}$ is the average
of $\mathrm{Ad}(u_g^{\otimes n}\otimes u_{g'}^{\otimes m})$ by the measure $dg\otimes dg'$
on $G\times G$ (see \eqref{F-2.1}). Since
\begin{align*}
&\Tr\hat\rho_{0,n+m}^s\hat\rho_{1,n+m}^{1-s} \\
&\quad=\Tr\bigl((u_g^{\otimes n}\otimes u_{g'}^{\otimes m})\hat\rho_{0,n+m}
(u_g^{*\otimes n}\otimes u_{g'}^{*\otimes m})\bigr)^s
\bigl((u_g^{\otimes n}\otimes u_{g'}^{\otimes m})\hat\rho_{1,n+m}
(u_g^{*\otimes n}\otimes u_{g'}^{*\otimes m})\bigr)^{1-s}
\end{align*}
for all $g,g'\in G$, Lieb's concavity implies that
\begin{align*}
\Tr\hat\rho_{0,n+m}^s\hat\rho_{1,n+m}^{1-s}
&\le\Tr\bigl(E_{\cA_n\otimes\cA_m}(\hat\rho_{0,n+m})\bigr)^s
\bigl(E_{\cA_n\otimes\cA_m}(\hat\rho_{1,n+m})\bigr)^{1-s} \\
&=\Tr(\hat\rho_{0,n}\otimes\hat\rho_{0,m})^s
(\hat\rho_{1,n}\otimes\hat\rho_{1,m})^{1-s} \\
&=(\Tr\hat\rho_{0,n}^s\hat\rho_{1,n}^{1-s})(\Tr\hat\rho_{0,m}^s\hat\rho_{1.m}^{1-s})
\end{align*}
so that $\psi_{n+m}(s)\le\psi_n(s)+\psi_m(s)$. Furthermore, by the same argument as
above, we have
\begin{equation*}
\bigl(\Tr\hat\rho_0^s\hat\rho_1^{1-s}\bigr)^n
=\Tr\bigl(\hat\rho_0^{\otimes n}\bigr)^s\bigl(\hat\rho_1^{\otimes n}\bigr)^{1-s}
\le\Tr\bigl(E_{\cA_n}(\hat\rho_0^{\otimes n})\bigr)^s
\bigl(E_{\cA_n}(\hat\rho_1^{\otimes n})\bigr)^{1-s}
=\Tr\hat\rho_{0,n}^s\hat\rho_{1,n}^{1-s},
\end{equation*}
and hence
\begin{equation}\label{F-3.2}
\psin(s)\le\inf_{n\ge 1}\frac{1}{n}\psi_n(s)=\psi(s),\qquad s\in[0,1].
\end{equation}
In particular, $\psin(s)\le\psi(s)\le\psi_1(s)$ so that $\psi(s)$ is finite if $\rho_0$
and $\rho_1$ are not orthogonal.

(2) Let $1\le s\le2$ and assume that $\supp\rho_1$ is $G$-invariant. By functional
calculus,
$$
\bz u_g^{\otimes n}\D{\rho}_1^{\otimes n}u_g^{*\otimes n}\jz^0
=((u_g\D{\rho}_1u_{g}^*)^0)^{\otimes n}=(u_g\D{\rho}_1^0u_{g}^*)^{\otimes n}
=(\D{\rho}_1^0)^{\otimes n},
$$
which yields $\D{\rho}_{1,n}^0=(\D{\rho}_1^0)^{\otimes n}$ for all $n\in\bN$. By the same
argument, $((u_g^{\otimes n}\otimes u_{g'}^{\otimes m})\hat\rho_{1,n+m}
(u_g^{*\otimes n}\otimes u_{g'}^{*\otimes m}))^0=(\D{\rho}_1^0)^{\otimes(n+m)}$ for all
$g,g'\in G$ and $n,m\in\N$. In particular, the support of
$\rho_{1,n+m}\circ\Ad(u_g^{\otimes n}\otimes u_{g'}^{\otimes m})$ is the same for all
$g,g'\in G$. This holds trivially also if $\rho_{1,n}$ is faithful for all $n$. Now,
the proof is similar to the above by applying Lemma \ref{L-A.1} of the Appendix instead of
Lieb's theorem. The proof of the remaining part is also similar.
\end{proof}

\begin{cor}\label{C-3.2}
If $\mathrm{supp}\,\rho_0\le\mathrm{supp}\,\rho_1$ then $\psi$ is left-continuous at $1$
as
$$
\lim_{s\nearrow1}\psi(s)=\psi(1)=0.
$$
Similarly, $\mathrm{supp}\,\rho_0\ge\mathrm{supp}\,\rho_1$ implies the right continuity
of $\psi$ at $0$. If $\supp\rho_1$ is $G$-invariant and $\rho_0,\rho_1$  are not
orthogonal (in particular, if  $\rho_1$ is faithful), then $\psi$ is continuous at $1$.
\end{cor}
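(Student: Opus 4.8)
The plan is to treat the three assertions by separating the endpoint behaviour (at $1$ and at $0$) from the interior behaviour. The cornerstone of the first claim is the identity $\psi(1)=0$, which I would derive from a support computation. Since $\psi_n(1)=\log\Tr\hat\rho_{0,n}\hat\rho_{1,n}^0$ by definition, it suffices to prove $\supp\rho_{0,n}\le\supp\rho_{1,n}$ for every $n$: then $\hat\rho_{0,n}\hat\rho_{1,n}^0=\hat\rho_{0,n}$, so $\psi_n(1)=\log\Tr\hat\rho_{0,n}=0$, and Lemma \ref{L-3.1}(1) gives $\psi(1)=\lim_n n^{-1}\psi_n(1)=0$. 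To lift $\supp\rho_0\le\supp\rho_1$ to level $n$, observe that $\hat\rho_0^0\le\hat\rho_1^0$ implies $\hat\rho_0^{\otimes n}\le\lambda\,\hat\rho_1^{\otimes n}$ for some $\lambda>0$ (bound $\hat\rho_0^{\otimes n}$ above by $\|\hat\rho_0\|^n(\hat\rho_0^0)^{\otimes n}\le\|\hat\rho_0\|^n(\hat\rho_1^0)^{\otimes n}$ and use $(\hat\rho_1^0)^{\otimes n}\le\mu^{-n}\hat\rho_1^{\otimes n}$ with $\mu>0$ the least positive eigenvalue of $\hat\rho_1$); positivity of $E_{\cA_n}$ then yields $\hat\rho_{0,n}\le\lambda\hat\rho_{1,n}$ and hence $\supp\hat\rho_{0,n}\le\supp\hat\rho_{1,n}$. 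The same relation is visible from the integral form \eqref{F-2.1}, since $\supp\hat\rho_{k,n}=\bigvee_{g\in G}u_g^{\otimes n}(\hat\rho_k^0)^{\otimes n}u_g^{*\otimes n}$. Note also that $\supp\rho_0\le\supp\rho_1$ forces $\rho_0,\rho_1$ to be non-orthogonal, so $\psi$ is finite on $[0,1]$.

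For the left-continuity itself I would squeeze $\psi$ at $1$ between two quantities tending to $0$. Convexity of $\psi$ on $[0,1]$ gives the upper half: writing $s=(1-\theta)s_0+\theta$ for fixed $s_0\in(0,1)$ yields $\psi(s)\le(1-\theta)\psi(s_0)+\theta\psi(1)$, whose right-hand side tends to $\psi(1)=0$ as $\theta\to1^-$, so $\limsup_{s\nearrow1}\psi(s)\le0$. The lower half comes from \eqref{F-3.2}, i.e.\ $\psin(s)\le\psi(s)$ on $[0,1]$: by the spectral calculus the positive eigenvalues of $\hat\rho_1^{1-s}$ tend to $1$, so $\hat\rho_1^{1-s}\to\hat\rho_1^0$ as $s\nearrow1$, whence $\psin(s)=\log\Tr\hat\rho_0^s\hat\rho_1^{1-s}\to\log\Tr\hat\rho_0\hat\rho_1^0=0$, the last equality by $\hat\rho_0^0\le\hat\rho_1^0$. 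Therefore $\liminf_{s\nearrow1}\psi(s)\ge0$, and altogether $\lim_{s\nearrow1}\psi(s)=0=\psi(1)$.

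The right-continuity at $0$ follows by the symmetry $s\leftrightarrow1-s$ that interchanges $\rho_0$ and $\rho_1$: the $\psi$-function of the pair $(\rho_1,\rho_0)$ at $s$ equals $\psi(1-s)$, and its left-continuity at $1$---which requires exactly $\supp\rho_1\le\supp\rho_0$, that is the hypothesis $\supp\rho_0\ge\supp\rho_1$---translates into right-continuity of $\psi$ at $0$. For the last assertion, when $\supp\rho_1$ is $G$-invariant and $\rho_0,\rho_1$ are non-orthogonal, both parts of Lemma \ref{L-3.1} apply and make $\psi$ finite on $[0,1]$ and on $[1,2]$, hence on $[0,2]$; as a pointwise limit of the convex functions $n^{-1}\psi_n$ it is convex there, and a finite convex function is continuous at the interior point $1$.

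The delicate point is the endpoint continuity at $1$ and at $0$: convexity by itself is not enough, since a finite convex function may jump upward at the boundary of its domain. The real content is thus the two-sided squeeze---the cheap convexity bound from above paired with the bound $\psi\ge\psin$ from below---together with the support-lifting computation that pins down $\psi(1)=0$. In the $G$-invariant case, by contrast, $1$ is genuinely an interior point of the domain and continuity is automatic.
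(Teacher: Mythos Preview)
Your proof is correct and follows essentially the same route as the paper's: both pin down $\psi(1)=0$ via $\psi_n(1)=0$, then squeeze using $\psin(s)\le\psi(s)$ from below and convexity from above, and handle the last assertion by noting that $\psi$ is finite convex on $[0,2]$ so that $1$ is an interior point. The paper is simply terser---it records the squeeze in one line as $0=\psin(1)=\lim_{s\nearrow1}\psin(s)\le\lim_{s\nearrow1}\psi(s)\le\psi(1)=0$ and omits the explicit support-lifting argument you spell out.
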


\begin{proof}
Assume that $\mathrm{supp}\,\rho_0\le\mathrm{supp}\,\rho_1$. Then $\psi_n(1)=0$ for all
$n\in\bN$, and hence $\psi(1)=0$. By \eqref{F-3.2} and the convexity of $\psi$,
\begin{equation*}
0=\psin(1)=\lim_{s\nearrow 1}\psin(s)
\le\lim_{s\nearrow 1}\psi(s)\le\psi(1)=0,
\end{equation*}
and hence $\psi$ is left-continuous at $1$. The proof of the second assertions is similar.
Assume the conditions in the last assertion. Then $\psi$ is a finite-valued convex
function on $[0,2]$ by Lemma \ref{L-3.1}, so that the continuity at $1$ is obvious.
\end{proof}

The \ki{R\'enyi relative entropy} of order $\alpha\in \R\setminus{\{1\}}$ of
$\rho_{0,n}$ with respect to $\rho_{1,n}$ is defined as
\begin{equation*}
S_\alpha(\rho_{0,n}\,\|\,\rho_{1,n})
:=\frac{1}{\alpha-1}\log\Tr \D{\rho}_{0,n}^\alpha\D{\rho}_{1,n}^{1-\alpha}
=-\frac{1}{1-\alpha}\psi_n(\alpha).
\end{equation*}
By Lemma \ref{L-3.1}, the R\'enyi relative entropies with parameter between $0$ and $1$
are superadditive:
\begin{equation}\label{F-3.3}
S_\alpha(\rho_{0,n+m}\,\|\,\rho_{1,n+m})
\ge S_\alpha(\rho_{0,n}\,\|\,\rho_{1,n})+S_\alpha(\rho_{0,m}\,\|\,\rho_{1,m}),
\ds\ds n,m\in\N,\ \alpha\in [0,1),
\end{equation}
and the \ki{mean R\'enyi relative entropy} of order $\alpha$ exists for any
$\alpha\in[0,1)$:
\begin{equation}\label{F-3.4}
S_{\alpha,M}(\rho_0\,\|\,\rho_1)
:=\lim_{n\to\infty}\frac{1}{n}S_\alpha(\rho_{0,n}\,\|\,\rho_{1,n})
=\sup_{n\ge 1}\frac{1}{n}S_\alpha(\rho_{0,n}\,\|\,\rho_{1,n}).
\end{equation}
Similarly, if $\supp\rho_1$ is $G$-invariant or $\rho_{1,n}$ is faithful for all $n$,
then \eqref{F-3.3} and \eqref{F-3.4} hold for the R\'enyi relative entropies with
parameter $\alpha\in (1,2]$. One can easily see that 
\begin{equation*}
\lim_{\alpha\nearrow 1}S_\alpha(\rho_{0,n}\,\|\,\rho_{1,n})
=\sr{\rho_{0,n}}{\rho_{1,n}},
\end{equation*}
where $\sr{\rho_{0,n}}{\rho_{1,n}}$ is the \ki{relative entropy} of $\rho_{0,n}$
with respect to $\rho_{1,n}$, defined as
$$
\sr{\rho_{0,n}}{\rho_{1,n}}
:=\begin{cases}
\Tr\hat\rho_{0,n}(\log\hat\rho_{0,n}-\log\hat\rho_{1,n})
& \text {if $\mathrm{supp}\,\rho_{0,n}\le\mathrm{supp}\,\rho_{1,n}$}, \\
+\infty & \text{otherwise}.
\end{cases}
$$
Hence the relative entropy is also superadditive:
\begin{equation}\label{F-3.5}
S(\rho_{0,n+m}\,\|\,\rho_{1,n+m})\ge
\sr{\rho_{0,n}}{\rho_{1,n}}+S(\rho_{0,m}\,\|\,\rho_{1,m}),
\end{equation}
and the \ki{mean relative entropy} is given by 
\begin{equation*}
\msr{\rho_0}{\rho_1}:=\lim_{n\to\infty}\frac{1}{n}\sr{\rho_{0,n}}{\rho_{1,n}}
=\sup_{n\ge 1}\frac{1}{n}\sr{\rho_{0,n}}{\rho_{1,n}}.
\end{equation*}
Note that the superadditivity \eqref{F-3.5} can also be shown by the monotonicity of the
relative entropy.

\begin{remark}\label{R-3.3}\rm
If we choose the maximally mixed state for $\rho_1$ (i.e., $\hat\rho_1=d^{-1}I_d$), then
\eqref{F-3.3} gives the subadditivity
$$
S_\alpha(\rho_{0,n+m})\le S_\alpha(\rho_{0,n})+S_\alpha(\rho_{0,m}),
\qquad n,m\in\N,\ \alpha\in [0,1),
$$
where
$$
S_\alpha(\rho):={1\over1-\alpha}\log\Tr \hat\rho^\alpha
$$
is the \ki{R\'enyi entropy} of order $\alpha$. This is of some interest since, as is well
known \cite[Chap.~IX, \S6]{Re}, the R\'enyi entropy of order $\alpha$ is not subadditive in general except for the cases
$\alpha=0$ and $\alpha=1$ ($S_1$ denotes the von Neumann entropy).
\end{remark}

We define the \ki{Chernoff distance} of $\rho_{0,n}$ and $\rho_{1,n}$ as 
\begin{equation*}
\chdist{\rho_{0,n}}{\rho_{1,n}}:=-\min_{0\le s\le1}\psi_n(s),
\end{equation*}
and their \ki{Hoeffding distance} with parameter $r\ge 0$ as 
\begin{equation*}
H(r\,|\,\rho_{0,n}\,\|\,\rho_{1,n})
:=\sup_{0\le t<1}\frac{-tr-\psi_n(t)}{1-t}
=\sup_{0\le t<1}\left\{S_t(\rho_{0,n}\,\|\,\rho_{1,n})-\frac{tr}{1-t}\right\}.
\end{equation*}
The mean versions of the above quantities are defined as 
\begin{align}
\chmdist{\rho_0}{\rho_1}
&:=\lim_{n\to\infty}\frac{1}{n}\chdist{\rho_{0,n}}{\rho_{1,n}}, \nonumber\\
\hmdist{\rho_0}{\rho_1}{r}
&:=\lim_{n\to\infty}\frac{1}{n}\hdist{\rho_{0,n}}{\rho_{1,n}}{nr}, \label{F-3.6}
\end{align}
if the limits exist.

We also define the {\it Legendre-Fenchel transforms} (or the polar functions)
\begin{equation}\label{F-3.7}
\ffi_n(a):=\max_{0\le s\le 1}\{as-\psi_n(s)\}
\quad\mbox{and}\quad
\ffi(a):=\sup_{0\le s\le 1}\{as-\psi(s)\},\quad a\in\bR.
\end{equation}
Note that
\begin{equation*}
\chdist{\rho_{0,n}}{\rho_{1,n}}=\ffi_n(0).
\end{equation*}

\begin{lemma}\label{L-3.4}
Let $f_n$, $n\in\bN$, be a superadditive sequence of functions on an interval $I$. Let
$f(s):=\lim_n\frac{1}{n}f_n(s)=\sup_{n}\frac{1}{n}f_n(s)$, the existence of which is
guaranteed by superadditivity. Then
\begin{equation*}
\sup_{s\in I}f(s)=\lim_{n\to\infty}\frac{1}{n}\sup_{s\in I}f_n(s)
=\sup_n\frac{1}{n}\sup_{s\in I}f_n(s).
\end{equation*}
\end{lemma}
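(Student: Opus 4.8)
The plan is to introduce the abbreviation $g_n:=\sup_{s\in I}f_n(s)$ and to prove the two asserted equalities separately: first the ``static'' identity $\sup_{s\in I}f(s)=\sup_n\frac1n g_n$, and then the existence of the limit $\lim_n\frac1n g_n$ together with its equality to the same value. Throughout I would use the pointwise conclusion of Fekete's lemma already invoked in the statement, namely that for each fixed $s$ the limit $\lim_n\frac1n f_n(s)$ exists and equals $\sup_n\frac1n f_n(s)=f(s)$.

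The first identity I would obtain by merely interchanging two suprema. Since $f(s)=\sup_n\frac1n f_n(s)$ for every $s$,
\[
\sup_{s\in I}f(s)
=\sup_{s\in I}\sup_n\frac1n f_n(s)
=\sup_n\sup_{s\in I}\frac1n f_n(s)
=\sup_n\frac1n g_n,
\]
the middle step being the commutativity of iterated suprema (both sides equal the supremum over the product index set). This already matches the first and third quantities in the statement, with common value $L:=\sup_{s\in I}f(s)\in(-\infty,+\infty]$.

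It then remains to show that $\lim_n\frac1n g_n$ exists and equals $L$, which I would do by a sandwich. For the upper estimate, $\frac1n g_n\le\sup_m\frac1m g_m=L$ for every $n$, whence $\limsup_n\frac1n g_n\le L$. For the lower estimate, fix any $s\in I$; since $g_n\ge f_n(s)$ by definition of $g_n$, we have $\frac1n g_n\ge\frac1n f_n(s)$, and passing to the $\liminf$ together with $\frac1n f_n(s)\to f(s)$ gives $\liminf_n\frac1n g_n\ge f(s)$. As $s$ was arbitrary, $\liminf_n\frac1n g_n\ge\sup_{s\in I}f(s)=L$. The two estimates combine to $L\le\liminf_n\frac1n g_n\le\limsup_n\frac1n g_n\le L$, so the limit exists and equals $L$; note that this covers the case $L=+\infty$ with no modification.

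The one conceptual trap I would be careful to avoid is trying to prove that the sequence $g_n$ is itself superadditive and then quoting Fekete's lemma for it directly. This approach breaks down: taking suprema in $f_{n+m}(s)\ge f_n(s)+f_m(s)$ only yields $g_{n+m}\ge\sup_{s}\bigl(f_n(s)+f_m(s)\bigr)$, and the right-hand side is in general strictly below $g_n+g_m$ because $f_n$ and $f_m$ need not be maximized at a common point, so superadditivity of $g_n$ cannot be concluded. The argument above bypasses this by keeping the pointwise inequalities and only exchanging or sandwiching suprema; no superadditivity of $g_n$, and in fact no structure on $I$ beyond its being a nonempty index set, is ever used.
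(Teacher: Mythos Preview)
Your proof is correct and follows essentially the same route as the paper: both arguments combine the pointwise Fekete identity $f(s)=\sup_n\frac1n f_n(s)$ with the trivial bound $\sup_s f_n(s)\ge f_n(t)$ to sandwich the $\liminf$ and $\limsup$ of $\frac1n\sup_s f_n(s)$ between $\sup_s f(s)$ and itself. Your explicit use of the commutativity of iterated suprema to obtain the ``static'' identity in one line is a slight streamlining of what the paper does via two separate inequalities, but the substance is identical.
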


\begin{proof}
Obviously, 
\begin{equation*}
\frac{1}{n}\sup_{s\in I} f_n(s)\ge\frac{1}{n}f_n(t),
\qquad n\in\bN,\ t\in I,
\end{equation*}
and thus
\begin{equation*}
\liminf_{n\to\infty}\frac{1}{n}\sup_{s\in I}f_n(s)
\ge\lim_{n\to\infty}\frac{1}{n} f_n(t)=f(t),
\qquad t\in I,
\end{equation*}
which yields
\begin{equation*}
\liminf_{n\to\infty}\frac{1}{n}\sup_{s\in I}f_n(s)\ge\sup_{s\in I}f(s).
\end{equation*}
On the other hand, we have 
\begin{equation*}
\sup_{s\in I}f(s)\ge f(t)\ge\frac{1}{n}f_n(t),\qquad n\in\bN,\ t\in I,
\end{equation*}
and hence
\begin{equation*}
\sup_{s\in I}f(s)\ge \sup_{n\ge 1}\frac{1}{n}\sup_{s\in I}f_n(s),
\end{equation*}
implying the assertion. 
\end{proof}

\begin{prop}\label{P-3.5}
The sequence $\frac{1}{n}\ffi_n(na)$ converges for any $a\in\bR$, and
\begin{equation}\label{F-3.8}
\ffi(a)=\lim_{n\to\infty}\frac{1}{n}\ffi_n(na)=\sup_n\frac{1}{n}\ffi_n(na),
\qquad a\in\bR.
\end{equation}
Moreover, the mean Chernoff bound and the mean Hoeffding bound exist, and
\begin{align}
\chmdist{\rho_0}{\rho_1}&=\sup_{n}\frac{1}{n}\chdist{\rho_{0,n}}{\rho_{1,n}}
=\ffi(0)=-\inf_{0\le s\le 1}\psi(s), \label{F-3.9}\\
\hmdist{\rho_0}{\rho_1}{r}&=\sup_{n}\frac{1}{n}H(nr\,|\,\rho_{0,n}\,\|\,\rho_{1,n})
=\sup_{0\le s<1}\frac{-sr-\psi(s)}{1-s},\qquad r\ge0. \label{F-3.10}
\end{align}
\end{prop}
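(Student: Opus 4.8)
The plan is to derive all three assertions from Lemma~\ref{L-3.4} by recognizing each of $\ffi_n(na)$, $\chdist{\rho_{0,n}}{\rho_{1,n}}$ and $\hdist{\rho_{0,n}}{\rho_{1,n}}{nr}$ as the supremum over $s$ of a family of functions that is \emph{superadditive} in $n$. The only structural input is the subadditivity of $\psi_n$ on $[0,1]$ from Lemma~\ref{L-3.1}(1), which says $\psi(s)=\lim_n\frac1n\psi_n(s)=\inf_n\frac1n\psi_n(s)$ there; equivalently, $-\psi_n$ is superadditive and $-\psi(s)=\sup_n(-\frac1n\psi_n(s))$.

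First I would prove \eqref{F-3.8}. Fix $a\in\bR$ and set $g_n(s):=nas-\psi_n(s)$ for $s\in[0,1]$, so that $\ffi_n(na)=\sup_{0\le s\le1}g_n(s)$. The term $nas$ is additive in $n$ while $-\psi_n(s)$ is superadditive in $n$, so $g_n$ is a superadditive sequence on $I=[0,1]$; consequently $\frac1n g_n(s)\to as-\psi(s)=\sup_n\frac1n g_n(s)$, which is exactly the limit-equals-supremum hypothesis of Lemma~\ref{L-3.4}. That lemma then yields
\[\sup_{0\le s\le1}\{as-\psi(s)\}=\lim_{n\to\infty}\frac1n\sup_{0\le s\le1}g_n(s)=\sup_n\frac1n\sup_{0\le s\le1}g_n(s),\]
which is \eqref{F-3.8} after identifying the left-hand side with $\ffi(a)$ and the inner suprema with $\ffi_n(na)$.

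The Chernoff identity \eqref{F-3.9} is the special case $a=0$: since $\chdist{\rho_{0,n}}{\rho_{1,n}}=\ffi_n(0)$ and $\ffi(0)=-\inf_{0\le s\le1}\psi(s)$, the convergence and the supremum characterization follow at once, establishing existence of the mean Chernoff bound. For the Hoeffding identity \eqref{F-3.10} I would run the same argument with a different family: fix $r\ge0$ and put $G_n(s):=\frac{-snr-\psi_n(s)}{1-s}$ for $s\in[0,1)$, so that $\hdist{\rho_{0,n}}{\rho_{1,n}}{nr}=\sup_{0\le s<1}G_n(s)$. For each fixed $s$ the denominator $1-s$ is a positive constant, and the numerator $-snr-\psi_n(s)$ is superadditive in $n$ for the same reason as above; dividing by a positive constant preserves superadditivity, so $G_n$ is superadditive on $I=[0,1)$ with $\frac1n G_n(s)\to\frac{-sr-\psi(s)}{1-s}=\sup_n\frac1n G_n(s)$. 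Lemma~\ref{L-3.4} then gives
\[\sup_{0\le s<1}\frac{-sr-\psi(s)}{1-s}=\lim_{n\to\infty}\frac1n\sup_{0\le s<1}G_n(s)=\sup_n\frac1n\sup_{0\le s<1}G_n(s),\]
which is precisely \eqref{F-3.10}.

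The computations are routine once the families are in place; the only point requiring care is the scaling bookkeeping — evaluating $\ffi_n$ at $na$ rather than $a$, and using $nr$ in place of $r$ — since it is exactly this scaling that makes the linear term additive and thereby fits the superadditive template of Lemma~\ref{L-3.4}. The one genuinely degenerate situation is when $\rho_0$ and $\rho_1$ have orthogonal supports, so that $\psi_n\equiv-\infty$ and all quantities equal $+\infty$; the identities then hold trivially in the extended reals, and Lemma~\ref{L-3.4} applies verbatim because its proof manipulates only suprema.
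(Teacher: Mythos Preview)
Your proof is correct and follows essentially the same approach as the paper: you invoke the subadditivity of $\psi_n$ on $[0,1]$ from Lemma~\ref{L-3.1}, form the superadditive families $nas-\psi_n(s)$ and $\frac{-snr-\psi_n(s)}{1-s}$, and apply Lemma~\ref{L-3.4} to each, with \eqref{F-3.9} read off as the case $a=0$. The paper's argument is the same up to naming of the auxiliary functions; your added remarks on the scaling bookkeeping and the orthogonal-support degeneracy are sound but not needed for the core argument.
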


\begin{proof}
For $0\le s\le1$, the sequence $\psi_n(s)$, $n\in\bN$, is subadditive by
Lemma \ref{L-3.1}. This implies the superadditivity of the sequences
\begin{align*}
f_n(s)&:=nas-\psi_n(s),\qquad0\le s\le1, \\
g_n(s)&:=\frac{-snr-\psi_n(s)}{1-s},\qquad0\le s<1,
\end{align*}
and Lemma \ref{L-3.4} can be applied to obtain \eqref{F-3.8} and \eqref{F-3.10}.
When $a=0$, \eqref{F-3.8} means \eqref{F-3.9}.
\end{proof}

\begin{lemma}\label{hoeffding relentr}
Assume that $\supp\rho_{0,n}\le \supp\rho_{1,n}$ for all $n\in\N$. Then
\begin{equation*}
\hmdist{\rho_0}{\rho_1}{0}=\derleft{\psi}(1)=\msr{\rho_0}{\rho_1},
\end{equation*}
where $\derleft{\psi}(1)$ is 
the left derivative of $\psi$ at $1$.
\end{lemma}
\begin{proof}
The assumption on the supports yields $\psi_n(1)=0$ for all $n\in\N$ and by the convexity
of the $\psi_n$, the functions $s\mapsto \psi_n(t)/(t-1)$ are monotonically increasing,
and hence,
\begin{equation*}
\hdist{\rho_{0,n}}{\rho_{1,n}}{0}=\sup_{0\le t<1}\frac{\psi_n(t)}{t-1}
=\lim_{t\nearrow 1}\frac{\psi_n(t)}{t-1}=\derleft{\psi_n}(1)
=\sr{\rho_{0,n}}{\rho_{1,n}}.
\end{equation*}
Since this holds for all $n\in\N$, we have 
\begin{equation*}
\hmdist{\rho_0}{\rho_1}{0}=\msr{\rho_0}{\rho_1},
\end{equation*}
and $\hmdist{\rho_0}{\rho_1}{0}=\derleft{\psi}(1)$ follows again from $\psi(1)=0$ and the
convexity of $\psi$.
\end{proof}

\begin{remark}\label{R-3.7}\rm
Note that if $\supp\rho_{0,n}\le \supp\rho_{1,n}$ does not hold for some $n$ then it
does not hold for any $m>n$, either. Indeed, if
$\supp\rho_{0,m}\le\supp\rho_{1,m}$ then $\rho_{0,m}\le c\rho_{1,m}$ for some $c>0$,
which implies that $\rho_{0,n}\le c\rho_{1,n}$ for all $n<m$ since
$\rho_{k,m}|_{\cA_n}=\rho_{k,n}$, $k=0,1$.
\end{remark}

Let us also define the Legendre-Fenchel transforms
\begin{equation*}
\tilde\ffi_n(a):=\max_{1\le s\le 3/2}\{a(s-1)-\psi_n(s)\}
\quad\mbox{and}\quad
\tilde\ffi(a):=\sup_{1\le s\le 3/2}\{a(s-1)-\psi(s)\},\quad a\in\bR.
\end{equation*}

\begin{lemma}\label{lemma:strong converse}
If $\supp\rho_1$ is $G$-invariant and $\rho_0,\rho_1$ are not orthogonal (in
particular, if $\rho_1$ is faithful), then the sequence $\frac{1}{n}\tilde\vfi_n(na)$
converges for any $a\in\R$, and
\begin{equation*}
\lim_{n\to\infty}\frac{1}{n}\tilde\vfi_n(na)=\tilde\vfi(a),\qquad a\in\R.
\end{equation*}
\end{lemma}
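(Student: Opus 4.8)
The plan is to mirror the structure of Proposition \ref{P-3.5}, but with one essential twist: on the interval $[1,2]$ the sequence $\psi_n$ is now \emph{superadditive} (Lemma \ref{L-3.1}(2)) rather than subadditive, so that $s\mapsto na(s-1)-\psi_n(s)$ is subadditive in $n$ and Lemma \ref{L-3.4} does not apply. Instead I would exploit convexity to upgrade pointwise convergence to uniform convergence. First I would record what the hypotheses buy us: since $\supp\rho_1$ is $G$-invariant and $\rho_0,\rho_1$ are not orthogonal, Lemma \ref{L-3.1}(1) gives that $\psi$ is finite on $[0,1]$ with $\frac1n\psi_n\to\psi$, while Lemma \ref{L-3.1}(2) gives the same on $[1,2]$, together with $\psi(s)=\sup_n\frac1n\psi_n(s)$ there. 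Consequently $\psi$ is a finite convex function on the open interval $(0,2)$, and each $\frac1n\psi_n$ is a finite convex function converging to it pointwise on $(0,2)$.

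The easy half of the asserted identity then follows immediately: for $s\in[1,2]$ one has $\frac1n\psi_n(s)\le\psi(s)$ because $\psi=\sup_n\frac1n\psi_n$ there, whence $\frac1n\tilde\vfi_n(na)=\max_{1\le s\le3/2}\{a(s-1)-\frac1n\psi_n(s)\}\ge\max_{1\le s\le 3/2}\{a(s-1)-\psi(s)\}=\tilde\vfi(a)$ for every $n$, and so $\liminf_n\frac1n\tilde\vfi_n(na)\ge\tilde\vfi(a)$.

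For the reverse inequality, and with it convergence, I would invoke the standard fact that a sequence of finite convex functions on an open interval converging pointwise to a finite limit converges uniformly on every compact subinterval. Applying this with the open interval $(0,2)$ and the compact subinterval $[1,3/2]$ yields $\sup_{1\le s\le3/2}\bigl|\frac1n\psi_n(s)-\psi(s)\bigr|\to0$. Hence $s\mapsto a(s-1)-\frac1n\psi_n(s)$ converges uniformly on $[1,3/2]$ to $s\mapsto a(s-1)-\psi(s)$, and uniform convergence forces convergence of the maxima, $\bigl|\max_{1\le s\le3/2}\{a(s-1)-\frac1n\psi_n(s)\}-\tilde\vfi(a)\bigr|\le\sup_{1\le s\le3/2}\bigl|\frac1n\psi_n(s)-\psi(s)\bigr|\to0$. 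Recognizing the left-hand maximum as $\frac1n\tilde\vfi_n(na)$ gives $\lim_n\frac1n\tilde\vfi_n(na)=\tilde\vfi(a)$.

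The point that needs care — and that explains why the hypotheses are stated as they are — is the position of the endpoint $s=1$ relative to the open interval of finiteness. It is precisely the combination of Lemma \ref{L-3.1}(1) (finiteness to the left of $1$) and Lemma \ref{L-3.1}(2) (finiteness to the right of $1$) that places the compact interval $[1,3/2]$ strictly inside the open interval $(0,2)$ on which $\psi$ is finite; without finiteness to the left of $1$ the uniform-convergence theorem for convex functions could fail at the endpoint. Once it is secured that $[1,3/2]$ lies in the open interval $(0,2)$, the remaining estimates are routine. As an alternative, convergence of $\frac1n\tilde\vfi_n(na)$ by itself can be obtained without the convexity theorem: superadditivity of $\psi_n$ on $[1,2]$ makes $\tilde\vfi_n(na)$ subadditive in $n$, so Fekete's lemma gives $\lim_n\frac1n\tilde\vfi_n(na)=\inf_n\frac1n\tilde\vfi_n(na)$, and the convexity argument is then needed only to identify this limit with $\tilde\vfi(a)$.
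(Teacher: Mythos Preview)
Your proof is correct and follows essentially the same approach as the paper's: both arguments invoke Lemma \ref{L-3.1} to ensure that the convex functions $\frac{1}{n}\psi_n$ converge pointwise to a finite convex $\psi$ on all of $[0,2]$, then use the standard fact that such convergence is uniform on the compact subinterval $[1,3/2]\subset(0,2)$, which allows the max and the limit to be interchanged. Your write-up is more explicit about the role of the endpoint $s=1$ and why both parts of Lemma \ref{L-3.1} are needed, and you add the Fekete-lemma observation as an alternative route to mere convergence; the paper's proof is the same argument compressed into three lines.
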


\begin{proof}
Lemma \ref{L-3.1} implies that $\frac{1}{n}\psi_n(s)$ converges to $\psi(s)$ for every
$s\in[0,2]$ and $\psi$ is a finite-valued convex function on $[0,2]$. Hence the
convergence is uniform on $[1,3/2]$, an interval inside $(0,2)$. Thus,
\begin{align*}
\lim_{n\to\infty}\frac{1}{n}\tilde\vfi_n(na)
&=\lim_{n\to\infty}\max_{1\le s\le 3/2}
\biggl\{a(s-1)-{1\over n}\psi_n(s)\biggr\} \\
&=\max_{1\le s\le 3/2}\lim_{n\to\infty}
\biggl\{a(s-1)-{1\over n}\psi_n(s)\biggr\}=\vfi(a).\qedhere
\end{align*}
\end{proof}

We close this section with the following:

\begin{remark}\label{R-3.9}\rm
Note that $\rho_{1,n}$ is unchanged if the alternative hypothesis $\rho_1$ is replaced by
$\rho_1\circ\Ad u_g$ for any $g\in G$. Hence, by \eqref{F-3.2},
$$
\psin(s\,|\,\rho_0\,\|\,\rho_1\circ\Ad u_g)\le\psi(s),\qquad
s\in[0,1],\ g\in G,
$$
where the above left-hand side denotes $\psin(s)$ for the hypotheses $\rho_0$ and
$\rho_1\circ\Ad u_g$. From \eqref{F-3.9} and \eqref{F-3.10} one obtains
\begin{align}
\chmdist{\rho_0}{\rho_1}
&\le\inf_{g\in G}\chdist{\rho_0}{\rho_1\circ\Ad u_g}
\le\chdist{\rho_0}{\rho_1}, \label{chineq}\\
\hmdist{\rho_0}{\rho_1}{r}
&\le\inf_{g\in G} \hdist{\rho_0}{\rho_1\circ\Ad u_g}{r}
\le\hdist{\rho_0}{\rho_1}{r},\quad r\ge0, \label{hineq}
\end{align}
where
$$
C(\rho_0,\rho_1):=-\min_{0\le s\le1}\psin(s)\ds\ds\text{and}\ds\ds
H(r\,|\,\rho_0\,\|\,\rho_1):=\sup_{0\le s\le1}{-sr-\psin(s)\over1-s}
$$
are the Chernoff and the Hoeffding distances in the unrestricted setting. Also, by the monotonicity of the relative entropy,
\begin{equation}\label{steinineq}
\msr{\rho_0}{\rho_1}
\le\inf_{g\in G} \sr{\rho_0}{\rho_1\circ\Ad u_g}
\le\sr{\rho_0}{\rho_1}.
\end{equation}
Similar inequalities are valid for the error exponents \eqref{chernoff1}--\eqref{stein3}
as well. For example,
$$
\hls{\rho_0}{\rho_1}{r}
\ge\sup_{g\in G}\overline h(r\,|\,\rho_0\,\|\,\rho_1\circ\mathrm{Ad}\,u_g)
\ge\overline h(r\,|\,\rho_0\,\|\,\rho_1),\quad r\ge0.
$$
\end{remark}

\section{Asymptotic error probabilities}\label{sec:error bounds}
\setcounter{equation}{0}

For each $a\in\R$, we define the corresponding minimal asymmetric error probability 
for the discrimination between $\rho_{1,n}$ and $\rho_{1,n}$ as
$$
P_{\min}(a\,|\,\rho_{0,n}:\rho_{1,n}):=\min_{T_n\in\cA_n,\,0\le T_n\le I}
\{e^{-na}\beta_{0,n}(T_n)+\beta_{1,n}(T_n)\},
$$
where $\beta_{0,n}(T_n):=\rho_{0,n}(I-T_n)$ and $\beta_{1,n}(T_n):=\rho_{1,n}(T_n)$ are
the error probabilities of the first and the second kinds for a test $T_n$.
One can easily see that 
\begin{align}
P_{\min}(a\,|\,\rho_{0,n}:\rho_{1,n})
&=e^{-na}\beta_{0,n}(S_{n,a})+\beta_{1,n}(S_{n,a}) \nonumber\\
&={1+e^{-na}\over2}-{1\over2}\|e^{-na}\hat\rho_{0,n}-\hat\rho_{1,n}\|_1, \label{F-4.1}
\end{align}
where
$S_{n,a}:=\{e^{-na}\D{\rho}_{0,n}-\D{\rho}_{1,n}>0\}$ is the spectral projection of the
self-adjoint operator $e^{-na}\D{\rho}_{0,n}-\D{\rho}_{1,n}$ corresponding to the positive
part of its spectrum. $S_{n,a}$ is called a \ki{Neyman-Pearson test} or
\ki{Holevo-Helstr\"om test}. We define the minimal symmetric error probabilities as
$P_{\min}(\rho_{0,n}:\rho_{1,n}):=P_{\min}(0\,|\,\rho_{0,n}:\rho_{1,n})$. 
One can easily see that
\begin{align}
\liminf_{n\to\infty}{1\over n}\log P_{\min}(\rho_{0,n}:\rho_{1,n})
&=\cli{\rho_0}{\rho_1}, \label{F-4.2} \\
\limsup_{n\to\infty}{1\over n}\log P_{\min}(\rho_{0,n}:\rho_{1,n})
&=\cls{\rho_0}{\rho_1}. \label{F-4.3}
\end{align}
The results of \cite{Aud,NSz} on the Chernoff bound says that in the unrestricted case we
have
\begin{equation}\label{F-4.4}
\lim_{n\to\infty}{1\over n}\log P_{\min}(\rho_0^{\otimes n}:\rho_1^{\otimes n})
=-C(\rho_0,\rho_1)=\min_{0\le s\le1}\psin(s).
\end{equation}

We start with the following general observation:

\begin{lemma}\label{lemma:pmin}
For any $a\in\bR$,
\begin{align}
\limsup_{n\to\infty}{1\over n}\log P_{\min}(a\,|\,\rho_{0,n}:\rho_{1,n})
&\le-\ffi(a), \label{F-4.5} \\
\liminf_{n\to\infty}{1\over n}\log P_{\min}(a\,|\,\rho_{0,n}:\rho_{1,n})
&\ge\begin{cases}
2\psi(1/2) & \text{if $a\le0$}, \\
2\psi(1/2)-a & \text{if $a>0$}.
\end{cases} \label{F-4.6}
\end{align}
\end{lemma}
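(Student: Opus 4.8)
The plan is to prove the two estimates separately, working throughout with the positive operators $A_n:=e^{-na}\hat\rho_{0,n}$ and $B_n:=\hat\rho_{1,n}$, for which $\Tr A_n=e^{-na}$, $\Tr B_n=1$ and $\Tr A_n^sB_n^{1-s}=e^{-nas}e^{\psi_n(s)}$ for $s\in[0,1]$; by \eqref{F-4.1} we have $P_{\min}(a\,|\,\rho_{0,n}:\rho_{1,n})=\tfrac12\Tr[A_n+B_n-|A_n-B_n|]$. For the upper bound \eqref{F-4.5} I would invoke the trace inequality of \cite{Aud}, namely $\tfrac12\Tr[A+B-|A-B|]\le\Tr A^sB^{1-s}$ for all positive $A,B$ and $s\in[0,1]$. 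Applied to $A_n,B_n$ it gives $\tfrac1n\log P_{\min}(a\,|\,\rho_{0,n}:\rho_{1,n})\le -as+\tfrac1n\psi_n(s)$ for each $s\in[0,1]$; taking the infimum over $s$ rewrites the bound as $-\tfrac1n\ffi_n(na)$ via \eqref{F-3.7}, and Proposition \ref{P-3.5} yields $\limsup_n(-\tfrac1n\ffi_n(na))=-\ffi(a)$, which is \eqref{F-4.5}.

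For the lower bound \eqref{F-4.6} I would adapt the Nussbaum--Szko\l a argument \cite{NSz}. Writing $P:=S_{n,a}$ for the Holevo--Helstr\"om projection, one has $P_{\min}(a\,|\,\rho_{0,n}:\rho_{1,n})=\Tr[(I-P)A_n]+\Tr[PB_n]$. Splitting $\Tr\sqrt{A_n}\sqrt{B_n}$ through $I=P+(I-P)$ and applying Cauchy--Schwarz in the Hilbert--Schmidt inner product to each summand gives
\[
\Tr\sqrt{A_n}\sqrt{B_n}\le\sqrt{\Tr[PA_n]\,\Tr[PB_n]}+\sqrt{\Tr[(I-P)A_n]\,\Tr[(I-P)B_n]}.
\]
Bounding $\Tr[PA_n]\le\Tr A_n$ and $\Tr[(I-P)B_n]\le\Tr B_n$ by $M_n:=\max(1,e^{-na})$, and then $\sqrt x+\sqrt y\le\sqrt{2(x+y)}$ with $x=\Tr[PB_n]$, $y=\Tr[(I-P)A_n]$, I obtain
\[
P_{\min}(a\,|\,\rho_{0,n}:\rho_{1,n})\ge\frac{\bigl(\Tr\sqrt{A_n}\sqrt{B_n}\bigr)^2}{2M_n}=\frac{e^{-na}e^{2\psi_n(1/2)}}{2\max(1,e^{-na})}.
\]
Taking $\tfrac1n\log$ and letting $n\to\infty$, using $\tfrac1n\psi_n(1/2)\to\psi(1/2)$ from Lemma \ref{L-3.1}, produces $2\psi(1/2)$ when $a\le0$ (then $M_n=e^{-na}$ cancels the numerator factor) and $2\psi(1/2)-a$ when $a>0$ (then $M_n=1$), which is precisely \eqref{F-4.6}.

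The upper bound is mechanical once the trace inequality is in hand, so I expect the only real obstacle to be the lower bound---specifically, arranging the Cauchy--Schwarz splitting across the test projection and choosing which trace in each product to dominate by $M_n=\max(\Tr A_n,\Tr B_n)$ so that the $e^{-na}$ factors cancel exactly as needed to reproduce the case distinction $a\le0$ versus $a>0$. The inputs from earlier---Proposition \ref{P-3.5} and the convergence $\tfrac1n\psi_n(1/2)\to\psi(1/2)$---are immediate, and the orthogonal case $\rho_0\perp\rho_1$ needs no separate treatment since then $\psi(1/2)=-\infty$ and both bounds hold trivially.
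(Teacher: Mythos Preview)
Your argument is correct. The upper bound \eqref{F-4.5} is handled exactly as in the paper: both apply the Audenaert trace inequality $\tfrac12\Tr[A+B-|A-B|]\le\Tr A^sB^{1-s}$ to $A_n=e^{-na}\hat\rho_{0,n}$, $B_n=\hat\rho_{1,n}$ and pass to the limit.

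For the lower bound \eqref{F-4.6} your route is genuinely different. The paper invokes the inequality of \cite[Theorem~7]{ANSV},
\[
\Bigl(\tfrac12\|A-B\|_1\Bigr)^2+(\Tr A^{1/2}B^{1/2})^2\le\Bigl(\tfrac12\Tr(A+B)\Bigr)^2,
\]
from which one gets directly $P_{\min}\ge\frac{e^{-na}}{1+e^{-na}}\bigl(\Tr\hat\rho_{0,n}^{1/2}\hat\rho_{1,n}^{1/2}\bigr)^2$. You instead split $\Tr A_n^{1/2}B_n^{1/2}$ through the Neyman--Pearson projection $P=S_{n,a}$, apply Cauchy--Schwarz on each piece, and dominate the ``large'' factor in each product by $M_n=\max(1,e^{-na})$; this gives $P_{\min}\ge\frac{e^{-na}}{2M_n}e^{2\psi_n(1/2)}$. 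The two bounds differ only by harmless constants and yield the same limit. Your approach is more elementary---it needs nothing beyond Cauchy--Schwarz and the trivial bounds $\Tr PA_n\le\Tr A_n$, $\Tr(I-P)B_n\le\Tr B_n$---whereas the paper's approach is shorter once the ANSV inequality is taken as a black box and never refers to the optimal test explicitly.
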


\begin{proof}
Let $A$ and $B$ be positive semidefinite operators $A$ and $B$ on the same Hilbert space. 
By \cite[Theorem 1]{Aud}, 
\begin{equation}\label{F-4.7}
\frac{1}{2}\Tr(A+B)-\frac{1}{2}\norm{A-B}_1\le \Tr A^tB^{1-t},\ds\ds\ds\ds
t\in [0,1],
\end{equation}
and by \cite[Theorem 7]{ANSV},
\begin{equation}\label{F-4.8}
\bz\frac{1}{2}\|A-B\|_1\jz^2+(\Tr A^{1/2}B^{1/2})^2\le\bz\frac{1}{2}\Tr(A+B)\jz^2.
\end{equation}
Let $A:=e^{-na}\D{\rho}_{0,n}$ and 
$B:=\D{\rho}_{1,n}$. Then, \eqref{F-4.7} together with \eqref{F-4.1} yields
\eqref{F-4.5}. On the other hand, by \eqref{F-4.8} we have
\begin{equation*}
{1\over2}\|e^{-na}\hat\rho_{0,n}-\hat\rho_{1,n}\|_1
\le\sqrt{\biggl(\frac{1+e^{-na}}{2}\biggr)^2
-e^{-na}\bigl(\Tr\hat\rho_{0,n}^{1/2}\hat\rho_{1,n}^{1/2}\bigr)^2},
\end{equation*}
and hence, by \eqref{F-4.1},
\begin{align*}
P_{\min}(a\,|\,\rho_{0,n}:\rho_{1,n})
&\ge\frac{1+e^{-na}}{2}-\sqrt{\biggl(\frac{1+e^{-na}}{2}\biggr)^2
-e^{-na}\Tr\hat\rho_{0,n}^{1/2}\hat\rho_{1,n}^{1/2}} \\
&=\frac{e^{-na}\bigl(\Tr\hat\rho_{0,n}^{1/2}\hat\rho_{1,n}^{1/2}\bigr)^2}
{\frac{1+e^{-na}}{2}+\sqrt{\bigl(\frac{1+e^{-na}}{2}\bigr)^2
-e^{-na}\bigl(\Tr\hat\rho_{0,n}^{1/2}\hat\rho_{1,n}^{1/2}\bigr)^2}} \\
&\ge\frac{e^{-na}}{1+e^{-na}}
\bigl(\Tr\hat\rho_{0,n}^{1/2}\hat\rho_{1,n}^{1/2}\bigr)^2\,.
\end{align*}
Therefore,
$$
\liminf_{n\to\infty}\frac{1}{n}\log P_{\min}(a\,|\,\rho_{0,n}:\rho_{1,n})
\ge2\lim_{n\to\infty}\frac{1}{n}\log\Tr\hat\rho_{0,n}^{1/2}\hat\rho_{1,n}^{1/2}
+\lim_{n\to\infty}\frac{1}{n}\log\frac{e^{-na}}{1+e^{-na}},
$$
which yields \eqref{F-4.6}.
\end{proof}

Note that $-\psi(1/2)\le\vfi(0)=\chmdist{\rho_0}{\rho_1}$. By taking account of
\eqref{F-4.2} and \eqref{F-4.3}, Lemma \ref{lemma:pmin}
with the choice $a=0$ yields the following:

\begin{prop}\label{P-4.2}
\begin{equation*}
-2\chmdist{\rho_0}{\rho_1}\le \cli{\rho_0}{\rho_1}
\le\cls{\rho_0}{\rho_1}\le-\chmdist{\rho_0}{\rho_1}.
\end{equation*}
\end{prop}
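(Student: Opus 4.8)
The plan is to read everything off Lemma~\ref{lemma:pmin} specialized to $a=0$. By definition $P_{\min}(\rho_{0,n}:\rho_{1,n})=P_{\min}(0\,|\,\rho_{0,n}:\rho_{1,n})$, so the two outer inequalities of the proposition correspond exactly to the two estimates \eqref{F-4.5} and \eqref{F-4.6} of that lemma, while the middle inequality $\cli{\rho_0}{\rho_1}\le\cls{\rho_0}{\rho_1}$ needs no argument, being merely $\liminf\le\limsup$ applied to the sequence $\frac1n\log P_{\min}(\rho_{0,n}:\rho_{1,n})$.

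For the rightmost inequality I would take $a=0$ in \eqref{F-4.5} to get $\limsup_n\frac1n\log P_{\min}(\rho_{0,n}:\rho_{1,n})\le-\ffi(0)$. Identifying the left-hand side with $\cls{\rho_0}{\rho_1}$ through \eqref{F-4.3}, and using $\ffi(0)=\chmdist{\rho_0}{\rho_1}$ from \eqref{F-3.9}, yields $\cls{\rho_0}{\rho_1}\le-\chmdist{\rho_0}{\rho_1}$ at once.

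For the leftmost inequality I would take $a=0$ in \eqref{F-4.6}, which lands in the branch $a\le0$ and gives $\liminf_n\frac1n\log P_{\min}(\rho_{0,n}:\rho_{1,n})\ge2\psi(1/2)$. By \eqref{F-4.2} the left-hand side is $\cli{\rho_0}{\rho_1}$, so $\cli{\rho_0}{\rho_1}\ge2\psi(1/2)$. It then remains only to bound $\psi(1/2)$ below by $-\chmdist{\rho_0}{\rho_1}$, which is the trivial observation recorded just before the statement (as $-\psi(1/2)\le\ffi(0)$): since $1/2\in[0,1]$ and $\chmdist{\rho_0}{\rho_1}=-\inf_{0\le s\le1}\psi(s)$ by \eqref{F-3.9}, one has $\psi(1/2)\ge\inf_{0\le s\le1}\psi(s)=-\chmdist{\rho_0}{\rho_1}$. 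Combining these gives $\cli{\rho_0}{\rho_1}\ge-2\chmdist{\rho_0}{\rho_1}$.

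In this light the proposition is a pure corollary of Lemma~\ref{lemma:pmin}, and there is no genuine obstacle remaining: all the analytic content has already been spent on the two matrix inequalities \eqref{F-4.7} and \eqref{F-4.8} underlying that lemma. The only point worth flagging is the asymmetry between the factor $1$ on the upper side and the factor $2$ on the lower side. This reflects that the Audenaert-type bound \eqref{F-4.7} is sharp enough to recover the full exponent $\ffi(0)$, whereas the fidelity-type lower bound \eqref{F-4.8} controls $P_{\min}$ only through $\psi(1/2)$ and hence loses a factor of two; collapsing the two outer bounds to a single limit would require a sharper large-deviation estimate than these one-shot inequalities provide.
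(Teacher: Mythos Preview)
Your proposal is correct and follows exactly the paper's own argument: specialize Lemma~\ref{lemma:pmin} to $a=0$, invoke the identifications \eqref{F-4.2}--\eqref{F-4.3}, and use $-\psi(1/2)\le\ffi(0)=\chmdist{\rho_0}{\rho_1}$ from \eqref{F-3.9}. The paper states this in one sentence; your version simply unpacks it.
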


\begin{prop}\label{prop:Hoeffding}
For any $r\ge 0$,
\begin{equation*}
\hls{\rho_0}{\rho_1}{r}\le-\hmdist{\rho_0}{\rho_1}{r}.
\end{equation*}
\end{prop}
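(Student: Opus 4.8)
The plan is to exhibit an explicit near-optimal sequence of $G$-invariant tests, exploiting the fact that $\hls{\rho_0}{\rho_1}{r}$ is an infimum over admissible test sequences. The natural candidates are the Holevo--Helstr\"om tests $S_{n,a}=\{e^{-na}\D{\rho}_{0,n}-\D{\rho}_{1,n}>0\}\in\cA_n$, whose symmetric combination of errors is controlled by Lemma~\ref{lemma:pmin}. The free parameter $a$ will be tuned to trade off the two error exponents, so as to saturate the first-kind constraint $\limsup_n\frac1n\log\beta_{0,n}(T_n)<-r$ while driving the second-kind exponent down to $-\hmdist{\rho_0}{\rho_1}{r}$.

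First I would split $P_{\min}(a\,|\,\rho_{0,n}:\rho_{1,n})=e^{-na}\beta_{0,n}(S_{n,a})+\beta_{1,n}(S_{n,a})$ into its two nonnegative summands, each of which is therefore dominated by the sum. This gives $\beta_{1,n}(S_{n,a})\le P_{\min}(a\,|\,\rho_{0,n}:\rho_{1,n})$ and $\beta_{0,n}(S_{n,a})\le e^{na}P_{\min}(a\,|\,\rho_{0,n}:\rho_{1,n})$. Taking $\tfrac1n\log$ of both and invoking \eqref{F-4.5} of Lemma~\ref{lemma:pmin} then produces the two exponent bounds
\[
\limsup_{n\to\infty}\frac1n\log\beta_{1,n}(S_{n,a})\le-\ffi(a),\qquad
\limsup_{n\to\infty}\frac1n\log\beta_{0,n}(S_{n,a})\le a-\ffi(a).
\]

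The crux is then the choice of $a$ and the Legendre duality between $\ffi$ and the Hoeffding expression from \eqref{F-3.10}. For $\eps>0$ I would pick $s_0\in[0,1)$ with $\frac{-s_0r-\psi(s_0)}{1-s_0}>\hmdist{\rho_0}{\rho_1}{r}-\eps$ and set $a:=\frac{-r-\psi(s_0)}{1-s_0}-\delta$ for a small $\delta>0$. A short computation gives $a s_0-\psi(s_0)=\frac{-s_0r-\psi(s_0)}{1-s_0}-\delta s_0$. Since $\ffi(a)\ge a s_0-\psi(s_0)$, this yields both $\ffi(a)>\hmdist{\rho_0}{\rho_1}{r}-\eps-\delta$ and, because $a<\frac{-r-\psi(s_0)}{1-s_0}$ is exactly equivalent to $a s_0-\psi(s_0)>a+r$, also $\ffi(a)>a+r$, i.e.\ $a-\ffi(a)<-r$. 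Feeding these into the two exponent bounds above shows that the single sequence $\{S_{n,a}\}$ satisfies the first-kind constraint strictly while having second-kind exponent at most $-\hmdist{\rho_0}{\rho_1}{r}+\eps+\delta$; letting $\eps,\delta\downarrow0$ yields $\hls{\rho_0}{\rho_1}{r}\le-\hmdist{\rho_0}{\rho_1}{r}$.

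I expect the main obstacle to be the clean verification in the last paragraph that the admissible range of $a$ (those with $a-\ffi(a)<-r$), paired with the value $\ffi(a)$, reproduces exactly the supremum defining $\hmdist{\rho_0}{\rho_1}{r}$ in \eqref{F-3.10}. This is a Legendre--Fenchel identity and requires some care at the endpoint $s\to1^-$ and at points where $\psi$ is not differentiable, together with a separate (but easy) treatment of the degenerate case $\hmdist{\rho_0}{\rho_1}{r}=+\infty$, in which the same near-maximizer argument instead produces $\hls{\rho_0}{\rho_1}{r}=-\infty$.
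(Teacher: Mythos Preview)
Your approach is essentially the same as the paper's: both use the Holevo--Helstr\"om tests $S_{n,a}$, bound each error term by $P_{\min}(a\,|\,\rho_{0,n}:\rho_{1,n})$, and invoke \eqref{F-4.5} of Lemma~\ref{lemma:pmin} to obtain the two exponent bounds $\limsup_n\frac1n\log\beta_{0,n}(S_{n,a})\le a-\ffi(a)$ and $\limsup_n\frac1n\log\beta_{1,n}(S_{n,a})\le-\ffi(a)$. The only difference is cosmetic: the paper then writes $\hls{\rho_0}{\rho_1}{r}\le-\sup_{a:\,\ffi(a)-a>r}\ffi(a)$ and appeals to the Legendre--Fenchel identity of Lemma~\ref{L-A.2} to rewrite this as $-\sup_{0\le s<1}\frac{-sr-\psi(s)}{1-s}$, whereas you prove the needed direction of that identity inline via the explicit choice $a=\frac{-r-\psi(s_0)}{1-s_0}-\delta$.
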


\begin{proof}
Note that $\max\{e^{-na}\beta_{0,n}(S_{n,a}),\beta_{1,n}(S_{n,a})\}
\le P_{\min}(a\,|\,\rho_{0,n}:\rho_{1,n})$, and hence \eqref{F-4.5} yields
\begin{align*}
 \limsup_n\frac{1}{n}\log \beta_{0,n}(S_{n,a})&\le -\vfi(a)+a, \\
 \limsup_n\frac{1}{n}\log \beta_{1,n}(S_{n,a})&\le -\vfi(a).
\end{align*}
Therefore,
\begin{equation*}
\hls{\rho_0}{\rho_1}{r}\le -\sup_{a:\,\vfi(a)-a> r}\vfi(a)
=-\sup_{0\le s<1}\frac{-sr-\psi(s)}{1-s},
\end{equation*}
where the last identity was shown, e.g., in the proof of \cite[Theorem 4.8]{HMO}. A
detailed proof is given in Lemma \ref{L-A.2} of the Appendix. Finally, the right-hand
side of the above inequality is equal to  $-\hmdist{\rho_0}{\rho_1}{r}$ by
Proposition \ref{P-3.5}.
\end{proof}

The following theorem gives the solution of Stein's lemma in our setting:

\begin{thm}\label{thm:Stein}
Assume that $\supp\rho_{0,n}\le\supp\rho_{1,n}$ for all $n\in\N$. Then
$$
\stli{\rho_0}{\rho_1}=\stls{\rho_0}{\rho_1}
=\stl{\rho_0}{\rho_1}=-\msr{\rho_0}{\rho_1}.
$$
\end{thm}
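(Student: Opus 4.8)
The plan is to prove the three equalities by sandwiching, after first recording the elementary ordering
$$\stli{\rho_0}{\rho_1}\le\stls{\rho_0}{\rho_1}\le\stl{\rho_0}{\rho_1}.$$
This follows directly from the definitions \eqref{stein1}--\eqref{stein3}: all three quantities use the same constraint on the first-kind error, and for any such feasible sequence $\liminf_n\frac1n\log\beta_{1,n}\le\limsup_n\frac1n\log\beta_{1,n}$, so taking infima gives $\stli{\rho_0}{\rho_1}\le\stls{\rho_0}{\rho_1}$; moreover $\stl{\rho_0}{\rho_1}$ is the infimum of $\limsup_n\frac1n\log\beta_{1,n}$ over the smaller subfamily on which the limit exists, whence $\stl{\rho_0}{\rho_1}\ge\stls{\rho_0}{\rho_1}$. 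Thus it suffices to prove the converse bound $\stli{\rho_0}{\rho_1}\ge-\msr{\rho_0}{\rho_1}$ and the direct bound $\stl{\rho_0}{\rho_1}\le-\msr{\rho_0}{\rho_1}$; together with the ordering these force all three quantities to equal $-\msr{\rho_0}{\rho_1}$.

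For the converse I would invoke the monotonicity of the relative entropy under the binary measurement channel $X\mapsto(\rho_{k,n}(T_n),\rho_{k,n}(I-T_n))$. Writing $\alpha_n:=\beta_{0,n}(T_n)$ and $\beta_n:=\beta_{1,n}(T_n)$, monotonicity yields
$$\sr{\rho_{0,n}}{\rho_{1,n}}\ge(1-\alpha_n)\log\frac{1-\alpha_n}{\beta_n}+\alpha_n\log\frac{\alpha_n}{1-\beta_n}\ge-(1-\alpha_n)\log\beta_n-\log2,$$
where the last step uses the binary-entropy bound $(1-\alpha_n)\log(1-\alpha_n)+\alpha_n\log\alpha_n\ge-\log2$ together with $-\alpha_n\log(1-\beta_n)\ge0$ (the support hypothesis $\supp\rho_{0,n}\le\supp\rho_{1,n}$ guarantees $\beta_n>0$ once $\alpha_n<1$). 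Dividing by $n$ gives
$$-(1-\alpha_n)\,\frac1n\log\beta_n\le\frac1n\sr{\rho_{0,n}}{\rho_{1,n}}+\frac{\log2}{n}.$$
For any feasible sequence the first-kind error vanishes, so $\alpha_n\to0$, while $\frac1n\sr{\rho_{0,n}}{\rho_{1,n}}\to\msr{\rho_0}{\rho_1}$ by the definition of the mean relative entropy. Taking $\limsup$ gives $\limsup_n\bigl(-\frac1n\log\beta_n\bigr)\le\msr{\rho_0}{\rho_1}$, i.e. $\liminf_n\frac1n\log\beta_{1,n}(T_n)\ge-\msr{\rho_0}{\rho_1}$, and as this holds for every feasible sequence, $\stli{\rho_0}{\rho_1}\ge-\msr{\rho_0}{\rho_1}$.

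For the direct part I would argue by blocking. Fix $m\in\N$ and apply the ordinary i.i.d.\ quantum Stein lemma of \cite{HP,ON} to the pair $\rho_{0,m},\rho_{1,m}$ on $\B(\hil)^{\otimes m}$, obtaining tests with first-kind error tending to $0$ and second-kind exponent $-\sr{\rho_{0,m}}{\rho_{1,m}}$ per block. Since $\rho_{0,m}^{\otimes\ell}$ and $\rho_{1,m}^{\otimes\ell}$ have densities in $\cA_m^{\otimes\ell}\subset\cA_{\ell m}$, I may symmetrize these tests by $E_{\cA_m^{\otimes\ell}}$ without changing the errors, producing $G$-invariant block-product tests in $\cA_{\ell m}$; the compatibility $\rho_{k,\ell m}|_{\cA_m^{\otimes\ell}}=\rho_{k,m}^{\otimes\ell}$ then makes the errors computed against the true states $\rho_{k,\ell m}$ coincide with the block errors. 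Padding by the identity on the at most $m-1$ leftover sites extends the tests to all $n$ and leaves the errors unchanged along each residue class, giving a feasible sequence with $\beta_{0,n}\to0$ and $\lim_n\frac1n\log\beta_{1,n}=-\frac1m\sr{\rho_{0,m}}{\rho_{1,m}}$. Hence $\stl{\rho_0}{\rho_1}\le-\frac1m\sr{\rho_{0,m}}{\rho_{1,m}}$, and taking the supremum over $m$ together with $\msr{\rho_0}{\rho_1}=\sup_m\frac1m\sr{\rho_{0,m}}{\rho_{1,m}}$ yields $\stl{\rho_0}{\rho_1}\le-\msr{\rho_0}{\rho_1}$.

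The main obstacle is the direct part, on two points. The bookkeeping—symmetrizing into $\cA_{\ell m}$, using compatibility to pass from the block states $\rho_{k,m}^{\otimes\ell}$ to the true states $\rho_{k,\ell m}$, and padding—must be carried out carefully so that the constructed operators are genuine $G$-invariant tests and the limit of $\frac1n\log\beta_{1,n}$ truly exists. Secondly, to satisfy $\lim_n\frac1n\log\beta_{0,n}=0$ literally (rather than merely $\beta_{0,n}\to0$), one must ensure subexponential decay of the first-kind error; this can be arranged by replacing the Stein test $P_\ell$ with the diluted test $(1-\lambda_\ell)P_\ell$ for a subexponential $\lambda_\ell\to0$ (e.g.\ $\lambda_\ell=1/\ell$), which inflates $\beta_{0,n}$ to order $\lambda_\ell$ while only multiplying $\beta_{1,n}$ by $1-\lambda_\ell$ and hence not altering its exponent. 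The converse tacitly uses $\alpha_n\to0$, consistent with the formulation that the first-kind errors vanish asymptotically.
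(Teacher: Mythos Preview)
Your proof is correct. The converse part (via monotonicity of relative entropy under the two-outcome measurement) is essentially what the paper invokes by citing \cite[Proposition~5.2]{HMO}. The direct part, however, takes a genuinely different route. The paper channels the direct bound through the Hoeffding machinery of Section~\ref{sec:error bounds}: from $\stls{\rho_0}{\rho_1}\le\hls{\rho_0}{\rho_1}{0}$ and Proposition~\ref{prop:Hoeffding} one obtains $\stls{\rho_0}{\rho_1}\le-\hmdist{\rho_0}{\rho_1}{0}$, and Lemma~\ref{hoeffding relentr} identifies $\hmdist{\rho_0}{\rho_1}{0}=\msr{\rho_0}{\rho_1}$; a diagonal construction then upgrades the $\limsup$ exponent to an honest limit. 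Your approach instead imports the ordinary i.i.d.\ Stein lemma for the block pair $(\rho_{0,m},\rho_{1,m})$, uses the algebraic facts $\cA_m^{\otimes\ell}\subset\cA_{\ell m}$ and $\rho_{k,\ell m}|_{\cA_m^{\otimes\ell}}=\rho_{k,m}^{\otimes\ell}$ to transport the tests into $G$-invariant ones, and finishes via $\msr{\rho_0}{\rho_1}=\sup_m\frac1m\sr{\rho_{0,m}}{\rho_{1,m}}$. This is more modular if one is willing to take the i.i.d.\ Stein lemma as a black box: it uses only the nested-algebra structure and compatibility of the states, and avoids the $\psi$-function analysis entirely. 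The paper's route, by contrast, reuses infrastructure already built for the Hoeffding problem (Audenaert's inequality, subadditivity of $\psi_n$ via Lieb's concavity) and views Stein as its $r=0$ endpoint.

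One remark: you correctly flag that your converse relies on $\alpha_n\to0$, whereas the constraint in \eqref{stein1}--\eqref{stein3} literally reads $\lim_n\frac1n\log\beta_{0,n}=0$. The paper's own proof in fact treats the constraint as $\beta_{0,n}\to0$ (see the choice of $T_{n,k}$ there), so you are in agreement with the intended reading; your dilution fix for the direct part also handles the literal form.
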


\begin{proof}
We have 
\begin{equation*}
-\msr{\rho_0}{\rho_1}\le\stli{\rho_0}{\rho_1}
\le\stls{\rho_0}{\rho_1}\le\stl{\rho_0}{\rho_1},
\end{equation*}
where the first inequality follows, e.g., from \cite[Proposition 5.2]{HMO}, and the rest
are obvious by definitions. Note that $\stls{\rho_0}{\rho_1}\le \hls{\rho_0}{\rho_1}{0}$
by definition. By taking account of Lemma \ref{hoeffding relentr}, Proposition
\ref{prop:Hoeffding} implies that
\begin{equation*}
\stls{\rho_0}{\rho_1}\le \hls{\rho_0}{\rho_1}{0}
\le-\hmdist{\rho_0}{\rho_1}{0}=-\msr{\rho_0}{\rho_1}.
\end{equation*}
By the definition of $\stls{\rho_0}{\rho_1}$, for each $k\in\N$ 
there exists a sequence of tests $T_{n,k},\,n\in\bN$, such that 
$$
\lim_{n\to\infty}\beta_{0,n}(T_{n,k})=0,\qquad
\limsup_{n\to\infty}\frac{1}{n}\log\beta_{1,n}(T_{n,k})
<-\msr{\rho_0}{\rho_1}+\frac{1}{k}.
$$
For each $k$, we can choose an $n_k\in\bN$ such that for every $n\ge n_k$,
$$
\beta_{0,n}(T_{n,k})<{1\over k},\qquad
\frac{1}{n}\log\beta_{1,n}(T_{n,k})<-\msr{\rho_0}{\rho_1}+\frac{1}{k}.
$$
Here we may assume that $n_1<n_2<\dots$, and we define $T_n^*:=T_{n,k}$ if
$n_k\le n<n_{k+1}$, $k\in\bN$. Obviously, for this sequence of tests,
\begin{equation*}
\lim_{n\to\infty}\beta_{0,n}(T_{n}^*)=0,\qquad
\limsup_{n\to\infty}\frac{1}{n}\log\beta_{1,n}(T_{n}^*)\le-\msr{\rho_0}{\rho_1}.
\end{equation*}
On the other hand, $-\msr{\rho_0}{\rho_1}\le \stli{\rho_0}{\rho_1}$ yields
\begin{equation*}
-\msr{\rho_0}{\rho_1}\le\liminf_{n\to\infty}\frac{1}{n}\log\beta_{1,n}(T_{n}^*),
\end{equation*}
and hence,
\begin{equation*}
\lim_{n\to\infty}\frac{1}{n}\log\beta_{1,n}(T_{n}^*)=-\msr{\rho_0}{\rho_1}.
\end{equation*}
This implies that
\begin{equation*}
\stl{\rho_0}{\rho_1}\le -\msr{\rho_0}{\rho_1},
\end{equation*}
which completes the proof.
\end{proof}

The problem of Stein's lemma can also be formulated in a slightly different way that is
not completely equivalent to the above formulation. For each $\eps\in(0,1)$ and
$n\in\bN$, define the quantity
$$
\beta_\eps(\rho_{0,n}\,\|\,\rho_{1,n}):=\min\{\beta_{1,n}(T_n):
T_n\in\cA_n,\,0\le T_n\le I,\,\beta_{0,n}(T_n)\le\eps\}
$$
and
\begin{align*}
\underline s_{\scriptscriptstyle{G},\ep}(\rho_0\,\|\,\rho_1)
&:=\inf_{\{T_n\}}\biggl\{
\liminf_{n\to\infty} \frac{1}{n}\log\beta_n(T_n)\biggm| \alpha_n(T_n)\le\ep\biggr\}, \\
\overline s_{\scriptscriptstyle{G},\ep}(\rho_0\,\|\,\rho_1)
&:=\inf_{\{T_n\}}\biggl\{
\limsup_{n\to\infty} \frac{1}{n}\log\beta_n(T_n)\biggm| \alpha_n(T_n)\le\ep\biggr\}.
\end{align*}
One can easily see that 
$$
\liminf_{n\to\infty}\frac{1}{n}\log\beta_\eps(\rho_{0,n}\,\|\,\rho_{1,n})
=\underline s_{\scriptscriptstyle{G},\ep}(\rho_0\,\|\,\rho_1),\qquad
\limsup_{n\to\infty}\frac{1}{n}\log\beta_\eps(\rho_{0,n}\,\|\,\rho_{1,n})
=\overline s_{\scriptscriptstyle{G},\ep}(\rho_0\,\|\,\rho_1),
$$
and
\begin{equation*}
\sup_{\ep}\underline s_{\scriptscriptstyle{G},\ep}(\rho_0\,\|\,\rho_1)
=\stli{\rho_0}{\rho_1},\ds\ds\ds
\sup_{\ep}\overline s_{\scriptscriptstyle{G},\ep}(\rho_0\,\|\,\rho_1)
=\stls{\rho_0}{\rho_1}.
\end{equation*}
Hence, Theorem \ref{thm:Stein} implies that if
$\supp\rho_{0,n}\le\supp\rho_{1,n}$ for all $n$ then
\begin{equation}\label{F-4.9}
\underline s_{\scriptscriptstyle{G},\ep}(\rho_0\,\|\,\rho_1)
\le \overline s_{\scriptscriptstyle{G},\ep}(\rho_0\,\|\,\rho_1)
\le -\msr{\rho_0}{\rho_1}
\end{equation}
for all $\ep\in (0,1)$.

\begin{thm}\label{thm:epsilon Stein}
If $\supp\rho_1$ is $G$-invariant and $\supp\rho_0\le\supp\rho_1$ (in particular, if
$\rho_1$ is faithful) then 
\begin{equation*}
-\derright{\psi}(1)\le \underline s_{\scriptscriptstyle{G},\ep}(\rho_0\,\|\,\rho_1)
\le \overline s_{\scriptscriptstyle{G},\ep}(\rho_0\,\|\,\rho_1)\le -\derleft{\psi}(1).
\end{equation*}
\end{thm}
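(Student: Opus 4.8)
The plan is to establish the three inequalities separately; the middle one, $\underline s_{\scriptscriptstyle{G},\ep}(\rho_0\|\rho_1)\le\overline s_{\scriptscriptstyle{G},\ep}(\rho_0\|\rho_1)$, is immediate since $\liminf\le\limsup$. I first note that the hypothesis $\supp\rho_0\le\supp\rho_1$ propagates to every level: it is equivalent to $\hat\rho_0\le c\hat\rho_1$ for some $c>0$, so tensorizing gives $\hat\rho_0^{\otimes n}\le c^n\hat\rho_1^{\otimes n}$, and applying the positive map $E_{\cA_n}$ yields $\hat\rho_{0,n}\le c^n\hat\rho_{1,n}$, hence $\supp\rho_{0,n}\le\supp\rho_{1,n}$ for all $n$. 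Consequently \eqref{F-4.9} is available and, combined with Lemma \ref{hoeffding relentr} (which identifies $\msr{\rho_0}{\rho_1}=\derleft{\psi}(1)$), it at once delivers the upper bound $\overline s_{\scriptscriptstyle{G},\ep}(\rho_0\|\rho_1)\le-\msr{\rho_0}{\rho_1}=-\derleft{\psi}(1)$.

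The substance of the theorem is the lower bound $-\derright{\psi}(1)\le\underline s_{\scriptscriptstyle{G},\ep}(\rho_0\|\rho_1)$, a strong-converse-type estimate. Here I would fix a parameter $\alpha\in(1,2]$ and an admissible test sequence $\{T_n\}$ with $\beta_{0,n}(T_n)\le\ep$, i.e.\ $\rho_{0,n}(T_n)\ge 1-\ep$. The crucial tool is the monotonicity of the R\'enyi relative entropy $S_\alpha$ (valid exactly on the range $\alpha\in(0,2]$) under the binary measurement $\{T_n,I-T_n\}$, which produces the Bernoulli distributions $(\rho_{0,n}(T_n),\rho_{0,n}(I-T_n))$ and $(\rho_{1,n}(T_n),\rho_{1,n}(I-T_n))$. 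Dropping the nonnegative second summand of the resulting classical $S_\alpha$ and using $\tfrac{\alpha}{\alpha-1}>0$ together with $\rho_{0,n}(T_n)\ge 1-\ep$, I obtain
\[
S_\alpha(\rho_{0,n}\|\rho_{1,n})=\frac{1}{\alpha-1}\psi_n(\alpha)
\ge\frac{\alpha}{\alpha-1}\log(1-\ep)-\log\beta_{1,n}(T_n),
\]
that is, $\tfrac1n\log\beta_{1,n}(T_n)\ge\tfrac{\alpha}{n(\alpha-1)}\log(1-\ep)-\tfrac{1}{\alpha-1}\cdot\tfrac1n\psi_n(\alpha)$. Since $\supp\rho_1$ is $G$-invariant, Lemma \ref{L-3.1}(2) guarantees that $\tfrac1n\psi_n(\alpha)\to\psi(\alpha)$ with $\psi$ finite on $[1,2]$; letting $n\to\infty$ therefore yields $\liminf_n\tfrac1n\log\beta_{1,n}(T_n)\ge-\psi(\alpha)/(\alpha-1)$ for every admissible $\{T_n\}$, whence $\underline s_{\scriptscriptstyle{G},\ep}(\rho_0\|\rho_1)\ge-\psi(\alpha)/(\alpha-1)$.

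Finally I would optimize over $\alpha$. Because $\psi(1)=0$ (as $\supp\rho_0\le\supp\rho_1$), the quantity $\psi(\alpha)/(\alpha-1)=(\psi(\alpha)-\psi(1))/(\alpha-1)$ is a secant slope of the finite convex function $\psi$ based at $1$; by convexity it is nondecreasing in $\alpha$ and decreases to $\derright{\psi}(1)$ as $\alpha\searrow 1$, so $\inf_{\alpha\in(1,2]}\psi(\alpha)/(\alpha-1)=\derright{\psi}(1)$. Taking the supremum of the previous bound over $\alpha\in(1,2]$ then gives $\underline s_{\scriptscriptstyle{G},\ep}(\rho_0\|\rho_1)\ge-\derright{\psi}(1)$, which completes the proof. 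The main obstacle is this lower bound, and within it the two pivotal ingredients are the validity of the data-processing inequality for $S_\alpha$ precisely on $(1,2]$ and the existence and finiteness of the limit $\psi$ on $[1,2]$; it is exactly for the latter that the $G$-invariance of $\supp\rho_1$ (through Lemma \ref{L-3.1}(2)) is indispensable, since without it $\psi$ need not be controlled for $\alpha>1$ and the whole argument for the supercritical parameters breaks down.
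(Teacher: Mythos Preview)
Your argument is correct, but the route you take for the lower bound differs from the paper's. The paper invokes the Ogawa--Nagaoka strong-converse inequality
\[
\beta_{1,n}(T_n)\ge e^{-na}\bigl(1-\ep-e^{-\tilde\vfi_n(na)}\bigr),
\]
then appeals to Lemma~\ref{lemma:strong converse} to show $\frac{1}{n}\tilde\vfi_n(na)\to\tilde\vfi(a)$, and finally observes that $\tilde\vfi(a)>0$ precisely when $a>\derright{\psi}(1)$, so that $\tilde\vfi_n(na)\to+\infty$ and the bound reduces to $\liminf_n\frac{1}{n}\log\beta_{1,n}(T_n)\ge-a$ for every such $a$. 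You instead bypass the Legendre--Fenchel machinery entirely: you apply the data-processing inequality for the Petz--R\'enyi divergence $S_\alpha$ with $\alpha\in(1,2]$ (which follows from the Ando convexity discussed in Appendix~A.1) to the binary measurement $\{T_n,I-T_n\}$, drop one nonnegative term from the resulting classical $S_\alpha$, and obtain the pointwise bound $\frac{1}{n}\log\beta_{1,n}(T_n)\ge o(1)-\frac{1}{\alpha-1}\cdot\frac{1}{n}\psi_n(\alpha)$. After passing to the limit via Lemma~\ref{L-3.1}(2) and optimizing over $\alpha\searrow1$, the convex secant slopes of $\psi$ at $1$ yield $-\derright{\psi}(1)$.

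Both proofs ultimately rest on Lemma~\ref{L-3.1}(2) to control $\psi$ on $(1,2]$, and both deliver the same bound. Your approach is shorter and avoids Lemma~\ref{lemma:strong converse}; the paper's approach keeps the argument inside the Legendre-transform framework used elsewhere and ties it explicitly to the established Ogawa--Nagaoka technique. One small point you leave implicit: the inequality you write for each $n$ requires $\beta_{1,n}(T_n)>0$, but this is automatic here since $\supp\rho_{0,n}\le\supp\rho_{1,n}$ and $\rho_{0,n}(T_n)\ge1-\ep>0$ force $T_n$ to be nonzero on $\supp\rho_{1,n}$.
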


\begin{proof}
In exactly the same way as in \cite{ON}, one can show that 
\begin{equation*}
\beta_{1,n}(T_n)\ge e^{-na}\bz 1-\ep-e^{-\tilde\vfi_n(na)}\jz
\end{equation*}
for any test $T_n$ that satisfies $\beta_{0,n}(T_n)\le \ep$. By Lemma
\ref{lemma:strong converse},
$$
\lim_{n\to\infty}{1\over n}\tilde\vfi_n(na)
=\tilde\vfi(a)=\max_{1\le s\le 3/2}\{a(s-1)-\psi(s)\}.
$$
The latter is strictly positive if and only if $a>\derright{\psi}(1)$, and in this case
$\tilde\vfi_n(na)>(n/2)\tilde\vfi(a)$ for every large enough $n$, and hence
$\lim_n\tilde\vfi_n(na)=+\infty$. Hence,
\begin{equation*}
\liminf_{n\to\infty}\frac{1}{n}\log\beta_{1,n}(T_n)
\ge -a+\lim_{n\to\infty}\frac{1}{n}\log\bz 1-\ep-e^{-\tilde\vfi_n(na)}\jz=-a.
\end{equation*}
Since this is true for all $a>\derright{\psi}(1)$, we get
$\underline s_{\scriptscriptstyle{G},\ep}(\rho_0\,\|\,\rho_1)\ge -\derright{\psi}(1)$.
The rest of the inequalities are just a restatement of \eqref{F-4.9} thanks to Lemma
\ref{hoeffding relentr}.
\end{proof}

Theorem \ref{thm:epsilon Stein} together with Lemma \ref{hoeffding relentr} yields
immediately the following:

\begin{cor}\label{C-4.6}
Assume that the conditions of Theorem \ref{thm:epsilon Stein} hold and, moreover, that
$\psi$ is differentiable at $1$. Then
\begin{equation*}
\lim_{n\to\infty}\frac{1}{n}\log\beta_\eps(\rho_{0,n}\,\|\,\rho_{1,n})
=-\msr{\rho_0}{\rho_1}
\end{equation*}
for all $\eps\in(0,1)$.
\end{cor}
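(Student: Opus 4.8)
The plan is to read the corollary as a sandwich that collapses: Theorem~\ref{thm:epsilon Stein} already brackets the two $\epsilon$-exponents between $-\derright{\psi}(1)$ and $-\derleft{\psi}(1)$, and the additional hypothesis that $\psi$ is differentiable at $1$ forces these two endpoints to coincide, leaving no gap between the lower and upper exponents. The remaining work is only to convert the resulting common value of the exponents into a genuine limit and to identify it with $-\msr{\rho_0}{\rho_1}$.

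First I would invoke Theorem~\ref{thm:epsilon Stein}, whose hypotheses ($\supp\rho_1$ is $G$-invariant and $\supp\rho_0\le\supp\rho_1$) are exactly those assumed here, to obtain
\begin{equation*}
-\derright{\psi}(1)\le \underline s_{\scriptscriptstyle{G},\ep}(\rho_0\,\|\,\rho_1)
\le \overline s_{\scriptscriptstyle{G},\ep}(\rho_0\,\|\,\rho_1)\le -\derleft{\psi}(1).
\end{equation*}
Differentiability of $\psi$ at $1$ means $\derright{\psi}(1)=\derleft{\psi}(1)$, so the outer bounds agree and the whole chain collapses to
\begin{equation*}
\underline s_{\scriptscriptstyle{G},\ep}(\rho_0\,\|\,\rho_1)
=\overline s_{\scriptscriptstyle{G},\ep}(\rho_0\,\|\,\rho_1)=-\derleft{\psi}(1).
\end{equation*}
Next I would recall the two identities established in this section, namely $\liminf_{n\to\infty}\frac{1}{n}\log\beta_\eps(\rho_{0,n}\,\|\,\rho_{1,n})=\underline s_{\scriptscriptstyle{G},\ep}(\rho_0\,\|\,\rho_1)$ and the analogous one with $\limsup$ and $\overline s_{\scriptscriptstyle{G},\ep}$. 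Since the lower and upper exponents now coincide, the $\liminf$ and $\limsup$ of $\frac{1}{n}\log\beta_\eps(\rho_{0,n}\,\|\,\rho_{1,n})$ agree, whence the limit exists and equals $-\derleft{\psi}(1)$, for every $\eps\in(0,1)$.

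Finally I would identify the numerical value via Lemma~\ref{hoeffding relentr}, which requires $\supp\rho_{0,n}\le\supp\rho_{1,n}$ for all $n$. This follows from the standing hypotheses: $G$-invariance of $\supp\rho_1$ gives $\D{\rho}_{1,n}^0=(\D{\rho}_1^0)^{\otimes n}$ as in the proof of Lemma~\ref{L-3.1}(2), and since $(\D{\rho}_1^0)^{\otimes n}\in\cA_n$ it is fixed by $E_{\cA_n}$; meanwhile $\supp\rho_0\le\supp\rho_1$ yields $\hat\rho_0^{\otimes n}\le c\,(\D{\rho}_1^0)^{\otimes n}$ for some $c>0$, so applying the positive map $E_{\cA_n}$ gives $\hat\rho_{0,n}\le c\,(\D{\rho}_1^0)^{\otimes n}=c\,\hat\rho_{1,n}^0$, i.e.\ $\supp\rho_{0,n}\le\supp\rho_{1,n}$. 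Lemma~\ref{hoeffding relentr} then gives $\derleft{\psi}(1)=\msr{\rho_0}{\rho_1}$, so the limit equals $-\msr{\rho_0}{\rho_1}$, as claimed. I expect no real obstacle: all the content is carried by Theorem~\ref{thm:epsilon Stein} and Lemma~\ref{hoeffding relentr}, and the only genuine verification is the trivial support inequality just described.
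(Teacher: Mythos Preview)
Your proof is correct and follows exactly the approach the paper intends: it merely states that the corollary follows immediately from Theorem~\ref{thm:epsilon Stein} together with Lemma~\ref{hoeffding relentr}, and you have spelled out those details, including the (correct) verification that the hypotheses of Theorem~\ref{thm:epsilon Stein} imply $\supp\rho_{0,n}\le\supp\rho_{1,n}$ for all $n$ so that Lemma~\ref{hoeffding relentr} applies.
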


We close this section with the following:

\begin{remark}\label{rem:differentiability}\rm
The analysis in \cite{HMO} shows that if $\psi$ exists and is differentiable on the whole
real line and $\derleft{\psi}(1)=\msr{\rho_0}{\rho_1}$
(see Lemma \ref{hoeffding relentr}) then 
\begin{align*}
\cli{\rho_0}{\rho_1}=\cls{\rho_0}{\rho_1}
=\cl{\rho_0}{\rho_1}=&-\chmdist{\rho_0}{\rho_1},\\
\hli{\rho_0}{\rho_1}{r}=\hls{\rho_0}{\rho_1}{r}
=\hl{\rho_0}{\rho_1}{r}=&-\hmdist{\rho_0}{\rho_1}{r},\ds\ds r\ge 0,\\
\stli{\rho_0}{\rho_1}=\stls{\rho_0}{\rho_1}=\stl{\rho_0}{\rho_1}=&-\msr{\rho_0}{\rho_1}.
\end{align*} 
(Actually, it is enough to require the existence and differentiability of $\psi$ on the
open interval $(0,1)$ to show the above identities based on the G\"artner-Ellis
theorem.) Note that \eqref{F-4.2} and \eqref{F-4.3} imply in this case that
\begin{equation}\label{F-4.10}
\lim_{n\to\infty}{1\over n}\log P_{\min}(\rho_{0,n}:\rho_{1,n})=-\chmdist{\rho_0}{\rho_1}.
\end{equation}
In Section \ref{sec:examples} we will show some examples where $\psi$ can be
explicitly computed and shown to be differentiable on $\bR$, and hence the above
identities hold.
\end{remark}

\section{Asymptotic distance measures for an invariant alternative hypothesis}\label{sec:invariant}
\setcounter{equation}{0}

As the examples of Section \ref{sec:examples} will show, the performance of the
$G$-invariant and the unrestricted measurements can be very different in general. In
particular, the states might be perfectly distinguishable by unrestricted measurements,
while completely indistinguishable by $G$-invariant ones. As our following discussion
shows, this cannot happen if the alternative hypothesis is invariant under the symmetry
group. In the first part, we show that in this case $G$-invariant measurements perform
just as well as unrestricted ones in the setting of Stein's lemma. This follows
immediately from Theorem \ref{T-5.1}, thanks to Theorem \ref{thm:Stein}. Although
the same is not true for the settings of the Chernoff and the Hoeffding bounds (see
Example \ref{E-6.2}), it is still possible to establish a strong relation between the
different performances in the setting of the Chernoff bound as is shown the second part
of this section.

\subsection{Mean relative entropy}

We prove the following partial extension of \cite[Theorem 2.1]{HP}, improving the arguments
in \cite{HP} based on \eqref{F-2.4}. In the proof, we use the same notations as defined
in Section 2 for the irreducible decompositions of the tensor powers of the representation
$u$. 

\begin{thm}\label{T-5.1}
If $\rho_1$ is $G$-invariant (i.e., $\hat\rho_1\in\cA_1$), then
\begin{align*}
S(\rho_0\,\|\,\rho_1)&=\msr{\rho_0}{\rho_1} \\
&=\lim_{n\to\infty}{1\over n}\sup\bigl\{S(\rho_{0,n}|_\cB\,\|\,\rho_{1,n}|_\cB):
\mbox{$\cB$ is an abelian subalgebra of $\cA_n$}\bigr\}.
\end{align*}
\end{thm}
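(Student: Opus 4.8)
The plan is to prove the two equalities separately. The first, $\sr{\rho_0}{\rho_1}=\msr{\rho_0}{\rho_1}$, is where the $G$-invariance of $\rho_1$ enters decisively; the second, $\msr{\rho_0}{\rho_1}=\lim_n\frac1n\widetilde S_n$ with $\widetilde S_n:=\sup\{S(\rho_{0,n}|_\cB\,\|\,\rho_{1,n}|_\cB):\cB\subset\cA_n\ \mbox{abelian}\}$, follows from a general i.i.d.-embedding argument that does not use invariance. Throughout I assume $\supp\rho_0\le\supp\rho_1$, since otherwise every quantity is $+\infty$ and there is nothing to prove.

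For the first equality, $G$-invariance gives $\hat\rho_1\in\cA_1$, hence $\hat\rho_{1,n}=E_{\cA_n}(\hat\rho_1^{\otimes n})=\hat\rho_1^{\otimes n}$. Writing $E:=E_{\cA_n}$, $\sigma:=\hat\rho_0^{\otimes n}$, $\tau:=\hat\rho_1^{\otimes n}\in\cA_n$, the trace-preserving property $\Tr(\sigma X)=\Tr(E(\sigma)X)$ for $X\in\cA_n$, applied to $X=\log\tau\in\cA_n$, yields the exact identity
\begin{equation*}
n\,\sr{\rho_0}{\rho_1}=\sr{\rho_{0,n}}{\rho_{1,n}}+\sr{\rho_0^{\otimes n}}{\rho_{0,n}},
\end{equation*}
in which the residual term equals $S_{\mathrm{vN}}(\hat\rho_{0,n})-nS_{\mathrm{vN}}(\hat\rho_0)\ge0$ ($S_{\mathrm{vN}}$ the von Neumann entropy). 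It then suffices to show $\frac1n\sr{\rho_0^{\otimes n}}{\rho_{0,n}}\to0$. I would factor $E_{\cA_n}$ as the pinching $\mathcal P_n(\cdot)=\sum_iP_i^{(n)}\cdot P_i^{(n)}$ followed by the blockwise depolarization of the $d_i^{(n)}$-factors appearing in \eqref{F-2.3}, and bound the entropy increase of the two steps: the pinching inequality gives an increase of at most $\log k_n$, while the Araki--Lieb triangle inequality applied inside each block bounds the depolarization increase by $2\log(\max_id_i^{(n)})$. Since $k_n\le\sum_id_i^{(n)}$, both are $o(n)$ by the polynomial-growth estimate \eqref{F-2.4}, closing the first equality.

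For the second equality, the upper bound is immediate: monotonicity of relative entropy under restriction gives $\widetilde S_n\le\sr{\rho_{0,n}}{\rho_{1,n}}$, so $\limsup_n\frac1n\widetilde S_n\le\msr{\rho_0}{\rho_1}$. The heart of the matter is the matching lower bound, and the idea is to exploit the compatibility of the family $\{\rho_{k,n}\}$. For $n=\ell m$ I would use $\cA_m^{\otimes\ell}\subset\cA_{\ell m}$ together with $\rho_{k,n}|_{\cA_m^{\otimes\ell}}=\rho_{k,m}^{\otimes\ell}$, which holds because $E_{\cA_n}$ restricts to the identity on $\cA_m^{\otimes\ell}$ and $\rho_k^{\otimes m}|_{\cA_m}=\rho_{k,m}$. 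Applying the achievability half of the original Hiai--Petz theorem \cite{HP} to the i.i.d.\ states $\rho_{0,m},\rho_{1,m}$ on the fixed finite-dimensional algebra $\cA_m$ produces abelian subalgebras $\cB_{m,\ell}\subset\cA_m^{\otimes\ell}\subset\cA_{\ell m}$ with $\frac1\ell S(\rho_{0,m}^{\otimes\ell}|_{\cB_{m,\ell}}\,\|\,\rho_{1,m}^{\otimes\ell}|_{\cB_{m,\ell}})\to\sr{\rho_{0,m}}{\rho_{1,m}}$; restricting the admissible supremum to these $\cB_{m,\ell}$ gives $\liminf_n\frac1n\widetilde S_n\ge\frac1m\sr{\rho_{0,m}}{\rho_{1,m}}$ along multiples of $m$.

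Finally I would check that $\widetilde S_n$ is superadditive: if $\cB\subset\cA_n$ and $\cB'\subset\cA_{n'}$ are abelian then $\cB\otimes\cB'\subset\cA_{n+n'}$ is abelian, and the relative entropies add because $\rho_{k,n+n'}|_{\cA_n\otimes\cA_{n'}}=\rho_{k,n}\otimes\rho_{k,n'}$. By Fekete's lemma $\lim_n\frac1n\widetilde S_n=\sup_n\frac1n\widetilde S_n$ exists, and taking the supremum over $m$ in the lower bound yields $\lim_n\frac1n\widetilde S_n\ge\sup_m\frac1m\sr{\rho_{0,m}}{\rho_{1,m}}=\msr{\rho_0}{\rho_1}$, which together with the upper bound and the first equality completes the proof. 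I expect the main obstacle to be this achievability lower bound: the i.i.d.-embedding reduces it to the classical Hiai--Petz result on a single fixed algebra, but one must verify that the achieving measurements genuinely lie in the invariant algebra $\cA_m^{\otimes\ell}$ and that the direct-sum (block) structure of $\cA_m$ does not obstruct the cited theorem.
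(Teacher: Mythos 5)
Your proposal is correct, but it takes a genuinely different route from the paper's. The paper proves both equalities in a single pass by transplanting the Hiai--Petz construction into $\cA_n$: writing $\hat\rho_{1,n}=\hat\rho_1^{\otimes n}=\bigoplus_i\bigl(D_i^{(n)}\otimes I_{d_i^{(n)}}\bigr)$, it pinches $\hat\rho_0^{\otimes n}$ by the spectral projections $P_{ij}^{(n)}\otimes I_{d_i^{(n)}}$ of the blocks $D_i^{(n)}$, lets $\cB_n$ be the abelian subalgebra of $\cA_n$ generated by the pinched pieces, invokes the exact identity \eqref{F-5.2} as in \cite[Lemma 3.1]{HP}, and bounds the entropy defect $S(\rho_0^{\otimes n}\circ\cE_n)-S(\rho_0^{\otimes n})$ by $d\log(n+1)+2\log\bigl(\sum_i d_i^{(n)}\bigr)=o(n)$ through a rank count (joint convexity reduces to pure states, Schmidt rank at most $d_i^{(n)}$, at most $(n+1)^d$ distinct eigenvalues of $\hat\rho_1^{\otimes n}$) plus \eqref{F-2.4}; this produces explicit achieving abelian subalgebras $\cB_n\subset\cA_n$ at \emph{every} $n$, and monotonicity closes the chain. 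You instead split the two equalities: the first rests on the Pythagorean identity $n\sr{\rho_0}{\rho_1}=\sr{\rho_{0,n}}{\rho_{1,n}}+\sr{\rho_0^{\otimes n}}{\rho_{0,n}}$ (valid precisely because $\log\hat\rho_1^{\otimes n}\in\cA_n$) together with the defect bound $\sr{\rho_0^{\otimes n}}{\rho_{0,n}}\le\log k_n+2\log\max_i d_i^{(n)}=o(n)$ from the pinching inequality and Araki--Lieb, which is a somewhat more elementary estimate than the paper's rank count; the second is reduced to the original Hiai--Petz theorem as a black box via the embedding $\cA_m^{\otimes\ell}\subset\cA_{m\ell}$, the compatibility $\rho_{k,m\ell}|_{\cA_m^{\otimes\ell}}=\rho_{k,m}^{\otimes\ell}$, superadditivity of the abelian supremum, and Fekete's lemma. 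What your decomposition buys is modularity and a sharper insight: the second equality is seen to hold with \emph{no} invariance assumption on $\rho_1$, while invariance enters only in the first; what the paper's approach buys is self-containedness and concrete subalgebras $\cB_n$ at every level rather than only along multiples $m\ell$ after a diagonalization in $m$. Finally, the verification you flag as the main obstacle does go through: in the Hiai--Petz construction the achieving abelian algebra is generated by $\hat\rho_{1,m}^{\otimes\ell}$ and the pinching of $\hat\rho_{0,m}^{\otimes\ell}$ by the spectral projections of $\hat\rho_{1,m}^{\otimes\ell}$, and all of these operators lie in the von Neumann algebra $\cA_m^{\otimes\ell}$ (which is closed under functional calculus); hence the direct-sum structure of $\cA_m$ is no obstruction, and one may simply apply \cite{HP} on the full algebra $\cB(\hil^{\otimes m})^{\otimes\ell}$ and observe that the constructed algebras automatically land inside $\cA_m^{\otimes\ell}$.
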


\begin{proof}
The monotonicity of the relative entropy implies that 
$$
S(\rho_0^{\otimes n}\,\|\,\rho_1^{\otimes n})
\ge S(\rho_{n,0}\,\|\,\rho_{n,1})
\ge S(\rho_{0,n}|_\cB\,\|\,\rho_{1,n}|_\cB)
$$
for any subalgebra $\B\subset\A_n$ and hence,
\begin{align*}
S(\rho_0\,\|\,\rho_1)&\ge \msr{\rho_0}{\rho_1} \\
&\ge\limsup_{n\to\infty}{1\over n}\sup\bigl\{S(\rho_{0,n}|_\cB\,\|\,\rho_{1,n}|_\cB):
\mbox{$\cB$ is an abelian subalgebra of $\cA_n$}\bigr\}.
\end{align*}
The assumption $\hat\rho_1\in\cA_1$ implies that $\hat\rho_1^{\otimes n}\in\cA_n$ for
all $n\in\bN$. By \eqref{F-2.2},
$$
\hat\rho_{1,n}=\hat\rho_1^{\otimes n}
=\bigoplus_{i=1}^{k_n}\Bigl(D_i^{(n)}\otimes I_{d_i^{(n)}}\Bigr)
\quad\mbox{with}\quad D_i^{(n)}\in M_{m_i^{(n)}}.
$$
With the spectral decomposition
$D_i^{(n)}=\sum_{j=1}^{l_i^{(n)}}\lambda_{ij}^{(n)}P_{ij}^{(n)}$ for $1\le i\le k_n$, we
define
\begin{equation}\label{F-5.1}
\cE_n(A):=\sum_{i=1}^{k_n}\sum_{j=1}^{l_i^{(n)}}
\bigl(P_{ij}^{(n)}\otimes I_{d_i^{(n)}}\bigr)E_{\cA_n}(A)
\bigl(P_{ij}^{(n)}\otimes I_{d_i^{(n)}}),\qquad A\in\B(\hil)^{\otimes n}.
\end{equation}
Let $\cB_n$ denote the abelian subalgebra of $\cA_n$ generated by
$$
\bigl(P_{ij}^{(n)}\otimes I_{d_i^{(n)}}\bigr)\hat\rho_0^{\otimes n}
\bigl(P_{ij}^{(n)}\otimes I_{d_i^{(n)}}\bigr),
\qquad1\le i\le k_n,\ 1\le j\le l_i^{(n)}.
$$
Then, as in the proof of \cite[Lemma 3.1]{HP}, we have
\begin{equation}\label{F-5.2}
nS(\rho_0\,\|\,\rho_1)=S(\rho_0^{\otimes n}\,\|\,\rho_1^{\otimes n})
=S(\rho_0^{\otimes n}|_{\cB_n}\,\|\,\rho_1^{\otimes n}|_{\cB_n})
+S(\rho_0^{\otimes n}\circ\cE_n)-S(\rho_0^{\otimes n}).
\end{equation}

Similarly to \cite[Lemma 3.2]{HP} we next prove that
\begin{equation}\label{F-5.3}
S(\omega\circ\cE_n)-S(\omega)
\le d\log(n+1)+2\log\Biggl(\sum_{i=1}^{k_n} d_i^{(n)}\Biggr)
\end{equation}
for any state $\omega$ on $\B(\hil)^{\otimes n}$.
Note that $S(\omega\circ\cE_n)-S(\omega)=S(\omega\,\|\,\omega\circ\cE_n)$ and hence, by
the joint convexity of the relative entropy, it is enough to show
\eqref{F-5.3} for pure states. Assume thus that $\D{\omega}=\pr{\psi}$ with
some unit vector $\psi\in\hil^{\otimes n}$. By \eqref{F-2.3} we write
\begin{equation*}
E_{\cA_n}(\D{\omega})=\sum_{i=1}^{k_n}E_i^{(n)}(A_i^{(n)})\quad
\mbox{with}\quad A_i^{(n)}:=\pr{P_i^{(n)}\psi}\in M_{m_i^{(n)}}\otimes M_{d_i^{(n)}}
\end{equation*}
and $E_i^{(n)}(A_i^{(n)})=B_i^{(n)}\otimes I_{d_i^{(n)}}$ with
$B_i^{(n)}\in M_{m_i^{(n)}}$. 
The rank of $B_i^{(n)}$ is equal to the Schmidt rank of the vector $P_i^{(n)}\psi$, which
is upper bounded by $d_i^{(n)}$. Hence the rank of $E_i^{(n)}(A_i^{(n)})$ is at most
$(d_i^{(n)})^2$. One can also see that the number of different eigenvalues of
$\D{\rho}_1^{\otimes n}$ is upper bounded by $(n+1)^d$ and hence $l_i^{(n)}\le (n+1)^d$,
$1\le i\le k_n$. Thus, by \eqref{F-5.1}, the rank of $\cE_n(\pr{\psi})$ is upper bounded
by $(n+1)^d\sum_{i=1}^{k_n}(d_i^{(n)})^2$ and therefore,
$S(\omega\circ\cE_n)-S(\omega)=S(\omega\circ\cE_n)$ is dominated by
$$
\log\Biggl((n+1)^d\sum_{i=1}^{k_n}(d_i^{(n)})^2\Biggr)
\le d\log(n+1)+2\log\Biggl(\sum_{i=1}^{k_n} d_i^{(n)}\Biggr).
$$

Now we apply \eqref{F-5.3} to \eqref{F-5.2} with the choice
$\omega:= \rho_0^{\otimes n}$ and use \eqref{F-2.4} to obtain
\begin{equation*}
S(\rho_0\,\|\,\rho_1)
\le\liminf_{n\to\infty}{1\over n}S(\rho_{0,n}|_{\cB_n}\,\|\,\rho_{1,n}|_{\cB_n}),
\end{equation*}
which completes the proof.
\end{proof}

\subsection{Fidelity and Chernoff bound}\label{sec:fidelity}

In this section we will discuss the minimal symmetric error probability based on the
relation between the fidelity and the Chernoff bound in the setting with group symmetry.
The {\it fidelity} of two states $\rho$ and $\sigma$ on a matrix algebra is given by
$$
F(\rho,\sigma):=\Tr\big|\hat\rho^{1/2}\hat\sigma^{1/2}\big|
=\Tr\bigl(\hat\sigma^{1/2}\hat\rho\hat\sigma^{1/2}\bigr)^{1/2}
=\Tr\bigl(\hat\rho^{1/2}\hat\sigma\hat\rho^{1/2}\bigr)^{1/2},
$$
which is used as a distinguishability measure in quantum hypothesis testing. It is known
(see \cite[Theorem 1]{FvdG}, \cite[(9.110)]{NC}) that
\begin{equation}\label{F-5.4}
{1-\sqrt{1-F(\rho,\sigma)^2}\over2}\le P_{\min}(\rho:\sigma)
\le{F(\rho,\sigma)\over2}.
\end{equation}
Furthermore, it is also well known (see \cite[Theorem 9.6]{NC}, \cite[Theorem 6.2]{Pe2})
that the fidelity $F(\rho,\sigma)$ is monotone increasing under trace-preserving
completely positive maps. Hence we have
\begin{equation}\label{F-5.5}
F(\rho_{0,n},\rho_{1,n})\ge F(\rho_0^{\otimes n},\rho_1^{\otimes n})
=F(\rho_0,\rho_1)^n.
\end{equation}

The following inequality was proved in \cite[Theorem 6]{ANSV}. Here, we provide an
alternative proof.

\begin{lemma}\label{L-5.2}
For every states $\rho$ and $\sigma$ on $\cB(\hil)$ and for every $0\le s\le1$,
$$
\Tr\hat\rho^s\hat\sigma^{1-s}\ge F(\rho,\sigma)^2.
$$
\end{lemma}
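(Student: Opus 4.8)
The plan is to deduce the inequality from a single Cauchy--Schwarz estimate in the Hilbert--Schmidt inner product, arranged so that the resulting ``cross term'' reproduces the fidelity $F(\rho,\sigma)$ \emph{exactly} --- including the unitary coming from its variational description --- while the companion factor is recognised as a R\'enyi-type quantity that is automatically bounded by $1$. First I would use Uhlmann's theorem to fix a unitary $U$ on $\hil$ with $\Tr\bigl(U\hat\rho^{1/2}\hat\sigma^{1/2}\bigr)=F(\rho,\sigma)$, the value being real and nonnegative; such a $U$ exists because $F(\rho,\sigma)=\norm{\hat\rho^{1/2}\hat\sigma^{1/2}}_1=\max_V\bigl|\Tr(V\hat\rho^{1/2}\hat\sigma^{1/2})\bigr|$ over unitaries $V$, the maximiser being the phase in the polar decomposition of $\hat\rho^{1/2}\hat\sigma^{1/2}$.

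Next, for the given $s\in[0,1]$ I would introduce the two auxiliary operators $N:=\hat\rho^{s/2}\hat\sigma^{(1-s)/2}$ and $K:=\hat\rho^{(1-s)/2}U^*\hat\sigma^{s/2}$. A short computation with the cyclicity of the trace gives $\Tr(N^*K)=\Tr\bigl(\hat\sigma^{1/2}\hat\rho^{1/2}U^*\bigr)=\overline{F(\rho,\sigma)}=F(\rho,\sigma)$, while $\Tr(N^*N)=\Tr\hat\rho^s\hat\sigma^{1-s}$ and $\Tr(K^*K)=\Tr\bigl(\hat\rho^{1-s}U^*\hat\sigma^sU\bigr)=\Tr\bigl(\hat\rho^{1-s}(U^*\hat\sigma U)^{s}\bigr)$. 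Then Cauchy--Schwarz for the Hilbert--Schmidt inner product yields
\[
F(\rho,\sigma)^2=\bigl|\Tr(N^*K)\bigr|^2\le \Tr(N^*N)\,\Tr(K^*K)
=\bigl(\Tr\hat\rho^s\hat\sigma^{1-s}\bigr)\,\Tr\bigl(\hat\rho^{1-s}(U^*\hat\sigma U)^{s}\bigr).
\]
The final step is to observe that $U^*\hat\sigma U$ is again a density matrix, so $\Tr\bigl(\hat\rho^{1-s}(U^*\hat\sigma U)^{s}\bigr)\le1$ by the trace Young inequality $\Tr A^{1-s}B^{s}\le(1-s)\Tr A+s\Tr B$ with $A=\hat\rho$, $B=U^*\hat\sigma U$ (equivalently, the nonnegativity of the order-$(1-s)$ R\'enyi relative entropy). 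Combining the two facts gives $F(\rho,\sigma)^2\le\Tr\hat\rho^s\hat\sigma^{1-s}$.

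The one genuinely delicate point --- the step I expect to be the main obstacle --- is the choice of $K$, i.e.\ the placement of the Uhlmann unitary. A bare Cauchy--Schwarz splitting of $\Tr\hat\rho^{1/2}\hat\sigma^{1/2}$ (with no unitary inserted) only yields $\bigl(\Tr\hat\rho^{1/2}\hat\sigma^{1/2}\bigr)^2\le\bigl(\Tr\hat\rho^s\hat\sigma^{1-s}\bigr)\bigl(\Tr\hat\rho^{1-s}\hat\sigma^s\bigr)$, which is strictly weaker than the claim, since $\Tr\hat\rho^{1/2}\hat\sigma^{1/2}\le F(\rho,\sigma)$ in general; indeed the product bound $F^2\le\bigl(\Tr\hat\rho^s\hat\sigma^{1-s}\bigr)\bigl(\Tr\hat\rho^{1-s}\hat\sigma^s\bigr)$ simply fails for non-commuting states. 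Inserting $U^*$ into exactly one of the two factors is precisely what bridges this genuinely non-commutative gap, and the crux of the verification is checking that this insertion leaves the companion factor $\Tr(K^*K)$ in the manifestly $\le1$ R\'enyi form $\Tr\bigl(\hat\rho^{1-s}(U^*\hat\sigma U)^{s}\bigr)$.
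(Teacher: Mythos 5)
Your proof is correct, but it takes a genuinely different route from the paper's. The paper also starts from an optimality statement for the fidelity, but in the form of Uhlmann's purification theorem: it chooses purifications $|\varphi\rangle\langle\varphi|$ and $|\psi\rangle\langle\psi|$ of $\rho$ and $\sigma$ with $F(\rho,\sigma)=|\langle\varphi,\psi\rangle|$, and then invokes Lieb's concavity theorem, which (via the averaging/conditional-expectation mechanism set up in Appendix A.1 and already used in Lemma \ref{L-3.1}) makes $(A,B)\mapsto\Tr A^sB^{1-s}$ monotone under the partial trace carrying the purifications back to $\hat\rho,\hat\sigma$; the inequality then collapses to the pure-state identity $\Tr(|\varphi\rangle\langle\varphi|)^s(|\psi\rangle\langle\psi|)^{1-s}=|\langle\varphi,\psi\rangle|^2$. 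You instead encode the same optimality in the polar-decomposition unitary $U$ with $\Tr\bigl(U\hat\rho^{1/2}\hat\sigma^{1/2}\bigr)=F(\rho,\sigma)$ (this is trace-norm duality rather than Uhlmann's theorem proper, a harmless misnomer), and you replace Lieb's concavity by Cauchy--Schwarz for the Hilbert--Schmidt inner product together with the elementary trace H\"older/Young bound $\Tr\bigl(\hat\rho^{1-s}(U^*\hat\sigma U)^s\bigr)\le1$. All your computations check out, including the endpoints $s=0,1$ under the paper's convention $0^s=0$, and your diagnosis of why the unitary insertion is indispensable is accurate: without it one only reaches $\bigl(\Tr\hat\rho^{1/2}\hat\sigma^{1/2}\bigr)^2\le\bigl(\Tr\hat\rho^s\hat\sigma^{1-s}\bigr)\bigl(\Tr\hat\rho^{1-s}\hat\sigma^s\bigr)$, which is strictly weaker --- for two distinct non-orthogonal pure states the right-hand side equals $F^4<F^2$, so the product bound indeed fails. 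As for what each approach buys: yours is completely elementary and self-contained (Cauchy--Schwarz plus H\"older, no deep concavity theorem), whereas the paper's is shorter on the page precisely because Lieb's theorem and the purification/quasi-entropy-monotonicity machinery are already in place and used throughout, so their proof exhibits the lemma as one more instance of the paper's unifying tool.
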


\begin{proof}
By the fidelity formula with purifications due to Uhlmann (see \cite[Lemma 8.2]{Ha},
\cite[Theorem 9.4]{NC}), there are purifications $|\ffi\>\<\ffi|$ and
$|\psi\>\<\psi|$ of $\rho$ and $\sigma$, respectively, such that
$$
F(\rho,\sigma)=|\<\ffi,\psi\>|.
$$
By Lieb's concavity theorem (see the Appendix A.1 for details), we then
have
\begin{equation*}
\Tr\hat\rho^s\hat\sigma^{1-s}\ge\Tr(|\ffi\>\<\ffi|)^s(|\psi\>\<\psi|)^{1-s}
=|\<\ffi,\psi\>|^2=F(\rho,\sigma)^2.\qedhere
\end{equation*}
\end{proof}

We need only the following \eqref{F-5.7} with $s=1/2$ for later use, but the extended
inequalities are of some interest in themselves.

\begin{thm}\label{T-5.3}
If $\rho_1$ is $G$-invariant (i.e., $\hat\rho_1\in\cA_1$), then
\begin{align}
\liminf_{n\to\infty}{1\over n}\log\Tr|\hat\rho_{0,n}^s\hat\rho_{1,n}^{1-s}|
&\ge\log\Tr|\hat\rho_0^s\hat\rho_1^{1-s}|\quad\mbox{if $0\le s\le1/2$},
\label{F-5.6}\\
\limsup_{n\to\infty}{1\over n}\log\Tr|\hat\rho_{0,n}^s\hat\rho_{1,n}^{1-s}|
&\le\log\Tr|\hat\rho_0^s\hat\rho_1^{1-s}|\quad\mbox{if $1/2\le s\le1$}.
\label{F-5.7}
\end{align}
\end{thm}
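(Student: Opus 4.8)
We need to prove two inequalities about the asymptotic behavior of $\Tr|\hat\rho_{0,n}^s\hat\rho_{1,n}^{1-s}|$ when $\rho_1$ is $G$-invariant. Note that since $s, 1-s \in [0,1]$, the operators $\hat\rho_{0,n}^s$ and $\hat\rho_{1,n}^{1-s}$ are positive semidefinite, but their product need not be, hence the trace-norm (absolute value). The quantities $\psi_n(s) = \log\Tr\hat\rho_{0,n}^s\hat\rho_{1,n}^{1-s}$ defined earlier use the plain trace, not the trace norm, so the first task is to relate $\Tr|\cdot|$ to $\Tr(\cdot)$.

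**Strategy for (\ref{F-5.7}), the case $1/2 \le s \le 1$.** The plan is to exploit that $\rho_1$ being $G$-invariant gives $\hat\rho_{1,n} = \hat\rho_1^{\otimes n}$ (this is exactly the support computation done in the proof of Lemma \ref{L-3.1}(2), which applies since $G$-invariance of $\rho_1$ is stronger than $G$-invariance of its support). First I would observe that for the plain trace, the subadditivity from Lemma \ref{L-3.1}(1) holds on $[0,1]$, giving $\frac1n\psi_n(s) \le \psi_1(s) = \psin(s)$, i.e. $\frac1n\log\Tr\hat\rho_{0,n}^s\hat\rho_{1,n}^{1-s} \le \log\Tr\hat\rho_0^s\hat\rho_1^{1-s}$. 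The difficulty is passing from $\Tr$ to $\Tr|\cdot|$. Since $\hat\rho_0^s\hat\rho_1^{1-s}$ is a product of positive operators, $\Tr\hat\rho_0^s\hat\rho_1^{1-s} = |\Tr\hat\rho_0^s\hat\rho_1^{1-s}| \le \Tr|\hat\rho_0^s\hat\rho_1^{1-s}|$ in general, so the bound for the plain trace does not directly control the trace norm on the right. Instead, I would use the Araki-Lieb-Thirring or a Hölder-type trace-norm inequality, combined with the multiplicativity $\hat\rho_{1,n}^{1-s} = (\hat\rho_1^{1-s})^{\otimes n}$, to establish a submultiplicativity statement directly for $\Tr|\hat\rho_{0,n}^s\hat\rho_{1,n}^{1-s}|$: the key is that the conditional expectation $E_{\cA_n}$ is a contraction and that $G$-invariance of $\hat\rho_1$ lets one commute powers of $\hat\rho_{1,n}$ past the averaging.

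**Strategy for (\ref{F-5.6}), the case $0 \le s \le 1/2$.** Here I expect the proof to run through the \emph{superadditivity} direction. By Lemma \ref{L-5.2} applied at level $n$, $\Tr\hat\rho_{0,n}^s\hat\rho_{1,n}^{1-s} \ge F(\rho_{0,n},\rho_{1,n})^2$, and by the fidelity monotonicity \eqref{F-5.5}, $F(\rho_{0,n},\rho_{1,n}) \ge F(\rho_0,\rho_1)^n$. The plan is to use this together with the symmetry $s \leftrightarrow 1-s$ and the $G$-invariance of $\hat\rho_1$ to get a \emph{lower} bound on $\frac1n\log\Tr|\hat\rho_{0,n}^s\hat\rho_{1,n}^{1-s}|$. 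Concretely, I would argue that the relevant sequence is superadditive: because $\hat\rho_{1,n}=\hat\rho_1^{\otimes n}$, one can mimic the superadditivity argument of Lemma \ref{L-3.1}(2) but now phrased for the trace norm, invoking Lemma \ref{L-A.1} of the Appendix (the matrix-inequality generalization of Lieb used there) in place of plain Lieb concavity. Superadditivity would give $\frac1n\log\Tr|\hat\rho_{0,n}^s\hat\rho_{1,n}^{1-s}| \ge \log\Tr|\hat\rho_0^s\hat\rho_1^{1-s}|$.

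**Main obstacle.** The central difficulty is the trace-norm versus trace discrepancy: the earlier machinery ($\psi_n$, Lieb concavity, sub/superadditivity) is all built for $\Tr(\cdot)$, whereas the present statement concerns $\Tr|\cdot|$. I expect the crux to be establishing the correct sub- and super-multiplicativity \emph{for the trace norm} $\Tr|\hat\rho_{0,n}^s\hat\rho_{1,n}^{1-s}|$ under tensoring and under $E_{\cA_n}$. The $G$-invariance hypothesis $\hat\rho_1\in\cA_1$ is precisely what makes this tractable, since it yields $\hat\rho_{1,n}=\hat\rho_1^{\otimes n}$ so that $\hat\rho_{1,n}^{1-s}$ factorizes and commutes appropriately with the group averaging in $E_{\cA_n}$; without it the argument would break. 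A clean route may be to write $\Tr|\hat\rho_0^s\hat\rho_1^{1-s}| = \norm{\hat\rho_0^{s/2}\hat\rho_1^{(1-s)/2}}$-type quantities via polar decomposition and then apply the tensor-power/averaging estimates, handling the two ranges $[0,1/2]$ and $[1/2,1]$ by the symmetry that interchanges the roles of $\rho_0$ and $\rho_1$.
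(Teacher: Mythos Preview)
Your proposal correctly pinpoints the central obstacle --- controlling $\Tr|\hat\rho_{0,n}^s\hat\rho_{1,n}^{1-s}|$ rather than $\Tr\hat\rho_{0,n}^s\hat\rho_{1,n}^{1-s}$ --- but the resolution you sketch does not work, and the paper's proof proceeds along an entirely different line.

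\textbf{Why the sub/super-additivity route fails.} Your plan is to mimic Lemma~\ref{L-3.1}, replacing $\Tr A^sB^{1-s}$ by $\Tr|A^sB^{1-s}|$ and hoping for an analogous sub/super-multiplicativity under $E_{\cA_n}$. But the Lieb/Ando machinery underlying Lemma~\ref{L-3.1} and Lemma~\ref{L-A.1} concerns joint concavity/convexity of $(A,B)\mapsto\Tr A^sB^{1-s}$, not of $(A,B)\mapsto\Tr|A^sB^{1-s}|$; there is no reason the latter inherits the same behaviour. More decisively, test your ``contraction'' idea at $s=1/2$: there $\Tr|A^{1/2}B^{1/2}|=F(A,B)$ is the fidelity, and monotonicity of fidelity under $E_{\cA_n}$ gives $F(\rho_{0,n},\rho_{1,n})\ge F(\rho_0,\rho_1)^n$, which is the \emph{opposite} of the inequality you would need for \eqref{F-5.7}. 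So a pure sub-multiplicativity argument cannot give the upper bound; a subexponential correction factor is unavoidable. Also, the ``symmetry $s\leftrightarrow 1-s$ interchanging $\rho_0$ and $\rho_1$'' you invoke is not available: only $\rho_1$ is assumed $G$-invariant, so the two hypotheses play asymmetric roles throughout.

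\textbf{What the paper actually does.} The proof starts from the identity
\[
\Tr|\hat\rho_{0,n}^s\hat\rho_{1,n}^{1-s}|
=\Tr\bigl(\hat\rho_{1,n}^{1-s}\hat\rho_{0,n}^{2s}\hat\rho_{1,n}^{1-s}\bigr)^{1/2},
\]
and uses operator convexity of $x\mapsto x^{2s}$ for $2s\in[1,2]$ (resp.\ operator concavity for $2s\in[0,1]$) to compare $\hat\rho_{0,n}^{2s}$ with $E_{\cA_n}\bigl((\hat\rho_0^{\otimes n})^{2s}\bigr)$. Since $\hat\rho_{1,n}=\hat\rho_1^{\otimes n}\in\cA_n$, one can then unfold $E_{\cA_n}$ via the irreducible decomposition \eqref{F-2.3}, write each partial trace $E_i^{(n)}$ as an average of unitary conjugates using discrete Weyl operators (Lemma~\ref{L-A.3}), and apply norm inequalities for $\Tr(\sum_j X_j)^{1/2}$ versus $\sum_j\Tr X_j^{1/2}$ (from \cite{BH,AZ} for the upper bound, and operator concavity of $\sqrt{\,\cdot\,}$ together with \cite{Han} for the lower bound). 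This produces
\[
\Tr|\hat\rho_{0,n}^s\hat\rho_{1,n}^{1-s}|
\le\Bigl(\sum_{i=1}^{k_n}d_i^{(n)}\Bigr)^2\bigl(\Tr|\hat\rho_0^s\hat\rho_1^{1-s}|\bigr)^n
\qquad(1/2\le s\le 1),
\]
and the reverse direction for $0\le s\le 1/2$ without the prefactor. The polynomial growth \eqref{F-2.4} then kills the prefactor in the limit. None of these ingredients --- operator convexity/concavity of $x^{2s}$, the block decomposition of $E_{\cA_n}$, the Weyl-operator averaging, the square-root norm inequalities, or the growth bound \eqref{F-2.4} --- appears in your outline, and they are not replaceable by Araki--Lieb--Thirring or H\"older alone.
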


\begin{proof}
With the same notations as in Section 2, $\hat\rho_{0,n}$ is written as
$$
\hat\rho_{0,n}=E_{\cA_n}(\hat\rho_0^{\otimes n})
=\sum_{i=1}^{k_n}E_i^{(n)}(P_i^{(n)}\hat\rho_0^{\otimes n}P_i^{(n)}),
$$
(see \eqref{F-2.3}) while $\hat\rho_{1,n}=\hat\rho_1^{\otimes n}\in\cA_n$ by the
assumption $\hat\rho_1\in\cA_1$.

First, we prove \eqref{F-5.7}. Since $1\le2s\le2$, note that $x^{2s}$ is an operator
convex function on $[0,+\infty)$. Hence we have
$\hat\rho_{0,n}^{2s}\le E_{\cA_n}((\hat\rho_0^{\otimes n})^{2s})$ so that
\begin{align}
\Tr|\hat\rho_{0,n}^s\hat\rho_{1,n}^{1-s}|
&=\Tr(\hat\rho_{1,n}^{1-s}\hat\rho_{0,n}^{2s}\hat\rho_{1,n}^{1-s})^{1/2} \nonumber\\
&\le\Tr(E_{\cA_n}(\hat\rho_{1,n}^{1-s}(\hat\rho_0^{\otimes n})^{2s}
\hat\rho_{1,n}^{1-s}))^{1/2} \nonumber\\
&=\Tr\Biggl(\sum_{i=1}^{k_n}E_i^{(n)}(P_i^{(n)}\hat\rho_{1,n}^{1-s}
(\hat\rho_0^{\otimes n})^{2s}\hat\rho_{1,n}^{1-s}P_i^{(n)})\Biggr)^{1/2}. \label{F-5.8}
\end{align}
Set $A_i^{(n)}:=P_i^{(n)}\hat\rho_{1,n}^{1-s}(\hat\rho_0^{\otimes n})^{2s}
\hat\rho_{1,n}^{1-s}P_i^{(n)}$ for $1\le i\le k_n$.
By Lemma \ref{L-A.3} of the Appendix, for $1\le i\le k_n$ there are unitaries
$U_{i,j}^{(n)}\in I_{m_i^{(n)}}\otimes M_{d_i^{(n)}}$, $1\le j\le(d_i^{(n)})^2$, such that
$$
E_i^{(n)}(A_i^{(n)})={1\over (d_i^{(n)})^2}\sum_{j=1}^{(d_i^{(n)})^2}
U_{i,j}^{(n)}A_i^{(n)}U_{i,j}^{(n)*}.
$$
Hence we have
\begin{align}
\Tr\Biggl(\sum_{i=1}^{k_n}E_i^{(n)}(A_i^{(n)})\Biggr)^{1/2}
&\le\Tr\Biggl(\sum_{i=1}^{k_n}\sum_{j=1}^{(d_i^{(n)})^2}
U_{i,j}^{(n)}A_i^{(n)}U_{i,j}^{(n)*}\Biggr)^{1/2} \nonumber\\
&\le\sum_{i=1}^{k_n}\sum_{j=1}^{(d_i^{(n)})^2}
\Tr(U_{i,j}^{(n)}A_i^{(n)}U_{i,j}^{(n)*})^{1/2} \nonumber\\
&=\sum_{i=1}^{k_n}(d_i^{(n)})^2\Tr(A_i^{(n)})^{1/2}. \label{F-5.9}
\end{align}
In the above, the first inequality is just removing $1/(d_i^{(n)})^2$, and
for the second inequality, see \cite[Eq.\,(12)]{BH} (or an extended result in \cite{AZ}).
Combining \eqref{F-5.8} and \eqref{F-5.9} yields
\begin{align*}
\Tr|\hat\rho_{0,n}^s\hat\rho_{1,n}^{1-s}|
&\le\sum_{i=1}^{k_n}(d_i^{(n)})^2\Tr(P_i^{(n)}\hat\rho_{1,n}^{1-s}
(\hat\rho_0^{\otimes n})^{2s}\hat\rho_{1,n}^{1-s}P_i^{(n)})^{1/2} \\
&=\sum_{i=1}^{k_n}(d_i^{(n)})^2\Tr\bigl((\hat\rho_{1,n}^{1-s}
(\rho_0^{\otimes n})^{2s}\hat\rho_{1,n}^{1-s})^{1/2}P_i^{(n)}
(\hat\rho_{1,n}^{1-s}(\rho_0^{\otimes n})^{2s}\hat\rho_{1,n}^{1-s})^{1/2}\bigr)^{1/2} \\
&\le\Biggl(\sum_{i=1}^{k_n}d_i^{(n)}\Biggr)^2
\Tr(\hat\rho_{1,n}^{1-s}(\rho_0^{\otimes n})^{2s}\hat\rho_{1,n}^{1-s})^{1/2} \\
&=\Biggl(\sum_{i=1}^{k_n}d_i^{(n)}\Biggr)^2
\Tr|(\hat\rho_0^{\otimes n})^s(\hat\rho_1^{\otimes n})^{1-s}| \\
&=\Biggl(\sum_{i=1}^{k_n}d_i^{(n)}\Biggr)^2
\bigl(\Tr|\hat\rho_0^s\hat\rho_1^{1-s}|\bigr)^n.
\end{align*}
Thanks to \eqref{F-2.4} we obtain inequality \eqref{F-5.7}.

Next, we prove \eqref{F-5.6}. Since $x^{2s}$ is operator concave on $[0,+\infty)$
thanks to $0\le2s\le1$, we have $\rho_{0,n}^{2s}\ge E_{\cA_n}((\rho_0^{\otimes n})^{2s})$.
Hence inequality \eqref{F-5.8} is reversed. Inequality \eqref{F-5.9} is also reversed as
follows:
\begin{align*}
\Tr\Biggl(\sum_{i=1}^{k_n}E_i^{(n)}(A_i^{(n)})\Biggr)^{1/2}
&=\Tr\sum_{i=1}^{k_n}\bigl(E_i^{(n)}(A_i^{(n)})\bigr)^{1/2} \\
&\ge\Tr\sum_{i=1}^{k_n}E_i^{(n)}\bigl((A_i^{(n)})^{1/2}\bigr) \\
&\ge\Tr\sum_{i=1}^{k_n}P_i^{(n)}(\hat\rho_{1,n}^{1-s}
(\hat\rho_0^{\otimes n})^{2s}\hat\rho_{1,n}^{1-s})^{1/2}P_i^{(n)} \\
&=\bigl(\Tr|\hat\rho_0^s\hat\rho_1^{1-s}|\bigr)^n.
\end{align*}
In the above, the first inequality follows by the operator concavity of the square root
function, and the second one follows from \cite{Han}. Hence we have inequality
\eqref{F-5.6}.
\end{proof}

Theorem \ref{T-5.3} for $s=1/2$ together with \eqref{F-5.5} yields 

\begin{cor}\label{C-5.4}
If $\hat\rho_1\in\cA_1$ then
$$
\lim_{n\to\infty}{1\over n}\log F(\rho_{0,n},\rho_{1,n})
=\inf_{n\ge1}{1\over n}\log F(\rho_{0,n},\rho_{1,n})=\log F(\rho_0,\rho_1).
$$
\end{cor}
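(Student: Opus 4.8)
The plan is to observe that the fidelity is exactly the $s=1/2$ instance of the quantity controlled by Theorem \ref{T-5.3}, so the existence of the limit and its value $\log F(\rho_0,\rho_1)$ follow directly from that theorem, while the identification of the limit with the infimum comes essentially for free from the monotonicity bound \eqref{F-5.5}. No new analytic input is needed beyond what is already in the excerpt.

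First I would record that, by the definition of the fidelity,
$$
F(\rho_{0,n},\rho_{1,n})=\Tr\big|\hat\rho_{0,n}^{1/2}\hat\rho_{1,n}^{1/2}\big|,
$$
so that $\frac1n\log F(\rho_{0,n},\rho_{1,n})$ is precisely $\frac1n\log\Tr|\hat\rho_{0,n}^s\hat\rho_{1,n}^{1-s}|$ evaluated at $s=1/2$. The value $s=1/2$ lies in both ranges $[0,1/2]$ and $[1/2,1]$ appearing in Theorem \ref{T-5.3}, so both \eqref{F-5.6} and \eqref{F-5.7} are available. Inequality \eqref{F-5.6} gives
$$
\liminf_{n\to\infty}\frac1n\log F(\rho_{0,n},\rho_{1,n})\ge\log F(\rho_0,\rho_1),
$$
while \eqref{F-5.7} gives the reverse inequality for the $\limsup$. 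Hence the limit exists and equals $\log F(\rho_0,\rho_1)$.

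It then remains to identify this common value with the infimum. Here I would invoke \eqref{F-5.5}, namely $F(\rho_{0,n},\rho_{1,n})\ge F(\rho_0,\rho_1)^n$, which upon taking $\frac1n\log$ shows that $\frac1n\log F(\rho_{0,n},\rho_{1,n})\ge\log F(\rho_0,\rho_1)$ for every $n$, whence $\inf_{n\ge1}\frac1n\log F(\rho_{0,n},\rho_{1,n})\ge\log F(\rho_0,\rho_1)$. On the other hand, since the sequence converges to $\log F(\rho_0,\rho_1)$, its infimum is at most this limit. Combining the two bounds yields the claimed equality of the infimum and the limit.

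I expect no serious obstacle, since Theorem \ref{T-5.3} does all the analytic work. The only point worth being careful about is that the infimum-equals-limit claim here does \emph{not} rest on a Fekete-type subadditivity argument (the paper does not establish subadditivity of $n\mapsto\log F(\rho_{0,n},\rho_{1,n})$); rather, it follows from the a priori lower bound \eqref{F-5.5} together with the already-proved convergence. I would also note that the statement remains valid in the orthogonal case under the convention $\log 0=-\infty$, in which all three quantities equal $-\infty$.
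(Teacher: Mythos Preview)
Your proposal is correct and follows the same route as the paper's own one-line argument, which simply says that Theorem \ref{T-5.3} at $s=1/2$ together with \eqref{F-5.5} yields the result. One small remark: your use of \eqref{F-5.6} for the lower bound on the $\liminf$ is actually redundant, since \eqref{F-5.5} already gives the termwise inequality $\frac1n\log F(\rho_{0,n},\rho_{1,n})\ge\log F(\rho_0,\rho_1)$ for every $n$, which simultaneously supplies both the $\liminf$ bound and the infimum identification; only \eqref{F-5.7} is genuinely needed from Theorem \ref{T-5.3}.
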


Note that the logarithmic fidelity $ -\log F(\cdot\,,\,\cdot)$ is a generalized relative
entropy in the sense that (i) it takes strictly positive values on unequal states and zero
if its arguments are equal, (ii) it is monotonically decreasing under trace-preserving
completely positive maps, and (iii) it is jointly convex in its arguments. In view of this,
Corollary \ref{C-5.4} is a direct analogue of Theorem \ref{T-5.1}. The extremal case in
Example \ref{E-6.1} of the next section shows that assuming the $G$-invariance of $\rho_1$
is essential for Theorem \ref{T-5.3} and Corollary \ref{C-5.4}.

As Example \ref{E-6.2} shows, the $G$-invariance of $\rho_1$ does not imply the same
asymptotics for the restricted and the unrestricted minimal error probabilities. However,
one can still obtain the following non-trivial bound:

\begin{thm}\label{T-5.5}
If $\hat\rho_1\in\cA_1$ then
\begin{align*}
\lim_{n\to\infty}{1\over n}\log P_{\min}(\rho_0^{\otimes n}:\rho_1^{\otimes n})
&\le\liminf_{n\to\infty}{1\over n}\log P_{\min}(\rho_{0,n}:\rho_{1,n}) \\
&\le\limsup_{n\to\infty}{1\over n}\log P_{\min}(\rho_{0,n}:\rho_{1,n}) \\
&\le{1\over2}\lim_{n\to\infty}{1\over n}
\log P_{\min}(\rho_0^{\otimes n}:\rho_1^{\otimes n}).
\end{align*}
\end{thm}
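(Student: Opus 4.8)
The plan is to prove the three inequalities separately, noting that the middle one is trivial and the leftmost is a pure monotonicity statement that does not use the $G$-invariance of $\rho_1$; all the content sits in the rightmost inequality, which I would route through the fidelity.

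First I would establish $P_{\min}(\rho_{0,n}:\rho_{1,n})\ge P_{\min}(\rho_0^{\otimes n}:\rho_1^{\otimes n})$. The point is that every $G$-invariant test is admissible for the unrestricted discrimination of $\rho_0^{\otimes n}$ and $\rho_1^{\otimes n}$ and produces there exactly the same pair of errors: for $T_n\in\cA_n$ one has $\rho_{k,n}(T_n)=\rho_k^{\otimes n}(E_{\cA_n}(T_n))=\rho_k^{\otimes n}(T_n)$ since $E_{\cA_n}$ fixes $\cA_n$, so that $\beta_{0,n}(T_n)+\beta_{1,n}(T_n)=\rho_0^{\otimes n}(I-T_n)+\rho_1^{\otimes n}(T_n)$. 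Minimizing over $T_n\in\cA_n$ is thus a minimization of the unrestricted error over a subset of all tests, which gives the claimed bound. Taking $\frac1n\log$, passing to $\liminf$, and invoking \eqref{F-4.4} together with \eqref{F-4.2} yields the first inequality; the inequality $\liminf\le\limsup$ is immediate.

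For the last inequality I would use the fidelity as an intermediary. The upper fidelity bound in \eqref{F-5.4}, applied to $\rho_{0,n},\rho_{1,n}$, gives $P_{\min}(\rho_{0,n}:\rho_{1,n})\le\tfrac12 F(\rho_{0,n},\rho_{1,n})$; taking $\frac1n\log$, passing to $\limsup$, and invoking Corollary \ref{C-5.4} — which is exactly where $\hat\rho_1\in\cA_1$ enters, since it forces the fidelity rate down to its single-copy value — produces
\[
\limsup_{n\to\infty}\frac1n\log P_{\min}(\rho_{0,n}:\rho_{1,n})\le\log F(\rho_0,\rho_1).
\]
It then remains to compare $\log F(\rho_0,\rho_1)$ with the unrestricted Chernoff exponent. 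By Lemma \ref{L-5.2}, $\Tr\hat\rho_0^s\hat\rho_1^{1-s}\ge F(\rho_0,\rho_1)^2$ for every $s\in[0,1]$, hence $\psin(s)\ge2\log F(\rho_0,\rho_1)$ there, and minimizing the left side over $s\in[0,1]$ gives $2\log F(\rho_0,\rho_1)\le\min_{0\le s\le1}\psin(s)$. Combining this with the displayed bound and with \eqref{F-4.4}, which identifies $\min_s\psin(s)$ with the unrestricted rate, yields the rightmost inequality.

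I expect the only genuinely substantive ingredient to be Corollary \ref{C-5.4}: the whole theorem hinges on the fidelity of the invariant-projected states not decaying faster than $F(\rho_0,\rho_1)^n$, which is available only because $\rho_1$ is $G$-invariant. The awkward factor $\tfrac12$ is an artifact of passing through the fidelity with mismatched powers — linearly in \eqref{F-5.4} but quadratically in Lemma \ref{L-5.2} — and I would not try to remove it, since Example \ref{E-6.2} indicates that the restricted and unrestricted Chernoff rates can genuinely differ.
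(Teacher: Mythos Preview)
Your proposal is correct and follows essentially the same route as the paper: the first two inequalities are declared obvious there, and the last is obtained by exactly your chain \eqref{F-5.4} $\to$ Corollary \ref{C-5.4} $\to$ Lemma \ref{L-5.2} $\to$ \eqref{F-4.4}, with the $G$-invariance of $\rho_1$ entering precisely at Corollary \ref{C-5.4}. Your explicit monotonicity argument for the first inequality is a welcome elaboration of what the paper leaves implicit.
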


\begin{proof}
By Lemma \ref{L-5.2} and Corollary \ref{C-5.4} we have
\begin{align*}
{1\over2}\lim_{n\to\infty}{1\over n}
\log P_{\min}(\rho_0^{\otimes n}:\rho_1^{\otimes n})
&={1\over2}\min_{0\le s\le1}\log\Tr\hat\rho_0^s\hat\rho_1^{1-s} \\
&\ge\log F(\rho_0,\rho_1) \\
&=\lim_{n\to\infty}{1\over n}\log F(\rho_{0,n},\rho_{1,n}) \\
&\ge\limsup_{n\to\infty}{1\over n}\log P_{\min}(\rho_{0,n}:\rho_{1,n})
\end{align*}
thanks to \eqref{F-5.4}. Hence the last inequality follows, and the others are obvious.
\end{proof}

By \eqref{F-4.2}--\eqref{F-4.4}, the inequalities of the above theorem can be
rewritten as
$$
-\chdist{\rho_0}{\rho_1}\le\cli{\rho_0}{\rho_1}
\le\cls{\rho_0}{\rho_1}\le-{1\over2}\chdist{\rho_0}{\rho_1}.
$$
Comparing these with the inequalities of Proposition \ref{P-4.2} and also taking account
of Remark \ref{rem:differentiability}, we have the following:

\begin{cor}\label{C-5.6}
Assume that $\rho_1$ is $G$-invariant. Then
\begin{equation*}
\frac{1}{4}\chdist{\rho_0}{\rho_1}\le \chmdist{\rho_0}{\rho_1}\le \chdist{\rho_0}{\rho_1}.
\end{equation*}
Moreover, ${1\over2}\chdist{\rho_0}{\rho_1}\le \chmdist{\rho_0}{\rho_1}$ holds whenever
$\psi$ is differentiable on $(0,1)$.
\end{cor}

\begin{remark}\label{R-5.7}\rm
The constant $1/2$ in Theorem \ref{T-5.5} is actually the best possible, as will be
seen in Remark \ref{R-6.4} of the next section. This also shows that in the case of a
differentiable $\psi$, the bound
${1\over2}\chdist{\rho_0}{\rho_1}\le \chmdist{\rho_0}{\rho_1}\le\chdist{\rho_0}{\rho_1}$
is the best possible.
\end{remark}

\section{Restricted vs.~unrestricted measurements: examples}\label{sec:examples}
\setcounter{equation}{0}

In this section, we illustrate, through some examples, the difference between the
performance of $G$-invariant measurements and that of unrestricted ones. As the
following example shows, the difference can be as extreme as possible even in the
classical situation where the densities corresponding to the null and the alternative
hypotheses are commuting. This also shows that the assumption that $\rho_1$ is
$G$-invariant cannot be removed in Theorem \ref{T-5.1}.

\begin{example}\label{E-6.1}\rm{\bf(Two commuting states with $\bZ_2$-symmetry)}\quad
Let $G:=\bZ_2=\{\pm1\}$ and $u$ be the representation of $G$ on $\hil:=\bC^2$ with
$u_{-1}:=\bmatrix1&0\\0&-1\endbmatrix$. Then the $\bZ_2$-fixed point subalgebra of
$\cB(\hil)^{\otimes n}$ is $\cA_n=M_{2^{n-1}}\oplus M_{2^{n-1}}$. Consider a commuting set
of states $\sigma_\lambda$, $0\le\lambda\le1$, given by
$$
\hat\sigma_\lambda:=\lambda\,|+\>\<+|+(1-\lambda)\,|-\>\<-|
=\lambda\bmatrix1/2&1/2\\1/2&1/2\endbmatrix
+(1-\lambda)\bmatrix1/2&-1/2\\-1/2&1/2\endbmatrix,
$$
where $|+\rangle:=(1/\sqrt2,1/\sqrt2)$ and $|-\rangle:=(1/\sqrt2,-1/\sqrt2)$. We write
$$
\hat\sigma_\lambda^{\otimes n}
=\sum_{i=0}^n\lambda^i(1-\lambda)^{n-i}E_{n,i}\quad\mbox{with}\quad
E_{n,i}:=\sum_{e_j=\pm,\,\#\{j:e_j=+\}=i}\otimes_{j=1}^{n}\pr{e_j}.
$$
Since $u_{-1}^{\otimes n}E_{n,i}u_{-1}^{*\otimes n}=E_{n,n-i}$, we have
\begin{equation}\label{F-6.1}
E_{\cA_n}(\hat\sigma_\lambda^{\otimes n})
={1\over2}(\hat\sigma_\lambda^{\otimes n}
+u_{-1}^{\otimes n}\hat\sigma_\lambda^{\otimes n}u_{-1}^{*\otimes n}) 
=\sum_{i=0}^n{\lambda^i(1-\lambda)^{n-i}+\lambda^{n-i}(1-\lambda)^i\over2}\,E_{n,i}.
\end{equation}

Now let $\rho_0:=\sigma_\lambda$ and $\rho_1:=\sigma_\mu$ with any $\lambda,\mu\in[0,1]$.
When $0\le s\le1$, noting that $2^{s-1}(a^s+b^s)\le(a+b)^s\le a^s+b^s$ for all $a,b\ge0$
and that $\Tr E_{n,i}={n\choose i}$, we see from \eqref{F-6.1} that
$\Tr \rho_{0,n}^s\rho_{1,n}^{1-s}$ is upper bounded by
\begin{align}
&{1\over2}\sum_{i=0}^n
\bigl\{(\lambda^i(1-\lambda)^{n-i})^s+(\lambda^{n-i}(1-\lambda)^i)^s\bigr\}
\bigl\{(\mu^i(1-\mu)^{n-i})^{1-s}+(\mu^{n-i}(1-\mu))^{1-s}\bigr\}{n\choose i} \nonumber\\
&\quad=\sum_{i=0}^n\bigl\{(\lambda^s\mu^{1-s})^i((1-\lambda)^s(1-\mu)^{1-s})^{n-i}
+(\lambda^s(1-\mu)^{1-s})^i((1-\lambda)^s\mu^{1-s})^{n-i}\bigr\}{n\choose i} \nonumber\\
&\quad=(\lambda^s\mu^{1-s}+(1-\lambda)^s(1-\mu)^{1-s})^n
+(\lambda^s(1-\mu)^{1-s}+(1-\lambda)^s\mu^{1-s})^n \label{F-6.2}
\end{align}
and also lower bounded by $1/2$ times \eqref{F-6.2}. Therefore,
\begin{align*}
\psi(s)&=\max\bigl\{\log(\lambda^s\mu^{1-s}+(1-\lambda)^s(1-\mu)^{1-s}),
\log(\lambda^s(1-\mu)^{1-s}+(1-\lambda)^s\mu^{1-s})\bigr\} \\
&=\max\bigl\{\psin(s\,|\,\sigma_\lambda\,\|\,\sigma_\mu),
\psin(s\,|\,\sigma_{\lambda}\,\|\,\sigma_{1-\mu})\bigr\},
\qquad0\le s\le1,
\end{align*}
where $\psin(s\,|\,\sigma_\lambda\,\|\,\sigma_\mu)$ denotes $\psin(s)$ for
$\rho_0=\sigma_{\lambda}$ and $\rho_1=\sigma_{\mu}$. Note that 
\begin{equation}\label{cases of psi}
\psi(s)=\begin{cases}
\psin(s\,|\,\sigma_\lambda\,\|\,\sigma_\mu) & \text{if $(1/2-\lambda)(1/2-\mu)\ge0$},\\
\psin(s\,|\,\sigma_\lambda\,\|\,\sigma_{1-\mu})& \text{if $(1/2-\lambda)(1/2-\mu)<0$}.
\end{cases}
\end{equation}
In particular, $\psi$ is differentiable on $(0,1)$ except when $\lambda=0$ and $\mu=1$ or
the other way around.

Note that
$\sigma_{1-\mu}=\sigma_\mu\circ\mathrm{Ad}\,u_{-1}$, and \eqref{cases of psi} yields
\begin{align*}
\chmdist{\rho_0}{\rho_1}&=\min\{\chdist{\sigma_\lambda}{\sigma_\mu},
\chdist{\sigma_{\lambda}}{\sigma_{1-\mu}}\}, \\
\hmdist{\rho_0}{\rho_1}{r}&=\min\{\hdist{\sigma_\lambda}{\sigma_\mu}{r},
\hdist{\sigma_\lambda}{\sigma_\mu}{r}\},\quad r\ge0, \\
\msr{\rho_0}{\rho_1}&=\min\{\sr{\sigma_\lambda}{\sigma_\mu},
\sr{\sigma_{\lambda}}{\sigma_{1-\mu}}\},
\end{align*}
and hence the first inequality in each of \eqref{chineq}--\eqref{steinineq} hold with
equality. On the other hand, \eqref{cases of psi} shows that if $(1/2-\lambda)(1/2-\mu)<0$
then
$$
\psi(s)=\psin(s\,|\,\sigma_\lambda\,\|\,\sigma_{1-\mu})
>\psin(s\,|\,\sigma_\lambda\,\|\,\sigma_{\mu})=\psin(s)
$$
so that for any $r\ge0$,
\begin{equation}\label{strict inequalities}
\chmdist{\rho_0}{\rho_1}<\chdist{\rho_0}{\rho_1},\ds
\hmdist{\rho_0}{\rho_1}{r}<\hdist{\rho_0}{\rho_1}{r},\ds
\msr{\rho_0}{\rho_1}<\sr{\rho_0}{\rho_1}.
\end{equation}
The differentiability of $\psi$ on $(0,1)$ implies that the identities of Remark
\ref{rem:differentiability} hold (as long as $0<\mu<1$), and hence
\eqref{strict inequalities} shows that $G$-invariant measurements perform strictly worse
than unrestricted ones in all of the settings of the Chernoff and Hoeffding bounds and
of Stein's lemma.  In particular, in the extremal case where $\rho_0=\sigma_0$ and
$\rho_1=\sigma_1$, the two states have orthogonal supports and hence unrestricted
measurements yield a perfect distinguishability, while one can easily see that 
$\rho_{0,n}=\rho_{1,n}$ for all $n$ so that the states are completely indistinguishable
with $G$-invariant measurements. Note that in this case $\psi(s)=0$ while
$\psin(s)=-\infty$ for all $s\in\bR$.

Finally, we show for completeness that $\psi(s)$ exists and is differentiable on the whole
real line. Replacing $\lambda$ by $1-\lambda$, $\mu$ by $1-\mu$ and $s$ by $1-s$ if
necessary, we may assume that $0<\lambda<\mu\le1/2$, since the cases $\lambda=0$ and
$\lambda=\mu$ are easy to verify. The $\psi(s)$ for $s\in[0,1]$ has been computed above.
When $s\le0$, since
$$
{\lambda^i(1-\lambda)^{n-i}+\lambda^{n-i}(1-\lambda)^i\over2}
\le\lambda^i(1-\lambda)^{n-i},\qquad1\le i\le[n/2],
$$
we see from \eqref{F-6.1} that $\Tr\hat\rho_{0,n}^s\hat\rho_{1,n}^{1-s}$ is lower bounded
by
$$
\sum_{i=0}^{[n/2]}(\lambda^i(1-\lambda)^{n-i})^s
\biggl({\mu^i(1-\mu)^{n-i}\over2}\biggr)^{1-s}{n\choose i}.
$$
It is also upper bounded by
\begin{align*}
&\sum_{i=0}^{[n/2]}\biggl({\lambda^i(1-\lambda)^{n-i}\over2}\biggr)^s
(\mu^i(1-\mu)^{n-i})^{1-s}{n\choose i} \\
&\qquad\qquad+\sum_{i=[n/2]+1}^n\biggl({\lambda^{n-i}(1-\lambda)^i\over2}\biggr)^s
(\mu^{n-i}(1-\mu)^i)^{1-s}{n\choose i} \\
&\qquad\le2\sum_{i=0}^{[n/2]}\biggl({\lambda^i(1-\lambda)^{n-i}\over2}\biggr)^s
(\mu^i(1-\mu)^{n-i})^{1-s}{n\choose i}.
\end{align*}
Hence, by Lemma \ref{L-A.5} of the Appendix we have
\begin{align*}
\psi(s)&=\lim_{n\to\infty}{1\over n}\log\sum_{i=0}^{[n/2]}
{n\choose i}(\lambda^s\mu^{1-s})^i((1-\lambda)^s(1-\mu)^{1-s})^{n-i} \\
&=\begin{cases}
\log(\lambda^s\mu^{1-s}+(1-\lambda)^s(1-\mu)^{1-s}) & \text{if $s^*\le s\le0$}, \\
{s\over2}\log\lambda(1-\lambda)+{1-s\over2}\log\mu(1-\mu)+\log2 & \text{if $s\le s^*$},
\end{cases}
\end{align*}
where $s^*\in(-\infty,0]$ satisfies
$$
\lambda^{s^*}\mu^{1-s^*}=(1-\lambda)^{s^*}(1-\mu)^{1-s^*}\quad\mbox{or}\quad
\biggl({(1-\lambda)\mu\over\lambda(1-\mu)}\biggr)^{s^*}={\mu\over1-\mu}.
$$
When $s\ge1$, the computation using Lemma \ref{L-A.5} is similar. Summing up all, we write
$$
\psi(s)=\begin{cases}
\log(\lambda^s\mu^{1-s}+(1-\lambda)^s(1-\mu)^{1-s}) & \text{if $s\ge s^*$}, \\
{s\over2}\log\lambda(1-\lambda)+{1-s\over2}\log\mu(1-\mu)+\log2 & \text{if $s\le s^*$},
\end{cases}
$$
which shows the differentiability of $\psi(s)$ at any $s\in\bR$ including $s=s^*$.
\end{example}

The next example shows that in the settings of the Chernoff and the Hoeffding bounds
the restricted measurements may yield a strictly worse performance even if $\rho_1$ is
$G$-invariant.

\begin{example}\label{E-6.2}\rm{\bf (A pure state vs.\ an invariant mixed state with
$\bT$-symmetry)}\quad
Let $\hil:=\bC^2$ and the states to discriminate be
$\hat\rho_0:=\bmatrix1/2&1/2\\1/2&1/2\endbmatrix$ and
$\hat\rho_1:=\bmatrix\alpha&0\\0&1-\alpha\endbmatrix$ with $0<\alpha<1$. In the
unrestricted scenario, we have 
\begin{equation*}
\psin(s):=\log\Tr\hat\rho_0^s\hat\rho_1^{1-s}
=\log{\alpha^{1-s}+(1-\alpha)^{1-s}\over2}.
\end{equation*}
Since
\begin{equation*}
{d\over ds}(\alpha^{1-s}+(1-\alpha)^{1-s})
=-\alpha^{1-s}\log\alpha-(1-\alpha)^{1-s}\log(1-\alpha)>0,
\end{equation*}
we get 
\begin{equation*}
\chdist{\rho_0}{\rho_1}=-\min_{0\le s\le1}\psin(s)=-\psin(0)=\log 2.
\end{equation*} 
Hence, by \eqref{F-4.4},
\begin{equation}\label{F-6.5}
\lim_{n\to\infty}{1\over n}\log P_{\min}(\rho_0^{\otimes n}:\rho_1^{\otimes n})
=-\chdist{\rho_0}{\rho_1}=-\log2.
\end{equation}

Now let $G:=\bT=\{\zeta\in\bC:|\zeta|=1\}$ and define the unitary representation
$u_\zeta:=\bmatrix1&0\\0&\zeta\endbmatrix,\,\zeta\in\bT$, on $\hil$.
It is easy to see that $u_\zeta^{\otimes n}$ is diagonal with $1,\zeta,\ldots,\zeta^n$
standing in the diagonal entries, and $\zeta^i$ appears exactly ${n\choose i}$ times.
Hence,
\begin{equation*}
\cA_n=\bigoplus_{i=0}^nM_{n\choose i}, 
\end{equation*}
and one can also see that 
the Bratteli diagrams of the inclusions $\bC I\subset\cA_1\subset\cA_2\subset\cdots$
form the Pascal triangle. Note that $\rho_1$ is $G$-invariant. The $G$-invariant
reductions of $\rho_0^{\otimes n}$ and $\rho_1^{\otimes n}$ are given by 
\begin{equation}\label{F-6.6}
\hat\rho_{0,n}=E_{\cA_n}(\hat\rho_0^{\otimes n})
=\sum_{i=0}^n{n\choose i}{1\over2^n}P_{n,i},\qquad
\hat\rho_{1,n}=\hat\rho_1^{\otimes n}
=\sum_{i=0}^n\alpha^i(1-\alpha)^{n-i}E_{n,i},
\end{equation}
where $E_{n,i}$ is the identity of $M_{n\choose i}$ with $\sum_{i=0}^nE_{n,i}=I$, and
$P_{n,i}$ is a rank one projection with $P_{n,i}\le E_{n,i}$. Therefore,
$$
\Tr\hat\rho_{0,n}^s\hat\rho_{1,n}^{1-s}=\sum_{i=0}^n
\biggl({n\choose i}{1\over2^n}\biggr)^s(\alpha^i(1-\alpha)^{n-i})^{1-s},
\qquad s\in\bR.
$$
Take $a=\alpha^{1-s}$ and $b=(1-\alpha)^{1-s}$ in Lemma \ref{L-A.4} of the Appendix to
obtain
$$
\psi(s)=\begin{cases}
s\log\bigl(\alpha^{1-s\over s}+(1-\alpha)^{1-s\over s}\bigr)-s\log2
& \text{if $s>0$}, \\
(1-s)\log\max\{\alpha,1-\alpha\}-s\log2 & \text{if $s\le0$}.
\end{cases}
$$
It is obvious that $\psi$ is differentiable at any $s\ne0$. To check the
differentiability at $s=0$, assume $\alpha>1-\alpha$ and set
$\beta:=(1-\alpha)/\alpha\in(0,1)$. Then, for $s>0$ we have
$$
\psi(s)=(1-s)\log\alpha+s\log\bigl(1+\beta^{1-s\over s}\bigr)-s\log2,
$$
and the differentiability at $s=0$ follows from
$$
\lim_{s\searrow0}{s\log\bigl(1+\beta^{1-s\over s}\bigr)\over s}
=\lim_{s\searrow0}\log\bigl(1+\beta^{1-s\over s}\bigr)=0.
$$
The case $\alpha<1-\alpha$ goes in the same way, and the case $\alpha=1/2$ is easy to
verify. Consequently, $\psi$ exists and is differentiable on the whole real line, and
hence the identities of Remark \ref{rem:differentiability} hold. Moreover,
by Lemma \ref{hoeffding relentr} and Theorem \ref{T-5.1},
\begin{equation*}
\msr{\rho_0}{\rho_1}=S(\rho_0\,\|\,\rho_1)
=\psi'(1)=-{\log\alpha+\log(1-\alpha)\over2}
\end{equation*}
so that the error exponents for Stein's lemma are the same in the unrestricted and the
$G$-invariant cases.

If $\alpha=1/2$ (i.e., $\hat\rho_1=2^{-1}I_2$), then $\psi(s)=-(1-s)\log 2=\psin(s)$ for
all $s\ge0$ so that  $\chmdist{\rho_0}{\rho_1}=\chdist{\rho_0}{\rho_1}$ and
$\hmdist{\rho_0}{\rho_1}{r}=H(r\,|\,\rho_0\,\|\,\rho_1)$ for all $r\ge0$.
Therefore, by \eqref{F-4.10} and \eqref{F-6.5},
$$
\lim_{n\to\infty}{1\over n}\log P_{\min}(\rho_{0,n}:\rho_{1,n})=-\log2
=\lim_{n\to\infty}{1\over n}\log P_{\min}(\rho_0^{\otimes n}:\rho_1^{\otimes n}).
$$
Assume for the rest that $\alpha\ne 1/2$. Then
$$
\psi(0)=\log\max\{\alpha,1-\alpha\}>-\log2=\psin(0),\qquad\psi(1)=0=\psin(1),
$$
and for any $s\in(0,1)$,
\begin{equation*}
\psi(s)=\log\Biggl({\alpha^{1-s\over s}+(1-\alpha)^{1-s\over s}\over2}\Biggr)^s
>\log{\alpha^{1-s}+(1-\alpha)^{1-s}\over2}=\psin(s)
\end{equation*}
thanks to the strict concavity of $x\mapsto x^s$, $x\ge0$. Therefore, again by
\eqref{F-4.10} and \eqref{F-6.5},
\begin{equation*}
\lim_{n\to\infty}{1\over n}\log P_{\min}(\rho_{0,n}:\rho_{1,n})
=-\chmdist{\rho_0}{\rho_1}>-\chdist{\rho_0}{\rho_1}
=\lim_{n\to\infty}{1\over n}\log P_{\min}(\rho_0^{\otimes n}:\rho_1^{\otimes n}).
\end{equation*}
Furthermore, for any $r>-\psi(1)=0$, by \eqref{F-3.10} we have
$\hmdist{\rho_0}{\rho_1}{r}=(-s_0r-\psi(s_0))/(1-s_0)$ for some $s_0\in[0,1)$ (see the
proof of Lemma \ref{L-A.2}), and hence
$$
\hmdist{\rho_0}{\rho_1}{r}<{-s_0r-\psin(s_0)\over1-s_0}
\le H(r\,|\,\rho_0\,\|\,\rho_1).
$$
Remark \ref{rem:differentiability} then implies that
\begin{align*}
&\hli{\rho_0}{\rho_1}{r}=\hls{\rho_0}{\rho_1}{r}
=\hl{\rho_0}{\rho_1}{r}=-\hmdist{\rho_0}{\rho_1}{r} \\
&\quad>\underline h(r\,|\,\rho_0\,\|\,\rho_1)=\overline h(r\,|\,\rho_0\,\|\,\rho_1)
=h(r\,|\,\rho_0\,\|\,\rho_1)=-H(r\,|\,\rho_0\,\|\,\rho_1),\ds\ds r>0.
\end{align*}
That is, in both settings of the Chernoff and the Hoeffding bounds, the optimal
performance of the $G$-invariant measurements is strictly worse than that of the
unrestricted ones. Moreover, since the alternative hypothesis in this example is
$G$-invariant, one sees immediately that the inequalities
\begin{equation*}
\chmdist{\rho_0}{\rho_1}\le
\inf_{g\in G}\chdist{\rho_0}{\rho_1\circ\Ad u_g},\ds\ds
\hmdist{\rho_0}{\rho_1}{r}\le
\inf_{g\in G}\hdist{\rho_0}{\rho_1\circ\Ad u_g}{r}
\end{equation*}
of \eqref{chineq} and \eqref{hineq} cannot hold as an equality in general.
\end{example}

\begin{remark}\label{R-6.3}\rm
Note that if we reduce the group $G$ in the above example, then the difference between the
optimal performances of the restricted and the unrestricted measurements may disappear.
Indeed, consider the subgroup $G:=\bZ_2=\{\pm1\}$ of $\bT$ with the same representation
as in Example \ref{E-6.1}. Let $\rho_0$ and $\rho_1$ be the same as in Example \ref{E-6.2}.
Then $\cA_n=M_{2^{n-1}}\oplus M_{2^{n-1}}$ as in Example \ref{E-6.1} (but with a
different arrangement of basis) and $\rho_{1,n}$ is the same as in \eqref{F-6.6} while
$\rho_{0,n}$ is given by
$$
\hat{\rho}_{0,n}=\biggl({1\over2^n}J_{2^{n-1}}\biggr)
\oplus\biggl({1\over2^n}J_{2^{n-1}}\biggr),
$$
where $J_{2^{n-1}}$ is the $2^{n-1}\times2^{n-1}$ matrix of all entries equal to one.
Since
$$
\hat{\rho}_{0,n}^s=\biggl({1\over2^{s+n-1}}J_{2^{n-1}}\biggr)
\oplus\biggl({1\over2^{s+n-1}}J_{2^{n-1}}\biggr),
$$
we have
$$
\Tr\hat{\rho}_{0,n}^s\hat{\rho}_{1,n}^{1-s}
={1\over2^{s+n-1}}\sum_{i=0}^n{n\choose i}(\alpha^{n-i}(1-\alpha)^i)^{1-s}
={(\alpha^{1-s}+(1-\alpha)^{1-s})^n\over2^{s+n-1}}.
$$
Hence the function $\psi$ in this case is equal to $\psin$:
$$
\psi(s)=\lim_{n\to\infty}{1\over n}\log\Tr\hat\rho_{0,n}^s\hat\rho_{1,n}^{1-s}
=\log{\alpha^{1-s}+(1-\alpha)^{1-s}\over2}=\psin(s),
$$
and therefore, by Remark \ref{rem:differentiability},
\begin{equation*}
\lim_{n\to\infty}{1\over n}\log P_{\min}(\rho_{0,n}:\rho_{1,n})
=\lim_{n\to\infty}{1\over n}\log P_{\min}(\rho_0^{\otimes n}:\sigma_0^{\otimes n}).
\end{equation*}
\end{remark}

\begin{remark}\label{R-6.4}\rm
In the setting of Example \ref{E-6.2} we have seen that
$$
\min_{0\le s\le1}\psi(s)\le\psi(1/2)=-{1\over2}\log2
={1\over2}\min_{0\le s\le1}\psin(s).
$$
On the other hand,
$$
\psi'(s)=\log\bigl(\alpha^{1-s\over s}+(1-\alpha)^{1-s\over s}\bigr)
-{1\over s}\cdot{\alpha^{1-s\over s}\log\alpha+(1-\alpha)^{1-s\over s}\log(1-\alpha)
\over\alpha^{1-s\over s}+(1-\alpha)^{1-s\over s}}-\log2
$$
and hence
$$
\psi'(1/2)=-2(\alpha\log\alpha+(1-\alpha)\log(1-\alpha))-\log2.
$$
This shows that there exists an $\alpha^*\in(0,1)$ such that $\psi'(1/2)=0$ for
$\alpha=\alpha^*$ (a numerical computation shows $\alpha^*\approx0.11$). Hence, if
$\hat\rho_1=\bmatrix\alpha^*&0\\0&1-\alpha^*\endbmatrix$,
then $\psi'(1/2)=0$, and $\psi$ takes the minimum at $s=1/2$ so that
$$
\chmdist{\rho_0}{\rho_1}=\min_{0\le s\le1}\psi(s)
={1\over2}\min_{0\le s\le1}\psin(s)={1\over2}\chdist{\rho_0}{\rho_1}.
$$
Comparing this with \eqref{F-4.10}, we see that the constant $1/2$ in Theorem \ref{T-5.5}
is the best possible. 
\end{remark}

Finally, we consider the discrimination problem of two pure states of a spin-$\frac{1}{2}$
system with $G$-invariant measurements, with $G$ given as in Example \ref{E-6.2}.

\begin{example}\label{E-6.5}\rm{\bf(Two pure states with $\bT$-symmetry)}\quad
In the same setting as in Example \ref{E-6.2}, let $\rho_0$ and $\rho_1$ be pure states
with densities
$$
\hat\rho_0:=\bmatrix\lambda&\sqrt{\lambda(1-\lambda)} \\
\sqrt{\lambda(1-\lambda)}&1-\lambda\endbmatrix,\qquad
\hat\rho_1:=\bmatrix\mu&\sqrt{\mu(1-\mu)}\\\sqrt{\mu(1-\mu)}&1-\mu\endbmatrix,
$$
where $\lambda,\mu\in(0,1)$, $\lambda\ne\mu$. By looking at how $\cA_n$ in Example
\ref{E-6.2} is decomposed into the direct summands $M_{n\choose i}$ for $0\le i\le n$ and
also by looking at the entries of $\hat\rho_0^{\otimes n}$ and $\hat\rho_1^{\otimes n}$,
it is easy to see that
\begin{equation*}
\hat\rho_{0,n}=\bigoplus_{i=0}^n\lambda^{n-i}(1-\lambda)^iJ_{n\choose i},\qquad
\hat\rho_{1,n}=\bigoplus_{i=0}^n\mu^{n-i}(1-\mu)^iJ_{n\choose i},
\end{equation*}
where $J_{n\choose i}$ is the ${n\choose i}\times{n\choose i}$ matrix of all entries
equal to one. Therefore, for any $s\in\bR$,
\begin{align*}
\Tr\hat\rho_{0,n}^s\hat\rho_{1,n}^{1-s}
&=\sum_{i=0}^n(\lambda^{n-i}(1-\lambda)^i)^s(\mu^{n-i}(1-\mu)^i)^{1-s}
\,\Tr J_{n\choose i} \\
&=\sum_{i=0}^n{n\choose i}(\lambda^s\mu^{1-s})^{n-i}((1-\lambda)^s(1-\mu)^{1-s})^i \\
&=\bigl(\lambda^s\mu^{1-s}+(1-\lambda)^s(1-\mu)^{1-s}\bigr)^n.
\end{align*}
Hence we have
$$
\psi(s)={1\over n}\log\Tr\hat\rho_{0,n}^s\hat\rho_{1,n}^{1-s}
=\log\bigl(\lambda^s\mu^{1-s}+(1-\lambda)^s(1-\mu)^{1-s}\bigr)
$$
for all $s\in\bR$ and all $n\in\bN$. Consequently, $\psi$ is differentiable on $\bR$ and
by \eqref{F-4.10},
$$
\lim_{n\to\infty}{1\over n}\log P_{\min}(\rho_{0,n}:\rho_{1,n})
=\log\min_{0\le s\le1}\bigl(\lambda^s\mu^{1-s}+(1-\lambda)^s(1-\mu)^{1-s}\bigr).
$$

On the other hand, since
$$
\psin(s):=\log\Tr\hat\rho_0^s\hat\rho_1^{1-s}
=\log\Tr\hat\rho_0\hat\rho_1
=\log\bigl(\sqrt{\lambda\mu}+\sqrt{(1-\lambda)(1-\mu)}\bigr)^2,
$$
we have
$$
\lim_{n\to\infty}{1\over n}\log P_{\min}(\rho_0^{\otimes n}:\rho_1^{\otimes n})
=2\log\bigl(\sqrt{\lambda\mu}+\sqrt{(1-\lambda)(1-\mu)}\bigr).
$$
We notice that
\begin{equation}\label{F-6.7}
\min_{0\le s\le1}\bigl(\lambda^s\mu^{1-s}+(1-\lambda)^s(1-\mu)^{1-s}\bigr)
>\bigl(\sqrt{\lambda\mu}+\sqrt{(1-\lambda)(1-\mu)}\bigr)^2
\end{equation}
so that
$$
\lim_{n\to\infty}{1\over n}\log P_{\min}(\rho_{0,n}:\rho_{1,n})
>\lim_{n\to\infty}{1\over n}\log P_{\min}(\rho_0^{\otimes n}:\rho_1^{\otimes n})
$$
similarly to Example \ref{E-6.2}. An elementary proof of \eqref{F-6.7} is as follows: Since
\begin{align}
&\bigl(\lambda^s\mu^{1-s}+(1-\lambda)^s(1-\mu)^{1-s}\bigr)
\bigl(\lambda^{1-s}\mu^s+(1-\lambda)^{1-s}(1-\mu)^s\bigr) \nonumber\\
&\qquad\qquad-\bigl(\sqrt{\lambda\mu}+\sqrt{(1-\lambda)(1-\mu)}\bigr)^2 \nonumber\\
&\qquad=\bigl(\lambda^{s\over2}(1-\lambda)^{1-s\over2}\mu^{1-s\over2}(1-\mu)^{s\over2}
-\lambda^{1-s\over2}(1-\lambda)^{s\over2}\mu^{s\over2}(1-\mu)^{1-s\over2}\bigr)^2\ge0,
\label{F-6.8}
\end{align}
and
\begin{equation}\label{F-6.9}
\lambda^{1-s}\mu^s+(1-\lambda)^{1-s}(1-\mu)^s\le1,
\end{equation}
we get $\lambda^s\mu^{1-s}+(1-\lambda)^s(1-\mu)^{1-s}\ge
\sqrt{\lambda\mu}+\sqrt{(1-\lambda)(1-\mu)}$. Furthermore, since $\lambda\ne\mu$,
the equality holds in \eqref{F-6.8} only if $s=1/2$, but inequality \eqref{F-6.9} is
strict when $s=1/2$. (An extension of the inequality to the matrix case is known in
\cite[Theorem 6]{ANSV}.) 

Note also that we have $\psi(0)=\psi(1)=0$ and
$$
\psi'(0)=-S((\mu,1-\mu)\,\|\,(\lambda,1-\lambda)),\qquad
\psi'(1)=S((\lambda,1-\lambda)\,\|\,(\mu,1-\mu)),
$$
where $S((\lambda,1-\lambda)\,\|\,(\mu,1-\mu))$ is the relative entropy of
$(\lambda,1-\lambda)$ and $(\mu,1-\mu)$. We have
\begin{align*}
\sr{\rho_{0,n}}{\rho_{1,n}}
&=\sum_{k=0}^n{n\choose k}\lambda^{n-k}(1-\lambda)^k\log
{{n\choose k}\lambda^{n-k}(1-\lambda)^k\over{n\choose k}\mu^{n-k}(1-\mu)^k} \\
&=\sum_{k=0}^n{n\choose k}\lambda^{n-k}(1-\lambda)^k
\biggl\{(n-k)\log{\lambda\over\mu}+k\log{1-\lambda\over1-\mu}\biggr\} \\
&=n\lambda\log{\lambda\over\mu}+n(1-\lambda)\log{1-\lambda\over1-\mu}
\end{align*}
for all $n\in\bN$ and hence,
$$
\msr{\rho_0}{\rho_1}
=S((\lambda,1-\lambda)\,\|\,(\mu,1-\mu))=\psi'(1).
$$
On the other hand,
$S(\rho_0\,\|\,\rho_1)=+\infty> \msr{\rho_0}{\rho_1}$. This shows again that Theorem
\ref{T-5.1} cannot generally hold, that is, the assumption $\hat\rho_1\in\cA_1$
cannot be removed.
\end{example}

\section{Concluding remarks and problems}

The asymptotic binary state discrimination problem can be formulated in a very
general way, with two sequences $\{\rho_{0,n}\}_{n\in\bN}$ and $\{\rho_{1,n}\}_{n\in\bN}$
to be discriminated. Here, $\rho_{0,n}$ and $\rho_{1,n}$ are states on $\cB(\hil_n)$ of
some Hilbert space $\hil_n$, where $\hil_{n+m}=\hil_n\otimes\hil_m$ need not be assumed;
see, e.g., \cite{NH,HMO}. To get a complete solution of the problems of the Chernoff and
the Hoeffding bounds and of Stein's lemma, i.e., to show the identities of Remark
\ref{rem:differentiability}, the key point is to show that \eqref{F-4.5} holds with
equality for a suitable range of parameters $a$. This can be done, for instance, by
showing that the states satisfy a certain factorization property \cite{HMO1,HMO} or
that the function $\psi$ exists and is differentiable on the interval $(0,1)$
\cite{HMO,MHOF,M}. As the examples of Section \ref{sec:examples} suggest, one can expect
the differentiability of $\psi$ to hold in our setting of group-invariant state
discrimination. The main open question of the present work is to show that this is
indeed the case.

Once the $\psi$-function (and, in an optimal situation, its differentiability) is obtained,
the identities of the error exponents and the corresponding asymptotic statistical
distances follow quite automatically. In this sense, the results of Section
\ref{sec:error bounds} do not depend much on our present setting of group-symmetric
measurements. On the other hand, the determination of the $\psi$-function requires
significantly different techniques in the different scenarios; see, e.g.,
\cite{HMO,MHOF,M} for example. The key technical tool that we used in the present setting
is the monotonicity of quasi-entropies under stochastic maps. It is worthwhile to note that
our analysis works whenever the states to discriminate are defined on a sequence of
algebras $\{\A_n\}_{n\in\N}$ that satisfy $\A_n\otimes\A_m\subset\A_{n+m}$.

Stein's lemma can be considered as an extremal point of the family of the Hoeffding
bounds. In this sense, Stein's lemma is the most asymmetric scenario, which gives a
heuristic support for the validity of Theorem \ref{T-5.1} which says that
$G$-invariance of the measurements does not mean a real restriction as long as the
alternative hypothesis is also $G$-invariant. It is, however, somewhat surprising that
such an asymmetric condition can yield the bound of Theorem \ref{T-5.5} in the totally
symmetric discrimination problem of the Chernoff bound. On the other hand, we conjecture
that if the symmetry group $G$ is finite and the alternative hypothesis is $G$-invariant,
then the function $\psi$ is actually equal to $\psin$ of the unrestricted setting, and
hence $G$-invariant measurements yield the same optimal error exponents as the
unrestricted ones. This would of course also implies that all the inequalities of
\eqref{chineq}--\eqref{steinineq} hold with equality.

\section*{Acknowledgments}

Partial funding was provided by the Grant-in-Aid for Scientific Research (B)17340043
(F.H.); the Grant-in-Aid for JSPS Fellows 18\,$\cdot$\,06916, the Hungarian Research
Grant OTKA T068258 (M.M.); the Grant-in-Aid for Scientific Research on Priority Area
``Deepening and Expansion of Statistical Mechanical Informatics (DEX-SMI)" 18079014,
the MEXT Grant-in-Aid for Young Scientists (A)20686026 (M.H.). This work was also
partially supported by the JSPS Japan-Hungary Joint Project (F.H.\ \& M.H.). Part of this
work was done while M.M.~was a Junior Research Fellow at the Erwin Schr\"odinger Institute
for Mathematical Physics in Vienna.

\appendix
\def\thesection{Appendix \Alph{section}} 
\section{}\label{proofs}
\setcounter{equation}{0}
\def\thesection{\Alph{section}}

The appendix supplies, for the reader's convenience, the details on some technical
lemmas used in the main body of the paper.

\subsection{Lieb's concavity and Ando's convexity}

Let $\cB(\cH)_+$ and $\cB(\cH)_{++}$ denote the set of positive semidefinite and strictly
positive definite operators in $\B(\hil)$, respectively. When $0\le s\le1$, Lieb's
concavity theorem says that the function
\begin{equation}\label{F-A.1}
(A,B)\mapsto\Tr X^*A^sXB^{1-s}
\end{equation}
is jointly concave on $\cB(\cH)_+\times\cB(\cH)_+$ for any $X\in\cB(\cH)$. When
$1\le s\le2$, a complementary result due to Ando \cite[\S4]{An} says that the map
$(A,B)\mapsto A^s\otimes B^{1-s}$ is jointly convex on $\cB(\cH)_{++}\times\cB(\cH)_{++}$.
This joint convexity is equivalently formulated that the function \eqref{F-A.1} is jointly
convex on $\cB(\cH)_{++}\times\cB(\cH)_{++}$ for any $X\in\cB(\cH)$. Indeed, the map
$U:\B(\hil)\to\hil\otimes\hil$, $X\mapsto\sum_iXe_i\otimes e_i$ is a unitary for any
orthonormal basis $\{e_i\}$ of $\hil$ if $\B(\hil)$ is equipped with the Hilbert-Schmidt
inner product, and the two formulations are easily seen to be unitarily equivalent with
any such $U$. Ando's convexity result can slightly be extended to the following:

\begin{lemma}\label{L-A.1}
Let $X\in\cB(\cH)$ be arbitrary and $1\le s\le2$. Let $Q\in\cB(\cH)$ be an orthogonal
projection. Then the function \eqref{F-A.1} is jointly convex on
$\cB(\cH)_+\times\{B\in\cB(\cH)_+:\supp B=Q\}$.
\end{lemma}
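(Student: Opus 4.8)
The plan is to obtain the assertion from the reformulation of Ando's convexity recalled above, which already gives joint convexity of \eqref{F-A.1} on the open set $\cB(\cH)_{++}\times\cB(\cH)_{++}$, and then to push it to the relevant part of the boundary. Two directions of relaxation are involved. Enlarging the $A$-variable from $\cB(\cH)_{++}$ to $\cB(\cH)_+$ is routine: since $s\ge1$, the map $A\mapsto A^s$ is norm-continuous on $\cB(\cH)_+$, so the function in \eqref{F-A.1} is continuous in $A$, and I would deduce the convexity inequality for $A_0,A_1\in\cB(\cH)_+$ by applying it to $A_i+\delta I$ and letting $\delta\searrow0$. The genuinely delicate direction is allowing $B$ to be rank-deficient with $\supp B=Q$: here $1-s\le0$, so $B^{1-s}$ is an \emph{inverse} power (with $0^{1-s}=0$), and one cannot regularise naively by $B+\delta I$ because $\delta^{1-s}\to+\infty$ on $\ker B$.

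I expect this blow-up of the negative power on $Q^\perp\cH$ to be the main obstacle, and the point of the hypothesis $\supp B=Q$ is exactly what resolves it: since all the operators $B$ in question are block-diagonal for the fixed splitting $\cH=Q\cH\oplus Q^\perp\cH$ and vanish on $Q^\perp\cH$ (so that convex combinations again have support $Q$), I can regularise \emph{within} this splitting by $B_\epsilon:=B+\epsilon Q^\perp\in\cB(\cH)_{++}$, for which functional calculus gives
\[
B_\epsilon^{1-s}=B^{1-s}+\epsilon^{1-s}Q^\perp,\qquad\text{so}\qquad
\Tr X^*A^sXB_\epsilon^{1-s}=\Tr X^*A^sXB^{1-s}+\epsilon^{1-s}\,h(A),
\]
where $h(A):=\Tr\bigl(XQ^\perp X^*\bigr)A^s\ge0$. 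Applying Ando's convexity to the pairs $(A_i,(B_i)_\epsilon)$ (note $\lambda(B_0)_\epsilon+(1-\lambda)(B_1)_\epsilon=(B_\lambda)_\epsilon$) and cancelling the common $\epsilon$-terms yields, for every $\epsilon>0$,
\[
\Tr X^*A_\lambda^sXB_\lambda^{1-s}-\lambda\Tr X^*A_0^sXB_0^{1-s}-(1-\lambda)\Tr X^*A_1^sXB_1^{1-s}
\le\epsilon^{1-s}\bigl(\lambda h(A_0)+(1-\lambda)h(A_1)-h(A_\lambda)\bigr).
\]

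The bracket on the right is nonnegative: since $t\mapsto t^s$ is operator convex on $[0,\infty)$ for $1\le s\le2$, one has $A_\lambda^s\le\lambda A_0^s+(1-\lambda)A_1^s$, and pairing with the positive operator $XQ^\perp X^*$ makes $h$ convex. Because $1-s\le0$, the prefactor $\epsilon^{1-s}$ decreases to $0$ as $\epsilon\to+\infty$, so passing to this limit — which, somewhat counterintuitively, is $\epsilon\to+\infty$ rather than $\epsilon\searrow0$ — forces the left-hand side to be $\le0$. This is precisely the desired joint convexity for $A\in\cB(\cH)_{++}$ and $B$ with support $Q$; the endpoint $s=1$ is trivial, the function then being affine in $A$ and independent of $B$. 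Combining with the continuity argument in $A$ from the first paragraph finishes the proof, the only real care being the bookkeeping of the fixed-support regularisation and the sign of $1-s$.
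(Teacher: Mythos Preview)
Your proof is correct and uses essentially the same regularisation as the paper: perturb $B$ by a large multiple of $Q^\perp$ so that the negative-power contribution on $Q^\perp\cH$ vanishes in the limit. The paper's version is slightly slicker---it regularises simultaneously by $A+\eps I$ and $B+\eps^{-1}(I-Q)$ and passes to the limit $\eps\searrow0$ directly at the level of the operator inequality for $A^s\otimes B^{1-s}$, so no separate convexity argument for your $h$ is needed (and in fact the sign of your bracket is immaterial, since the prefactor $\epsilon^{1-s}$ tends to $0$ regardless).
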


\begin{proof}
Since the case $s=1$ is trivial, assume that $1<s\le2$. For $A_1,B_1,A_2,B_2\in\cB(\cH)_+$
with $\supp B_1=\supp B_2=Q$, apply Ando's result to 
$A_{k,\ep}:=A_k+\eps I,\,B_{k,\ep}:=B_k+\eps^{-1}(I-Q)$ to obtain
$$
(\lambda A_{1,\ep}+(1-\lambda)A_{2,\ep})^s
\otimes(\lambda B_{1,\ep}+(1-\lambda)B_{2,\ep})^{1-s}
\le\lambda A_{1,\ep}^s\otimes B_{1,\ep}^{1-s}
+(1-\lambda)A_{2,\ep}^s\otimes B_{2,\ep}^{1-s}
$$
for any $\lambda\in(0,1)$ and $\ep>0$. Taking the limit as $\eps\searrow0$ yields the
assertion since
\begin{align*}
&(\lambda B_{1,\ep}+(1-\lambda)B_{2,\ep})^{1-s} \\
&\quad=(\lambda B_1+(1-\lambda)B_2)^{1-s}+\ep^{s-1}(I-Q)
\longrightarrow(\lambda B_1+(1-\lambda)B_2)^{1-s}
\end{align*}
as well as $B_{k,\ep}^{1-s}\to B_k^{1-s}$.
\end{proof}

\subsection{Mean Hoeffding distance}

\begin{lemma}\label{L-A.2}
Let $\psi$ be a convex function on the interval $[0,1]$ and $\vfi$ be its Legendre-Fenchel
transform, i.e., $\vfi(a):=\sup_{0\le s\le 1}\{as-\psi(s)\}$, $a\in\bR$. Then for every
$r\ge0$,
\begin{equation}\label{F-A.2}
\sup_{a:\,\vfi(a)-a> r}\vfi(a)
=\sup_{0\le s<1}\frac{-sr-\psi(s)}{1-s}.
\end{equation}
\end{lemma}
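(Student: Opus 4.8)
The plan is to prove the two inequalities in \eqref{F-A.2} separately, exploiting a direct algebraic correspondence between a point $s_0\in[0,1)$ appearing on the right-hand side and a suitably chosen argument $a_0$ of $\vfi$ on the left-hand side. The starting observation is that
\[
\vfi(a)-a=\sup_{0\le s\le1}\{a(s-1)-\psi(s)\},
\]
so the constraint $\vfi(a)-a>r$ is equivalent to the existence of some $s\in[0,1]$ with $a(s-1)-\psi(s)>r$. The crucial computation is that for $s_0\in[0,1)$ the choice $a_0:=\frac{-r-\psi(s_0)}{1-s_0}$ satisfies the two identities $a_0(s_0-1)-\psi(s_0)=r$ and $a_0 s_0-\psi(s_0)=\frac{-s_0r-\psi(s_0)}{1-s_0}$ simultaneously; these are exactly what tie the constraint and the objective on the two sides together.

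For the inequality ``$\ge$'', I would fix $s_0\in[0,1)$ with $\psi(s_0)<\infty$ and, for $\delta>0$, set $a:=a_0-\delta$. From the defining supremum one has $\vfi(a)\ge a s_0-\psi(s_0)$ and $\vfi(a)-a\ge a(s_0-1)-\psi(s_0)$, and inserting $a=a_0-\delta$ together with the two identities above yields $\vfi(a)-a\ge r+\delta(1-s_0)>r$ and $\vfi(a)\ge\frac{-s_0r-\psi(s_0)}{1-s_0}-\delta s_0$. Thus every such $a$ is feasible for the left-hand supremum, and letting $\delta\searrow0$ gives that the left-hand side is $\ge\frac{-s_0r-\psi(s_0)}{1-s_0}$; taking the supremum over $s_0\in[0,1)$ gives ``$\ge$''. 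The perturbation by $\delta$ is precisely what converts the boundary equality $\vfi(a_0)-a_0=r$ into the strict inequality demanded by the constraint.

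For the inequality ``$\le$'', I would take any $a$ with $\vfi(a)-a>r$ and let $s^*\in[0,1]$ attain the supremum defining $\vfi(a)$, so $\vfi(a)=a s^*-\psi(s^*)$. If $s^*\in[0,1)$, then $\vfi(a)-a=a(s^*-1)-\psi(s^*)>r$ forces $a<\frac{-r-\psi(s^*)}{1-s^*}$; multiplying this by $s^*\ge0$ and subtracting $\psi(s^*)$, then using the second identity with $s_0=s^*$, gives $\vfi(a)=a s^*-\psi(s^*)\le\frac{-s^*r-\psi(s^*)}{1-s^*}$, which is bounded by the right-hand side. If instead the maximizer is $s^*=1$, then $\vfi(a)-a=-\psi(1)$, independent of $a$, so the constraint forces $-\psi(1)>r$; in that regime the right-hand side is already $+\infty$ (the fraction $\frac{-sr-\psi(s)}{1-s}$ tends to $+\infty$ as $s\nearrow1$), and the inequality holds trivially.

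The main obstacle is not the algebra, which is elementary, but making the supremum manipulations rigorous. Two points require care: the strictness of the constraint $\vfi(a)-a>r$, which the $\delta$-perturbation handles; and the attainment of the supremum defining $\vfi(a)$. Since $\psi$ is finite and convex it is continuous on $(0,1)$, so for interior maximizers the argument is clean, but at the endpoints one should replace the maximizer $s^*$ by a maximizing sequence $s_k$ and pass to the limit, treating the degenerate escape $s_k\nearrow1$ via the $s^*=1$ case above. These semicontinuity issues are where the formal computation must be upgraded to a genuine proof.
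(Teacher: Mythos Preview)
Your argument is correct and complete (modulo the routine limit-passage you flag at the end), and it follows a genuinely different line from the paper's proof. The paper argues globally: it introduces $\hat\vfi(a):=\vfi(a)-a$, establishes that $\hat\vfi$ is convex and strictly decreasing on $(-\infty,\partial^-\psi(1))$ with limits $-\psi(1)$ and $+\infty$, and then does a case split on the value of $r$ relative to $-\psi(1)$. In the main case $r>-\psi(1)$ it picks the \emph{unique} $a_r$ with $\hat\vfi(a_r)=r$, lets $s_r$ be the maximizer in $\vfi(a_r)$, and verifies that both sides of \eqref{F-A.2} equal $\vfi(a_r)$; the remaining cases are dispatched by showing both sides are $+\infty$ or by a direct computation at $a_r=\partial^-\psi(1)$.

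Your approach is more local: you prove the two inequalities separately by matching each $s_0\in[0,1)$ with its own $a_0=\frac{-r-\psi(s_0)}{1-s_0}$ and exploiting the two algebraic identities $a_0(s_0-1)-\psi(s_0)=r$ and $a_0 s_0-\psi(s_0)=\frac{-s_0 r-\psi(s_0)}{1-s_0}$, with the $\delta$-perturbation to handle the strict constraint. This avoids the monotonicity and limit analysis of $\hat\vfi$ entirely and does not single out one distinguished $a_r$; the price is the small boundary discussion for the maximizer $s^*$ (the cases $s^*=1$ and non-attainment), which you handle correctly. The paper's route, on the other hand, yields the extra structural information that both suprema are actually attained at the pair $(a_r,s_r)$, which your argument does not directly exhibit.
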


\begin{proof}
Define
$$
\hat\vfi(a):=\vfi(a)-a=\sup_{0\le s\le 1}\{a(s-1)-\psi(s)\},\qquad a\in\bR.
$$
The following properties can easily be seen by definitions:
\begin{itemize}
\item[(i)] $\vfi$ and $\hat\vfi$ are convex and continuous on $\bR$,
\item[(ii)] $\vfi$ is increasing and $\hat\vfi$ is decreasing on $\R$ with
$\vfi(a)\to+\infty$, $\hat\vfi(a)\to-\psi(1)$ as $a\to+\infty$ and
$\vfi(a)\to-\psi(0)$, $\hat\vfi(a)\to+\infty$ as $a\to-\infty$.
Moreover, $\hat\vfi$ is strictly decreasing on $(-\infty,\derleft{\psi}(1))$. (See
\cite[Lemma 4.1]{HMO}.)
\end{itemize}

First, assume that $r<-\psi(1)$ or that $r=-\psi(1)$ and $\derleft{\psi}(1)=+\infty$.
Then $\hat\vfi(a)>r$ for every $a\in\R$, and the left-hand side of \eqref{F-A.2} is
$+\infty$ since $\lim_{a\to+\infty}\vfi(a)=+\infty$. On the other hand, since
\begin{equation}\label{F-A.3}
\frac{-sr-\psi(s)}{1-s}=\frac{\psi(1)-\psi(s)}{1-s}-\frac{s}{1-s}(r+\psi(1))-\psi(1),
\end{equation}
we have
$$
\lim_{s\nearrow 1}\frac{-sr-\psi(s)}{1-s}=+\infty
$$
and hence the right-hand side of \eqref{F-A.2} is also equal to $+\infty$.

If $r=-\psi(1)$ and $\derleft{\psi}(1)<+\infty$ or if $r>-\psi(1)$, then
there exists an $a_r$ such that $\hat\vfi(a_r)=r$ and $\hat\vfi$ is strictly decreasing
on $(-\infty,a_r]$. Thus, the left-hand side of \eqref{F-A.2} is $\vfi(a_r)$.
Assume that $r=-\psi(1)$ and $\derleft{\psi}(1)<+\infty$. Then $a_r=\derleft{\psi}(1)$
and $\vfi(a_r)=\derleft{\psi}(1)-\psi(1)$. On the other hand, by \eqref{F-A.3}
and the convexity of $\psi$, we have
$$
\sup_{0\le s<1}\frac{-sr-\psi(s)}{1-s}
=\lim_{s\nearrow 1}\frac{-sr-\psi(s)}{1-s}=\derleft{\psi}(1)-\psi(1).
$$
Next, assume that $r>-\psi(1)$, and define
$s_r:=\mathrm{argmax}_{0\le s\le 1}\{a_rs-\psi(s)\}$. Then $s_r<1$ and
$$
r=a_r(s_r-1)-\psi(s_r)\ge a_r(s-1)-\psi(s),\ds\ds\ds 0\le s\le 1,
$$
so that $a_r\ge(-r-\psi(s))/(1-s)$ for any $0\le s<1$ with equality for $s=s_r$. Therefore,
$$
\vfi(a_r)=a_rs_r-\psi(s_r)\ge a_rs-\psi(s)\ge\frac{-sr-\psi(s)}{1-s},\ds\ds\ds 0\le s<1,
$$
and equality holds for $s=s_r$. Hence we see that the right-hand side of \eqref{F-A.2}
is also equal to $\vfi(a_r)$.
\end{proof}

\subsection{Conditional expectation with discrete Weyl operators}

\begin{lemma}\label{L-A.3}
Let $m,d\in\bN$ and $E$ be the conditional expectation (i.e., the partial trace) from
$M_m\otimes M_d$ onto $M_m\otimes I_d$ with respect to the trace. Then there are unitaries
$U_1,\dots,U_{d^2}$ in $I_m\otimes M_d$ such that
$$
E(A)={1\over d^2}\sum_{j=1}^{d^2} U_jAU_j^*
$$
for all $A\in M_m\otimes M_d$.
\end{lemma}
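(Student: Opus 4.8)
The plan is to realize $E$ as a uniform average of unitary conjugations acting on the second tensor factor, using the discrete Weyl (generalized Pauli) operators on $M_d$. First I would identify $E$ explicitly. Writing $A=\sum_{k,l}A_{kl}\otimes e_{kl}$ with $A_{kl}\in M_m$ and $\{e_{kl}\}$ the matrix units of $M_d$, the defining relation $\Tr(E(A)(C\otimes I_d))=\Tr(A(C\otimes I_d))$ for all $C\in M_m$ forces
$$
E(A)={1\over d}\Bigl(\sum_k A_{kk}\Bigr)\otimes I_d,
$$
that is, $E=\mathrm{id}_{M_m}\otimes\Phi$, where $\Phi\colon M_d\to M_d$ is the depolarizing projection $\Phi(B)={1\over d}(\Tr B)\,I_d$. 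Hence it is enough to write $\Phi$ as a uniform average $B\mapsto{1\over d^2}\sum_j W_jBW_j^*$ over unitaries $W_j\in M_d$; setting $U_j:=I_m\otimes W_j\in I_m\otimes M_d$ and averaging in the second factor then yields the lemma.

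Next I would introduce the discrete Weyl operators. With $\omega:=e^{2\pi i/d}$, let $Z|j\rangle:=\omega^j|j\rangle$ and $X|j\rangle:=|j+1\rangle$ (indices mod $d$) on $\bC^d$, and set $W_{a,b}:=X^aZ^b$ for $a,b\in\{0,\dots,d-1\}$. These $d^2$ operators are unitary and orthogonal in the Hilbert--Schmidt inner product, hence form a basis of $M_d$, and the relation $ZX=\omega XZ$ shows that conjugation by any $W_{c,e}$ sends each $W_{a,b}$ to a unimodular scalar times $W_{a,b}$. The key claim is the twirl identity
$$
{1\over d^2}\sum_{a,b}W_{a,b}BW_{a,b}^*={\Tr B\over d}\,I_d,\qquad B\in M_d.
$$
I would prove it by a symmetry argument: conjugating the left-hand side by any $W_{c,e}$ merely reindexes the sum (the unimodular phases cancel against their conjugates), so the average commutes with every $W_{c,e}$ and therefore, since these span $M_d$, lies in the centre $\bC I_d$. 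Taking the trace of both sides and using $\Tr(W_{a,b}BW_{a,b}^*)=\Tr B$ fixes the scalar as $\Tr(B)/d$.

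Finally I would assemble the pieces. Relabelling the $W_{a,b}$ as $W_1,\dots,W_{d^2}$ and putting $U_j:=I_m\otimes W_j$, the twirl identity applied to each matrix unit $e_{kl}$ gives
$$
{1\over d^2}\sum_{j=1}^{d^2}U_jAU_j^*
=\sum_{k,l}A_{kl}\otimes\Bigl({1\over d^2}\sum_{j}W_je_{kl}W_j^*\Bigr)
=\sum_{k,l}A_{kl}\otimes{\Tr(e_{kl})\over d}\,I_d
={1\over d}\Bigl(\sum_k A_{kk}\Bigr)\otimes I_d=E(A),
$$
as required. The only non-routine ingredient is the twirl identity, which rests on the Weyl commutation relation $ZX=\omega XZ$; I expect this to be the crux of the argument, although it is classical.
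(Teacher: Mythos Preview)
Your proposal is correct and follows essentially the same route as the paper: both use the discrete Weyl operators on $M_d$, establish the twirl identity $\frac{1}{d^2}\sum_k W_kBW_k^*=\frac{\Tr B}{d}I_d$ by noting that the average lies in the commutant of the Weyl operators (which span $M_d$) and fixing the scalar via the trace, and then tensor with $I_m$. The only differences are cosmetic---your explicit identification of $E$ as $\mathrm{id}_{M_m}\otimes\Phi$ up front, and your choice of $W_{a,b}=X^aZ^b$ without the phase factor the paper includes---neither of which affects the argument.
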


\begin{proof}
Let $\bZ_d$ denote the additive group of $\{0,\ldots,d-1\}$ with the modulo $d$ addition.
Let $\{e_j\,,\,j\in\bZ_d\}$ be a basis in $\bC^d$ and define $Ue_j:=e_{j+1}$ and
$Ve_j:=w^je_j$ for $j\in\bZ_d$, where $w:=e^{i2\pi/d}$. The so defined operators satisfy
the commutation relation $VU=wUV$. Let
$$
W_k:=\bar w^{k_1k_2/2}\,V^{k_1}\,U^{k_2},\qquad k=(k_1,k_2)\in(\bZ_d)^2,
$$
which are the so-called discrete Weyl operators. It can easily be seen that
$$
W_0=I,\quad W_k^*=W_{-k},\quad W_kW_l=w^{(k_1l_2-k_2l_1)/2}\,W_{k+l},
$$
and $\Tr W_k=\delta_{k,0}$ for all $k,l\in\bZ_d^2$. Hence the Weyl operators are unitaries,
and moreover $\mathcal{W}:=\{d^{-1/2}W_k:k\in(\bZ_d)^2\}$ is an orthonormal base
for $M_d$ with respect to the Hilbert-Schmidt inner product. As a consequence, the
commutant of $\mathcal{W}$ is $\bC I$. One can see by a straightforward computation that
for any $A\in M_d$,
$$
\tilde E(A):=\frac{1}{d^2}\sum_{k\in\bZ_d^2}W_kAW_k^*
$$
is in the commutant of $\mathcal{W}$ and hence it is a constant multiple of the identity.
Since $\Tr\tilde E(A)=\Tr A$, we have $\tilde E(A)=d^{-1}(\Tr A)I$. Setting
$U_k:=I_m\otimes W_k$ yields the assertion.
\end{proof}

\subsection{Limiting formulas}

\begin{lemma}\label{L-A.4}
For any $s\in\bR$ and any $a,b\ge0$,
$$
\lim_{n\to\infty}\Biggl(\sum_{i=0}^n{n\choose i}^sa^ib^{n-i}\Biggr)^{1/n}
=\begin{cases}
(a^{1/s}+b^{1/s})^s & \text{if $s>0$}, \\
\max\{a,b\} & \text{if $s\le0$}.
\end{cases}
$$
\end{lemma}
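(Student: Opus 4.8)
The plan is to use the elementary ``largest term dominates'' principle: for any finite family of nonnegative reals $T_0,\dots,T_n$ one has $\max_iT_i\le\sum_{i=0}^nT_i\le(n+1)\max_iT_i$, so that the sequences $\bigl(\sum_iT_i\bigr)^{1/n}$ and $\bigl(\max_iT_i\bigr)^{1/n}$ share the same limit whenever the latter exists, since $(n+1)^{1/n}\to1$. Applying this with $T_i:={n\choose i}^sa^ib^{n-i}$ reduces the problem to computing $\lim_n\bigl(\max_{0\le i\le n}{n\choose i}^sa^ib^{n-i}\bigr)^{1/n}$ in the two cases.

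For $s>0$ I would set $c:=a^{1/s}$ and $d:=b^{1/s}$ and observe that $T_i=\bigl({n\choose i}c^id^{n-i}\bigr)^s$, so that, since $x\mapsto x^s$ is increasing, $\max_iT_i=\bigl(\max_i{n\choose i}c^id^{n-i}\bigr)^s$. The quantities ${n\choose i}c^id^{n-i}$ are precisely the terms of the binomial expansion $(c+d)^n=\sum_i{n\choose i}c^id^{n-i}$; hence their maximum lies between the average $(c+d)^n/(n+1)$ and the total $(c+d)^n$. Taking $n$-th roots and again using $(n+1)^{1/n}\to1$ gives $\bigl(\max_iT_i\bigr)^{1/n}\to(c+d)^s=(a^{1/s}+b^{1/s})^s$, which is the claimed value.

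For $s\le0$ the estimate is even more direct. Since ${n\choose i}\ge1$ for $0\le i\le n$, we have $0\le{n\choose i}^s\le1$, whence $T_i\le a^ib^{n-i}\le\max\{a,b\}^n$ and therefore $\sum_iT_i\le(n+1)\max\{a,b\}^n$. For the matching lower bound it suffices to keep the two end terms $i=0$ and $i=n$, where ${n\choose i}=1$ and thus $T_0=b^n$, $T_n=a^n$; nonnegativity of the remaining terms yields $\sum_iT_i\ge\max\{a^n,b^n\}=\max\{a,b\}^n$. The sandwich then gives $\bigl(\sum_iT_i\bigr)^{1/n}\to\max\{a,b\}$.

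There is no serious obstacle here; the argument is a one-dimensional instance of the Laplace (largest-term) principle, and every polynomial-in-$n$ prefactor such as $(n+1)$ washes out after taking $n$-th roots. The only points requiring a little care are the degenerate cases $a=0$ or $b=0$, handled by the convention $0^s=0$ together with $0^0=1$, under which the end-term bookkeeping for $s\le0$ and the binomial identity for $s>0$ remain valid, and the verification that the maximal binomial term is at least the average of the $n+1$ terms, which is just the trivial fact that a maximum dominates a mean.
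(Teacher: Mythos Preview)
Your proof is correct and takes a genuinely different, more elementary route than the paper. Both arguments begin by invoking the largest-term principle to replace the sum by $\max_{0\le i\le n}T_i$, but the paper then takes logarithms, applies Stirling's formula to $\binom{n}{i}$, and reduces the problem to maximizing the continuous function $h_s(x)=s\bigl(-x\log x-(1-x)\log(1-x)\bigr)+x\log a$ over $[0,1]$ by calculus. Your substitution $c=a^{1/s}$, $d=b^{1/s}$ for $s>0$, which turns $T_i$ into $\bigl(\binom{n}{i}c^id^{n-i}\bigr)^s$ and then invokes the binomial identity $\sum_i\binom{n}{i}c^id^{n-i}=(c+d)^n$, sidesteps Stirling entirely; similarly, for $s\le0$ your direct bounds $\binom{n}{i}^s\le1$ and $T_0=b^n$, $T_n=a^n$ avoid any asymptotic analysis of the binomial coefficients. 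The paper's method is more general-purpose (it is reused in the companion Lemma~A.5, where the sum is truncated at $[n/2]$ and no clean substitution is available), while yours is sharper and self-contained for the full sum treated here.
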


\begin{proof}
If $a=0$ or $b=0$, then the equality is trivial. If $a,b>0$ then we may assume $b=1$ by
homogeneity. Since
$$
\biggl(\max_{0\le i\le n}{n\choose i}^sa^i\biggr)^{1/n}
\le\Biggl(\sum_{i=0}^n{n\choose i}^sa^i\Biggr)^{1/n}
\le(n+1)^{1/n}\biggl(\max_{0\le i\le n}{n\choose i}^sa^i\biggr)^{1/n},
$$
what we have to prove is that
$$
\lim_{n\to\infty}\biggl(\max_{0\le i\le n}{n\choose i}^sa^i\biggr)^{1/n}
=\begin{cases}
(a^{1/s}+1)^s & \text{if $s>0$}, \\
\max\{a,1\} & \text{if $s\le0$}.
\end{cases}
$$
that is,
$$
\lim_{n\to\infty}\max_{0\le i\le n}
\biggl({s\over n}\log{n\choose i}+{i\over n}\log a\biggr)
=\begin{cases}
\log(a^{1/s}+1) & \text{if $s>0$}, \\
\max\{\log a,0\} & \text{if $s\le0$}.
\end{cases}
$$
Since the Stirling formula gives
$$
{s\over n}\log{n\choose i}+{i\over n}\log a
=s\biggl(-{i\over n}\log{i\over n}
-\biggl(1-{i\over n}\biggr)\log\biggl(1-{i\over n}\biggr)+o(1)\biggr)+{i\over n}\log a,
$$
we may show that
$$
\max_{0\le x\le1}h_s(x)
=\begin{cases}
s\log(a^{1/s}+1) & \text{if $s>0$}, \\
\max\{\log a,0\} & \text{if $s\le0$},
\end{cases}
$$
for
$$
h_s(x):=s(-x\log x-(1-x)\log(1-x))+x\log a,\qquad0\le x\le1.
$$
Notice that
$$
h_s'(x)=s\log{1-x\over x}+\log a,\qquad0<x<1.
$$
When $s>0$, the maximizer $x_0$ of $h_s(x)$ satisfies
$$
s\log{1-x_0\over x_0}=-\log a\quad\mbox{or}\quad
{1\over x_0}=1+{1\over a^{1/s}},
$$
and the maximum is
\begin{align*}
h_s(x_0)&=sx_0\log{1-x_0\over x_0}-s\log(1-x_0)+x_0\log a \\
&=-s\log(1-x_0)=s\log{1\over x_0}+\log a=s\log(a^{1/s}+1).
\end{align*}
When $s\le0$, $h_s(x)$ takes the maximum at either $x=0$ or $x=1$, so that the maximum is
$\max\{\log a,0\}$.
\end{proof}

\begin{lemma}\label{L-A.5}
For any $a,b\ge0$,
$$
\lim_{n\to\infty}\Biggl(\sum_{i=0}^{[n/2]}{n\choose i}a^ib^{n-i}\Biggr)^{1/n}
=\begin{cases}
a+b & \text{if $a\le b$}, \\
2\sqrt{ab} & \text{if $a\ge b$}.
\end{cases}
$$
\end{lemma}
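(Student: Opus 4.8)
The plan is to follow the same strategy as in the proof of Lemma \ref{L-A.4}: reduce the computation of the exponential growth rate to a one-variable optimization problem via Stirling's approximation, and then solve that optimization by elementary calculus. First I would dispose of the degenerate cases. If $b=0$ then every term with $i<n$ vanishes, and since $i\le[n/2]<n$ for $n\ge1$ the entire sum is $0$, matching $2\sqrt{ab}=0$ (as $a\ge b=0$); if $a=0$ then only the $i=0$ term survives, giving $b^n$, whose $n$-th root tends to $b=a+b$, consistent with the first case since $a=0\le b$. Thus I may assume $a,b>0$, and by homogeneity (each summand is homogeneous of degree $n$ in $(a,b)$, so the $n$-th root is homogeneous of degree $1$, as are both $a+b$ and $2\sqrt{ab}$) I may normalize $b=1$.

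Next, exactly as in Lemma \ref{L-A.4}, I would sandwich the partial sum between its largest term and $([n/2]+1)\le(n+1)$ copies of it:
\begin{equation*}
\max_{0\le i\le[n/2]}\binom{n}{i}a^ib^{n-i}
\le\sum_{i=0}^{[n/2]}\binom{n}{i}a^ib^{n-i}
\le(n+1)\max_{0\le i\le[n/2]}\binom{n}{i}a^ib^{n-i}.
\end{equation*}
Taking $n$-th roots, the factor $(n+1)^{1/n}\to1$ is harmless, so it suffices to evaluate the limit of $\max_{0\le i\le[n/2]}(\binom{n}{i}a^ib^{n-i})^{1/n}$. By Stirling's formula, writing $x=i/n\in[0,1/2]$,
\begin{equation*}
\frac1n\log\Bigl(\binom{n}{i}a^ib^{n-i}\Bigr)
=-x\log x-(1-x)\log(1-x)+x\log a+(1-x)\log b+o(1),
\end{equation*}
so the problem reduces to computing $\max_{0\le x\le1/2}h(x)$ for $h(x):=-x\log x-(1-x)\log(1-x)+x\log a+(1-x)\log b$.

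Finally I would solve this constrained maximization. Since $h''(x)=-\tfrac1x-\tfrac1{1-x}<0$, the function $h$ is strictly concave, and $h'(x)=\log\frac{(1-x)a}{xb}$ vanishes precisely at the unconstrained maximizer $x_*=a/(a+b)$. If $a\le b$ then $x_*\le1/2$ lies in the admissible interval, so the maximum over $[0,1/2]$ is $h(x_*)=\log(a+b)$ (the standard identity recovering the full-sum growth rate $(a+b)^n$), giving the limit $a+b$. If $a\ge b$ then $x_*\ge1/2$, so by concavity $h$ is nondecreasing on $[0,1/2]$ and attains its maximum at the endpoint $x=1/2$, where $h(1/2)=\log2+\tfrac12\log a+\tfrac12\log b=\log(2\sqrt{ab})$, giving the limit $2\sqrt{ab}$. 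The boundary case $a=b$ is covered by both formulas, which agree on the common value $2a$.

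The only delicate point is the uniformity of the Stirling estimate near $i=0$, where $x\log x\to0$ and $\log\binom{n}{i}=0$ must be reconciled with the continuous expression; but since $a,b>0$ these boundary terms contribute nothing in the $n\to\infty$ limit, and the passage from $\max_i$ of the discrete quantity to $\max_x h(x)$ follows from the density of the grid $\{i/n\}$ together with the continuity of $h$. This is the same routine estimate already used for Lemma \ref{L-A.4}, so I do not anticipate any substantive obstacle.
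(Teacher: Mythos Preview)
Your proposal is correct and follows essentially the same approach as the paper: both reduce to the maximum term via the sandwich with $(n+1)$, apply Stirling to pass to the continuous function $h(x)=-x\log x-(1-x)\log(1-x)+x\log a$ (after normalizing $b=1$), and locate the maximizer on $[0,1/2]$ at $x=a/(a+1)$ when $a\le1$ and at the endpoint $x=1/2$ when $a\ge1$. Your explicit use of concavity ($h''<0$) to justify the boundary maximum is a slight elaboration over the paper's terse statement, but the argument is otherwise identical.
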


\begin{proof}
Since the case $a=0$ or $b=0$ is trivial, we assume that $a,b>0$. As in the proof of Lemma
\ref{L-A.4}, letting $b=1$ we may prove that
$$
\lim_{n\to\infty}\biggl(\max_{0\le i\le[n/2]}{n\choose i}a^i\biggr)^{1/n}
=\begin{cases}
a+1 & \text{if $a\le1$}, \\
2\sqrt a & \text{if $a\ge1$}.
\end{cases}
$$
Hence it suffices to show that
$$
\max_{0\le x\le1/2}h(x)=\begin{cases}
\log(a+1) & \text{if $a\le1$}, \\
\log2\sqrt a & \text{if $a\ge1$},
\end{cases}
$$
where $h(x)$ denotes $h_s(x)$ with $s=1$ in the proof of Lemma \ref{L-A.4}. It is indeed
seen since the maximum of $h(x)$ on $[0,1/2]$ is taken at $x=a/(a+1)$ if $a\le1$ and at
$x=1/2$ if $a\ge1$.
\end{proof}


\begin{thebibliography}{99}

\bibitem{An}
T. Ando:
\kii{Concavity of certain maps and positive definite matrices and applications to
Hadamard products};
\kiii{Linear Algebra Appl.} {\bf 26}, 203--241 (1979).

\bibitem{AZ}
T. Ando, X. Zhan:
\kii{Norm inequalities related to operator monotone functions};
\kiii{Math. Ann.} {\bf 315}, 771--780 (1999).

\bibitem{Aud}
K.M.R.~Audenaert, J.~Calsamiglia, Ll.~Masanes, R.~Munoz-Tapia, A.~Acin,
E.~Bagan, F.~Verstraete: 
            \kii{Discriminating states: the quantum Chernoff bound};  
           \kiii{Phys.~Rev.~Lett.} \textbf{98}, 160501 (2007).

\bibitem{ANSV}
K.M.R.~Audenaert, M.~Nussbaum, A.~Szko\l a, F.~Verstraete:
\kii{Asymptotic error rates in quantum hypothesis testing};
\kiii{Comm.~Math.~Phys.} {\bf 279}, 251--283 (2008).

\bibitem{BH}
R. Bhatia, F. Kittaneh:
\kii{Norm inequalities for positive operators};
\kiii{Lett. Math. Phys.} {\bf 43}, 225--231 (1998).

\bibitem{BSC}
I.~Bjelakovi\'c, R.~Siegmund-Schultze:
            \kii{An ergodic theorem for the quantum relative entropy};
            \kiii{Comm.~Math.~Phys.} \textbf{247}, 697--712 (2004).

\bibitem{BDKSSCSz1}
I.~Bjelakovic, J.-D.~Deuschel, T.~Kr\"uger, R.~Seiler, Ra.~Siegmund-Schultze, A.~Szko\l a:
\kii{A quantum version of Sanov's theorem};
\kiii{Comm.~Math.~Phys.} \textbf{260}, 659--571 (2005).

\bibitem{BDKSSCSz}
I.~Bjelakovic, J.-D.~Deuschel, T.~Kr\"uger, R.~Seiler, Ra.~Siegmund-Schultze, A.~Szko\l a:
\kii{Typical support and Sanov large deviations of correlated states};
\kiii{Comm.~Math.~Phys.} \textbf{279}, 559--584 (2008).

\bibitem{BP}
F.G.S.L.~Brandao, M.B.~Plenio:
\kii{A Generalization of Quantum Stein's lemma};
\kiii{arxiv:0904.0281}.

\bibitem{FvdG}
C.A. Fuchs, J. van de Graaf:
\kii{Cryptographic distinguishability measures for quantum mechanical states};
\kiii{IEEE Trans. Inform. Theory} {\bf 45}, 1216--1227 (1999).

\bibitem{Han}
F. Hansen:
\kii{An operator inequality};
\kiii{Math. Ann.} {\bf 246}, 249--250 (1980).

\bibitem{Ha0}
M. Hayashi:
\kii{Optimal sequence of quantum measurements in the sense of
Stein's lemma in quantum hypothesis testing};
\kiii{J. Phys. A: Math. Gen.} {\bf 35}, 10759--10773 (2002).

\bibitem{Ha}
M. Hayashi:
\kii{Quantum Information: An Introduction};
Springer, Berlin-Heidelberg-New York, 2006.

\bibitem{Hayashi}
M.~Hayashi:
\kii{Error exponent in asymmetric quantum hypothesis testing and its application to
classical-quantum channel coding};
\kiii{Phys.~Rev.~A} \textbf{76}, 062301 (2007).

\bibitem{H-ent}
M. Hayashi:
\kii{Group theoretical study of LOCC-detection of maximally entangled state using
hypothesis testing};
\kiii{arXiv:0810.3380; to appear in New Journal of Physics}.

\bibitem{HMO1}
F.~Hiai, M.~Mosonyi, T.~Ogawa:
\kii{Large deviations and Chernoff bound for certain correlated states on the spin chain};
\kiii{J.~Math.~Phys.} \textbf{48}, 123301 (2007).

\bibitem{HMO}
F. Hiai, M. Mosonyi, T. Ogawa:
\kii{Error exponents in hypothesis testing for correlated states on a spin chain};
\kiii{J. Math. Phys.} {\bf 49}, 032112 (2008).

\bibitem{HP}
F. Hiai, D. Petz:
\kii{The proper formula for relative entropy and its asymptotics in quantum probability};
\kiii{Comm. Math. Phys.} {\bf 143}, 99--114 (1991).

\bibitem{HP2}
F.~Hiai, D.~Petz:
\kii{Entropy densities for algebraic states};
\kiii{J.~Funct.~Anal.} \textbf{125}, 287--308 (1994).

\bibitem{Li}
E.H. Lieb:
\kii{Convex trace functions and the Wigner-Yanase-Dyson conjecture};
\kiii{Adv. Math.} {\bf 11}, 267--288 (1973).

\bibitem{MHOF}
M.~Mosonyi, F.~Hiai, T.~Ogawa, M.~Fannes:
\kii{Asymptotic distinguishability measures for shift-invariant quasi-free states of
fermionic lattice systems};
\kiii{J.~Math.~Phys.} \textbf{49}, 072104 (2008).

\bibitem{M}
M.~Mosonyi:
\kii{Hypothesis testing for Gaussian states on bosonic lattices};
\kiii{J.~Math.~Phys.} \textbf{50}, 032104, (2009).

\bibitem{Nagaoka}
H.~Nagaoka: 
\kii{The converse part of the theorem for quantum Hoeffding bound};
\kiii{preprint; quant-ph/0611289}.

\bibitem{NH}
H.~Nagaoka, M.~Hayashi:
\kii{An information-spectrum approach to classical and quantum hypothesis testing for
simple hypotheses};
\kiii{IEEE Trans. Inform. Theory} \textbf{53}, 534--549 (2007).

\bibitem{NC}
M. A. Nielsen, I. L. Chuang:
\kii{Quantum Computation and Quantum Information};
Cambridge University Press, Cambridge, 2000.

\bibitem{NSz}
M.~Nussbaum, A.~Szko\l a:
\kii{A lower bound of Chernoff type for symmetric quantum hypothesis testing};
\kiii{Ann.~Statist.} \textbf{37}, 1040--1057, (2009).

\bibitem{OH}
T. Ogawa, M. Hayashi:
\kii{On error exponents in quantum hypothesis testing};
\kiii{IEEE Trans. Inform. Theory} {\bf 50}, 1368--1372 (2004).

\bibitem{ON}
T. Ogawa, H. Nagaoka:
\kii{Strong converse and Stein's lemma in quantum hypothesis testing};
\kiii{IEEE Trans. Inform. Theory} {\bf 47}, 2428--2433 (2000).

\bibitem{Oh}
H. Ohno:
\kii{Dynamical entropy of generalized quantum Markov chains on gauge-invariant
$C^*$-algebras};
\kiii{Lett. Math. Phys.} {\bf 78}, 111--124 (2006).

\bibitem{Pe2}
D. Petz:
\kii{Quantum Information Theory and Quantum Statistics},
Springer, 2008.

\bibitem{Re}
A. R\'enyi:
\kii{Probability Theory}, North-Holland, Amsterdam-London, 1970.

\end{thebibliography}
\end{document}